\NewDocumentCommand \numpar { o o }{
  \par \subsubsection{\IfNoValueF{#2}{#2}}\IfNoValueF{#1}{\label{#1}}
}
\newtheorem{theorem}{Theorem}[section]
\newtheorem{lemma}[theorem]{Lemma}
\newtheorem{cor}[theorem]{Corollary}
\newtheorem{prop}[theorem]{Proposition}
\theoremstyle{definition}
\theoremstyle{remark}
\newcommand{\Wa}{\mathcal{W}}
\renewcommand{\cD}{\mathcal{D}}
\newcommand{\cE}{\mathcal{E}}
\newcommand*{\bid}[1]{\ensuremath{(\!( #1 )\!)}}
\newcommand{\fa}{\ensuremath{\mathfrak{a}}}
\newcommand{\fb}{\ensuremath{\mathfrak{b}}}
\newcommand{\fg}{\ensuremath{\mathfrak{g}}}
\newcommand{\fk}{\ensuremath{\mathfrak{k}}}
\newcommand{\fm}{M}
\newcommand{\fn}{\ensuremath{\mathfrak{n}}}
\newcommand{\fp}{\ensuremath{\mathfrak{p}}}
\newcommand{\fu}{\ensuremath{\mathfrak{u}}}
\DeclareMathOperator{\ad}{ad}
\DeclareMathOperator{\Ext}{Ext}
\DeclareMathOperator{\Hom}{Hom}
\DeclareMathOperator{\Spec}{Spec}
\DeclareMathOperator{\Aut}{Aut}
\DeclareMathOperator{\Vect}{Vect}
\DeclareMathOperator{\Sym}{Sym}
\DeclareMathOperator{\Proj}{Proj}
\DeclareMathOperator{\Der}{Der}
\DeclareMathOperator{\Rep}{Rep}
\DeclareMathOperator{\Ob}{Ob}
\DeclareMathOperator{\Symc}{\overline{Sym}}
\newcommand{\Z}{\ensuremath{\mathbb{Z}}}
\newcommand{\C}{\ensuremath{\mathbb{C}}}
\NewDocumentCommand \deq { o m }{
\begin{equation}
#2
\IfNoValueF{#1}{\label{#1}}
\end{equation}
}
\newcommand{\beq}{\begin{equation}}
\newcommand{\beqn}{\begin{equation*}}
\newcommand{\eeq}{\end{equation}}
\newcommand{\eeqn}{\end{equation*}}
\newcommand{\beqa}{\begin{eqnarray}}
\newcommand{\beqan}{\begin{eqnarray*}}
\newcommand{\eeqa}{\end{eqnarray}}
\newcommand{\eeqan}{\end{eqnarray*}}
\newcommand{\bdm}{\begin{displaymath}}
\newcommand{\edm}{\end{displaymath}}
\newcommand{\ot}{\otimes}
\newcommand{\N}{\ensuremath{\mathcal{N}}}
\def\fb{\ensuremath{\mathfrak{b}}}
\NewDocumentCommand \ft { o }{%
\ensuremath{\mathfrak{t}%
\IfNoValueF{#1}{^{({#1})}}%
}}
\def\O{\ensuremath{\mathcal{O}}}
\def\so{\mathfrak{so}}
\def\su{\mathfrak{su}}
\def\gl{\mathfrak{gl}}
\def\osp{\mathfrak{osp}}
\newcommand{\bu}{{\bullet}}
\newcommand{\dCE}{d_{CE}}
\def\cat#1{\ensuremath{\text{\textsf{#1}}}}
\DeclareMathOperator{\GrVect}{\cat{GrVect}}
\DeclareMathOperator{\Vectcat}{\cat{Vect}}
\DeclareMathOperator{\Ch}{\cat{Ch}}
\def\ie{{\it i.e.}}
\def\eg{{\it e.g.}}
\def\apriori{{\it a priori}}
\def\sm{s_0}
\def\sF{s_1}
\def\nn{\nonumber}
\def\*{\partial}
\def\EWeight#1#2#3#4{\bigl({}^{\mathstrut}_{#1\mathstrut}{}_{#2\mathstrut}^{#4\mathstrut}{}_{#3\mathstrut}^{\mathstrut}\bigr)}
\renewcommand*\circled[1]{\tikz[baseline=(char.base)]{
            \node[shape=circle,draw,inner sep=1pt] (char) {\sf#1};}}
\renewcommand*\env@matrix[1][\arraystretch]{%
  \edef\arraystretch{#1}%
  \hskip -\arraycolsep
  \let\@ifnextchar\new@ifnextchar
  \array{*\c@MaxMatrixCols c}}
\title{Canonical supermultiplets and their Koszul duals}
\begin{document}

\frenchspacing

\null\vskip-24pt

\vskip48pt

\maketitle

\vskip24pt

\begin{center}
  
  Martin Cederwall\textsuperscript{1,2}, Simon Jonsson\textsuperscript{3}, Jakob Palmkvist\textsuperscript{4} and Ingmar Saberi\textsuperscript{5} 
  \par \bigskip

\vskip12pt

  \textsuperscript{1}\footnotesize{{ Dept. of Physics, Chalmers Univ. of Technology, SE-412 96, Gothenburg, Sweden}} \par
  \textsuperscript{2} {NORDITA, Hannes Alfv\'ens v\"ag 12, SE-106 91 Stockholm, Sweden} \par 
  \textsuperscript{3} {Dept. of Physics, Astronomy, and Mathematics, 
Univ. of Hertfordshire,\\ Hatfield, Hertfordshire. AL10 9AB United Kingdom} \par
\textsuperscript{4} {School of Science and Technology, \"Orebro Univ., SE-701 82 \"Orebro, Sweden} \par 
\textsuperscript{5} {Ludwig-Maximilians-Universit\"at M\"unchen, Theresienstra\ss e 37, 80333 M\"unchen, Germany} \par \bigskip
\end{center}

\thispagestyle{empty}

\vskip32pt

\begin{center}
    \parbox{.85\hsize}{\small\underbar{Abstract:} The pure spinor superfield formalism reveals that, in any dimension and with any amount of supersymmetry, one particular supermultiplet is distinguished from all others. This ``canonical supermultiplet'' is equipped with an additional structure that is not apparent in any component-field formalism: a (homotopy) commutative algebra structure on the space of fields. The structure is physically relevant in several ways; it is responsible for the interactions in ten-dimensional super Yang--Mills theory, as well as crucial to any first-quantised interpretation. We study the $L_\infty$ algebra structure that is Koszul dual to this commutative algebra, both in general and in numerous examples, and prove that it is equivalent to the subalgebra of the Koszul dual to functions on the space of generalised pure spinors in internal degree greater than or equal to three. In many examples, the latter is the positive part of a Borcherds--Kac--Moody superalgebra. Using this result, we can interpret the canonical multiplet as the homotopy fiber of the map from generalised pure spinor space to its derived replacement. This generalises and extends work of Movshev--Schwarz and G\'alvez--Gorbounov--Shaikh--Tonks in the same spirit. We also comment on some issues with physical interpretations of the canonical multiplet, which are illustrated by an example related to the complex Cayley plane, and on possible extensions of our construction, which appear relevant in an example with symmetry type $G_2 \times A_1$.}
\end{center}

\vfill
\rule{\textwidth}{0.4pt}
{\tiny email: martin.cederwall@chalmers.se, d.jonsson@herts.ac.uk, jakob.palmkvist@oru.se, i.saberi@physik.uni-muenchen.de}

\newpage
\tableofcontents

\thispagestyle{empty}
\newpage

\pagenumbering{arabic}
\setcounter{page}{3}

\section{Introduction}
\vskip8pt
The study of supersymmetric field theories has been a catalyst for many new and interesting developments in mathematical physics over the last fifty years or so. Many of these developments can now be understood as pointing at broader themes that are not confined to the context of supersymmetric field theory. For example, the desire to understand supersymmetric extensions of the Poincar\'e algebra geometrically, as structure-preserving transformations of an appropriate ``super-spacetime'', was a major impetus behind much work in supergeometry. But the role of supergeometry and other supermathematics in physics is apparent in any theory with fermionic degrees of freedom, whether supersymmetric or not. 
Another issue which is in principle independent of supersymmetry, but which is unavoidable in the study of maximally supersymmetric theories and especially supergravity theories, is the presence of on-shell symmetries: a (super)symmetry acts on the critical locus of the action functional, but there is no obvious way of extending the transformations defining this symmetry to the space of all (off-shell) field configurations, and thus no obvious way of gauging the symmetry with standard techniques. Precisely this issue led to the development of the Batalin--Vilkovisky formalism, which deals with on-shell symmetries by replacing the space of fields by (a specific model of) the derived critical locus and giving a homotopy action of the on-shell symmetry on this space.
The BV formalism, in turn, led to the introduction of derived geometry and other homotopical methods into mathematical physics; these techniques have become a fruitful and active area of modern research quite independently of supersymmetry.

Pure spinor superfields were originally introduced in order to cope with both of the aforementioned problems simultaneously, in the context of maximally supersymmetric theories~\cite{Cederwallspinorial,Berkovits2000Super-PoincareSuperstring,Berkovits2001CovariantSpinors}. A major goal was to make supersymmetry ``manifest''; the supersymmetry transformations of the fields of the theory should be geometric in nature, and thus the theory should be formulated in terms of superfields. Correspondingly, the action of supersymmetry transformations should be strict. It was found that the pure spinor superfield formulation of ten-dimensional super Yang--Mills theory naturally gives rise to a superspace model of the derived critical locus, thus automatically reproducing the Batalin--Vilkovisky approach to this theory. In modern language, the homotopy action of supersymmetry on the standard BV theory arises via homotopy transfer.

While most of the literature on the formalism has to do with specific examples of maximally supersymmetric theories, it has been realised in recent years that the scope of the formalism is much broader. Pure spinors provide a systematic technique to construct off-shell superspace models of supermultiplets; in fact, \emph{any} supermultiplet can be obtained from a pure spinor superfield, and (in an appropriate sense) from a uniquely determined one~\cite{equivalence}. Here, too, ideas from derived geometry play an essential role: the construction provides an equivalence of dg-categories between an appropriately defined category of supermultiplets and the category of equivariant sheaves on (a derived replacement of) the space of pure spinors. The equivalence is essentially a version~\cite{KapranovKoszul} of the general phenomenon of Koszul duality. (In general, some care is required in discussing such equivalences. On general grounds, one has a fully faithful symmetric monoidal embedding from quasi-coherent sheaves on the derived replacement to representations of the supertranslation algebra; see~\cite[Theorem~2.4.1]{DAGX}. 
In our context, the remedy is that a \emph{supermultiplet} is not the same as a generic module over the supertranslation algebra; such issues are not our focus in this paper, and we will ignore them in what follows.)

Koszul duality phenomena have various incarnations, and it has become increasingly clear that they are of fundamental relevance to mathematical physics. For example, the observables of a theory in the BV formalism are essentially the Koszul dual (Lie algebra cochains) of the local $L_\infty$ algebra structure that the BV action defines on the fields~\cite[just for example]{CG1}. 
Building on this, Costello and Li~\cite{CostelloLi} have pointed out that Koszul duality should be relevant to the understanding of holographic dualities, generalising the standard piece of the AdS/CFT dictionary which says that bulk fields should correspond to boundary observables. For a recent survey of some other appearances of Koszul duality in modern mathematical physics, see~\cite{BrianNatalie}.
The mathematical literature on Koszul duality is enormous, and we cannot give complete references here; for some selected foundational work on Koszul duality, the reader might look at~\cite{Priddy,Floystad,BGS,Positselski1,Positselski2,operadginzburg}, just for example.

The instance of Koszul duality that will be relevant to us in this work is between $L_\infty$ algebras and commutative differential graded algebras. However, it is important to emphasise that we will \emph{not} make reference to the standard story in the BV formalism described above. Our application of Koszul duality will be, for those well-versed in the BV formalism or the picture of field theories as formal moduli problems, the ``wrong way around'': we will study a hidden commutative algebra structure on the \emph{fields} of particular free supermultiplets, and relate this to an $L_\infty$ structure on the \emph{observables}.

The commutative structure that is relevant for us is closely related to the commutative structure on the ring $S$ of functions on the space $Y$ of generalised pure spinors. (It is worth noting here that the term ``pure spinor'' is, in this context, in some sense a historical accident: our notion encompasses the space of Maurer--Cartan elements in any graded Lie algebra with support in degrees one and two, and therefore any graded ring $S$ defined as a quotient of formal power series by homogeneous quadratic equations.) The pure spinor formalism constructs, in particular, a supermultiplet from any appropriately equivariant sheaf on~$Y$. As such, there is always a uniquely determined ``canonical'' multiplet associated to the structure sheaf (the ring $S$ itself). 
In examples of physical relevance, the canonical multiplet is typically an important or well-known one: among the list of canonical multiplets, one sees, for example, the eleven-dimensional and type IIB supergravity multiplets; the Yang--Mills multiplets for minimal supersymmetry in dimensions three, four, six, and ten; and the $\N=(2,0)$ abelian tensor multiplet. In each case,
the commutative structure on $S$ gives rise to a commutative structure on the canonical multiplet.

This commutative structure has been understood and appreciated in examples. The pure spinor formulation of interactions in ten-dimensional super Yang--Mills theory, for example, uses a Chern--Simons-type action; just as in the BV approach to standard Chern--Simons theory, the multiplet comes equipped with a commutative structure, and tensoring this with the gauge Lie algebra gives rise to the $L_\infty$ structure describing the interacting BV theory. It is also relevant to first-quantised pure spinor models, in particular the pure spinor superparticle~\cite{Berkovits2001CovariantSpinors}. Here, the \emph{fields} of the theory in the target space arise as \emph{observables} of the worldline theory. We expect the commutative structure of the canonical multiplet to be relevant to first-quantised descriptions in general, although there are many subtleties and we do not pursue this here.

In order to understand the Koszul dual of this cdga, we need to pass to yet a larger model of the multiplet in question. Although the standard pure spinor description is freely resolved over superspace, it is not freely generated as a commutative algebra, because $S$ itself is not. We thus need to replace $S$ by its \emph{Tate resolution}~\cite{Tate}, which resolves the ideal defining $S$ by a cdga that is freely generated over formal power series. In the language of rational homotopy theory, we construct a Sullivan model for $S$; Sullivan algebras are perhaps better known in the supergravity literature as ``free differential algebras'', and have appeared in various places in prior work, notably in the work of the Italian school~\cite{CdAF}. The Tate resolution of $S$ can then be identified with the cochains of an $L_\infty$ algebra $\ft^{(\infty)}$, which arises via a sequence of iterated central extensions of the supertranslation algebra $\ft$. $\ft^{(\infty)}$ thus maps to~$\ft$; the multiplet itself can be viewed as the homotopy fiber of the corresponding map of formal moduli problems, and we prove that it is modeled by the subalgebra of $\ft^{(\infty)}$ supported in degrees greater than or equal to three.

In many of the examples we consider, the variety $Y$ is a minimal orbit of the action of the automorphisms of $\ft$ on $\ft^{1}$, and the algebra $\ft^{(\infty)}$ itself can be understood as the positive part of a Borcherds--Kac--Moody (BKM) superalgebra.
The connection between such algebras and functions on minimal orbits via Koszul duality
was studied previously in~\cite{Cederwall2015SuperalgebrasFunctions}.
As such, our results can be also understood as a generalisation of that work. The  relevance of BKM algebras to supergravity theories, $U$-dualities, and exceptional generalised geometry has been understood for some time (see, just for example, \cite{dualdualdual,Borch1,Borch2}, as well as~\cite{Cederwall:2018aab}); it would be interesting to pursue connections to these other appearances of similar algebraic structures, though we do not pursue this in any deep way here. We hope to return to the question in future work.

We are not the first to study the Koszul duals of the algebras that appear in the pure spinor formalism, though our setup is more general than has appeared in the literature so far.
The Tate resolution of the pure spinor constraint was considered in~\cite{Chesterman:2002ey}---again, motivated by issues related to first-quantised models---and continued in~\cite{Berkovits2005TheSpinors,Aisaka2008PureSpectrum}, in particular from the point of view of partition functions/characters.
In a series of fundamental papers~\cite{Movshev2004OnTheories, movshev2006algebraic, Movshev:2009ba}, Movshev and Schwarz studied constructions and proved results analogous to ours in the example of ten-dimensional super Yang--Mills theory. (In ten-dimensional minimal supersymmetry, the Yang--Mills multiplet is the canonical multiplet.) In~\cite{MovshevBar}, Movshev went on to uncover many of the key structural features of the formalism, including extensions to Yang--Mills theories with less supersymmetry, for which the canonical multiplet is the \emph{off-shell} (BRST, rather than BV) Yang--Mills multiplet.
Building on this body of work, an inspiring paper of Gorbounov--Schechtman~\cite{GorbounovSchechtman} commented on the connections to Tate resolutions of commutative algebras and ideas from rational homotopy theory, and further work of 
G\'alvez--Gorbounov--Shaikh--Tonks~\cite{Galvez2016} generalised the ideas of Movshev--Schwarz to general Koszul algebras. 
In addition, an important paper of Gorodentsev, Khoroshkin, and Rudakov~\cite{GKR}
gave a clear exposition of constructions and results entirely parallel to ours in the case of symmetric spaces.
In this paper, with applications to supersymmetric physics in mind, we generalise further to any canonical multiplet; we do not require Koszulity, though we comment on the role it plays later on.
We also note that the explicit study of the Tate resolution for the constrained spinors used in $D=11$ supergravity was initiated in~\cite{Jonsson}, where some conjectures along the lines of our results were formulated.

For the purposes of this work, we consider only canonical multiplets; we do not address the situation where multiplets are obtained from non-scalar pure spinor superfields (sheaves other than the structure sheaf on $Y$, or non-free $S$-modules) in any detail. 
Many of the statements should carry over to such multiplets, both in terms of modules over the rings we consider and of their Koszul dual algebras. We save these considerations for the future.

The organisation of the paper is as follows:
\S\ref{sec:setup} introduces some general concepts, including the categories we are working in, issues related to gradings, and partition functions. We briefly review relevant aspects of the pure spinor superfield formalism and of Borcherds--Kac--Moody superalgebras.
In \S\ref{sec:KoszulDual} we introduce the Koszul dual to the pure spinors. A main tool is the Tate resolution of the constraint on the ``pure spinor'' $\lambda$. The resolution is a freely generated commutative dg algebra, and hence defines an $L_\infty$ algebra structure on $\ft^{(\infty)}$, the linear dual of the space of generators (shifted by one). This $L_\infty$ algebra is the Koszul dual to pure spinors; note that, although the algebra $S$ is quadratic, this notion of Koszul duality coincides with the traditional quadratic one only when the coordinate ring of pure spinor space is Koszul. 
By incorporating the Tate resolution in 
the pure spinor superfield formalism, we obtain a freely generated cdga resolving, not just $S$, but the canonical multiplet itself. We can then use homotopy transfer to show equivalence of this cdga both to the standard version of pure spinor superfield cohomology and to the Lie algebra cohomology of the subalgebra of $\ft^{(\infty)}$ in degrees three and above.
We also discuss in some detail how $\ft^{(\infty)}$ can be constructed from the supertranslation algebra and the observables of the canonical multiplet by the introduction of certain cocycles, and give an interpretation in terms of forms on superspace, a more standard notion in the physics literature.
In \S\ref{sec:Properties}
 we list and discuss interesting structures, both on the mathematical, ring-theoretical, and on the physical sides.
\S\ref{sec:examples}
presents a large number of physical (and some less physical) examples, illustrating our approach.
Finally, in \S\ref{sec:Discussion}, our results are discussed, and some open questions and future directions are pointed out.

\vskip6pt
\section{Setup\label{sec:setup}}
\vskip8pt
\subsection{Conventions}
\hfill
\nopagebreak

We begin by briefly setting up our conventions and quickly reviewing some well-known material we will use in what follows. In particular, we fix conventions for gradings and for the monoidal structure on the category in which we work, as well as for partition functions (Hilbert and Poincar\'e series).

\numpar[sec:otherconventions][Notes on notation] 
We use the notation $\wedge$ for the graded antisymmetric, and $\vee$ for the graded symmetric part of any tensor product. 
We often use signs to denote parity; if $V$ is a vector space, an element in $-V$ is defined as an element in $V$ with odd parity (a fermion).
Note that $(-1)^n\vee^n V=\wedge^n(-V)$. 
Given a vector space, 
the tensor algebra on $V$ is denoted $T[V]$, and $\Sym^\bu(V)$ denotes the (graded) symmetric algebra.
The dual of a vector space $V$ is denoted $\overline{V}$; see below in~\S\ref{sec:bideg} for comments on duals.
An irreducible highest weight module of a semi-simple Lie algebra is denoted $R(\mu)$, where $\mu$ is the highest weight. If the module is labelled by a lowest weight, we use the notation $R(-\mu)$.
Representations and modules of finite-dimensional semi-simple Lie algebras are sometimes denoted by the Dynkin label of $\mu$, and sometimes by their dimension, in bold. 

We will usually denote Lie algebras by Fraktur letters, but do not adhere to this convention universally. Some exceptions are as follows:
the free Lie algebra on a vector space $V$ is denoted $F[V]$;
the Borcherds--Kac--Moody superalgebra labeled by a Kac--Moody superalgebra $\fg$ and a dominant integral weight $\mu$ of~$\fg$ is denoted $\mathscr{B}(\fg,\mu)$.
 Semidirect sums of Lie algebras are indicated with the symbol $\inplus$, with the algebra $\fa$ acting on its module $\fb$ in $\fa\inplus\fb$.
Categories, where they appear, are in sans-serif type, so that (for example) $\Ch$ denotes the category of cochain complexes. We will make repeated use of homotopical arguments; when we refer to two objects as equivalent, we mean equivalent up to homotopy in the category in which these objects live.

\numpar[sec:grvect] 
Throughout the paper, we work in an appropriate monoidal category of graded vector spaces over a field $k$ of characteristic zero. For the purposes at hand, one loses nothing by imagining that $k=\C$ throughout. An object of this category is a bigraded vector space
\deq{
V = \bigoplus_{p,q} V^{p,q}\;,
} with each graded component $V^{p,q}$ finite-dimensional, possibly equipped with a differential of bidegree $(+1,0)$. The monoidal structure uses the Koszul rule of signs, with the parity of a homogeneous element in bidegree $(p,q)$ being $p+q \bmod 2$.
This category was referred to as ``lifted dgs vector spaces'' in ref. \cite{perspectivesonpurespin}. One can alternatively think of it as being the category of cochain complexes of graded vector spaces, where the category $\GrVect$ of graded vector spaces is taken to be monoidal according to the Koszul sign rule. We will sometimes have cause to refer to the subcategories $\GrVect_\pm$ of $\GrVect$, consisting of objects with non-negative ($+$) or non-positive ($-$) grading; similarly, $\GrVect_>$ will refer to the subcategory of objects with grading bounded from below. $\GrVect$ is a full subcategory of $\Ch(\GrVect)$ 
by regarding an object as a cochain complex concentrated in degree zero. We will also need the subcategory consisting of cochain complexes in $\GrVect$ that are bounded from below (or above) in the internal grading (the grading on $\GrVect$) and bounded in cohomological degree in each internal degree. This category will be called $\Ch(\GrVect_>)_b$---but note that objects are \emph{not} necessarily bounded cochain complexes!

\numpar[sec:bideg] The grading on an object in $\GrVect$ will be called the ``internal grading'', and the grading on a cochain complex will be called the ``cohomological grading''. 
In physical terms, cohomological grading should be thought of as ghost number, and internal degree modulo two as intrinsic fermion parity (both with appropriate shifts). We will sometimes have cause to refer to the negatives of these gradings: the negative of the cohomological degree will be called ``homological degree'', and the negative of the internal degree will be called the ``weight degree''. Throughout, duals reverse grading: that is,
\deq{
(\overline{V})^{p,q} = \overline{V^{-p,-q}}\;.
}
As such, the dual of a cochain complex is also a cochain complex; our conventions are cohomological throughout, so that every differential has degree $+1$. Duals are defined to be the sum of the degreewise linear duals; note that this is \emph{not} the linear dual in  the infinite-dimensional setting, but that the dual of a dual is the original object.

The only important shift functor will be the shift functor on $\Ch(\GrVect)$, which preserves weight grading and shifts cohomological grading. For this shift functor, we use square brackets; our convention is that
\deq{
V[n]^{p,q} =  V^{p+n,q}\;.
}
Thus, for any object $W$ of $\GrVect$, $W[n]$ denotes the cochain complex consisting of $W$ placed in degree $-n$.

The totalisation of the cohomological degree and the internal degree will be called the ``consistent'' or ``totalised grading'', as the parity of any homogeneous element is determined by this degree. $V^{p,q}$ is in consistent degree $p+q$. In later sections, we sometimes have cause to  refer to the bidegree of an element using the ``Tate bidegree'', consisting of consistent and weight degree, rather than cohomological and internal degree: when we do this, we will use double parentheses, so that an element of bidegree $(p,q)$ has Tate bidegree $\bid{p+q,-q}$.

\numpar[Algebraic structures] No matter which algebraic structures are being considered, we always work internally in $\GrVect$. Thus a ``cochain complex'' or ``dg vector space'' is an object of $\Ch(\GrVect)$, a ``commutative algebra'' is a $\Z$-graded algebra which is graded commutative, and a ``commutative dg algebra'' is a bigraded algebra, commutative with respect to the totalised grading, and equipped with a differential of bidegree $(1,0)$.

An $L_\infty$ algebra (again internal to $\GrVect$) is an object $\fa$ of $\Ch(\GrVect)$, equipped with structure maps
\deq{
\mu_k: \wedge^k \fa \to \fa }
of bidegree $(2-k,0)$ satisfying all higher Jacobi identities. A ``Lie algebra'' is an $L_\infty$ algebra concentrated in cohomological degree zero: that is, a consistently $\Z$-graded Lie superalgebra.\footnote{In the sequel, we will have cause to consider strict minimal $L_\infty$ algebras---for which only the 2-ary bracket is nontrivial---which are nevertheless supported in nontrivial cohomological degrees. These are, of course, also (super) Lie algebras in the standard sense; we may abusively refer to them as such, but no confusion should arise.} 

Given an $L_\infty$ algebra $\fa$, its Lie algebra cochains $C^\bullet(\fa)$ are a cdga, freely generated by $\overline{\fa}[-1]$. We will write $\Sym^\bu(V)$ for the freely generated commutative algebra on an element $V$ of~$\Ch(\GrVect)$, 
 and $\Symc^\bu(V)$ for the completion with respect to the canonical maximal ideal $\Sym^{>0}(V)$.
The cochains are then the cdga
\deq{
C^\bu(\fa) = (\Symc^\bu(\overline{\fa}[-1]), d_\text{CE})\;,
}
where the Chevalley--Eilenberg differential $d_\text{CE}$ is the sum of the maps $\overline{\mu_k}$ over all $k>0$. If there are infinitely many nonvanishing $L_\infty$ operations, completion is essential; if not, we can choose to work without completing.
The Chevalley--Eilenberg differential has bidegree $(1,0)$, and it squares to zero precisely when the $\mu_k$ define an $L_\infty$ structure.
(If $\fa$ is the Lie algebra of a super Lie group $A$, then the generators $\overline{\fa}[-1]$ can be thought of as left-invariant one-forms; completing means that we treat fermionic one-forms as formal variables.)

\numpar[sec:equivariant][Working equivariantly]
In general, we will be interested not only in the category $\GrVect$, but also in an equivariant version $\GrVect^G$, where objects are graded vector spaces equipped with an action of a reductive algebraic group $G$. (The reader will lose nothing by imagining that $k=\C$ and that $G$ is a complex Lie group.) $\Ch(\GrVect^G)$ and other related objects are defined in the obvious way.

\numpar[sec:Z][Partition functions]
A useful tool for examining the structure of 
the objects we will study (in particular, cdgas and $L_\infty$ algebras) is provided by partition functions (characters). 
Let us quickly fix our conventions.

For $G$ a reductive algebraic group as above, we let $\Vectcat^G$ denote the category of $G$-representations, and $\Rep(G)$ denote the representation ring of $G$. An element of~$\Rep(G)$ is a formal integral linear combination of $G$-representations; sums are identified with direct sums, and the ring structure arises from the tensor product in~$\Vectcat^G$. (In other words, $\Rep(G)$ is the Grothendieck ring $K_0(\Vectcat^G)$.)

A \emph{partition function} can be meaningfully assigned to any object of $\Ch(\GrVect^G_>)_b$.  Recall that, for such objects,  the internal grading is bounded from below, and the homological grading is bounded in each internal degree. 
For such an object, the \emph{Euler characteristic} defines an assignment
\deq{
\chi: \Ob \left(\Ch(\GrVect^G_>)_b \right) \to K_0(\GrVect^G_>), \quad
\bigoplus_p V^{p,\bu} \mapsto \sum_p (-1)^p V^{p,\bu}\;.
}
We can then further observe that $K_0(\GrVect^G_>) = K_0(\Vectcat^G)(\!( t )\!) = \Rep(G)\bid{t}$, so that the Euler characteristic can be viewed as the formal Laurent series with coefficients in the representation ring constructed via this procedure. The Euler characteristic of quasi-isomorphic cochain complexes agree, and direct sums and tensor products are carried to addition and multiplication in $\Rep(G)\bid{t}$. 
In what follows, we will often use $\oplus,\ominus,\otimes$ for operations in this ring.

The \emph{partition function} is essentially the Euler characteristic, as thus defined; however, in our conventions for partition functions, we make a couple of adjustments. First off, we prefer for the sign to reflect the consistent grading; therefore, we define 
\deq{
Z_V(t) = \chi_V(-t), \qquad V \in \Ob\left( \Ch(\GrVect_>^G)_b \right).}
Secondly, we will have cause to study objects where the internal grading is bounded from \emph{above}. We maintain the convention that the partition function should be a formal Laurent series in the variable $t$, rather than $t^{-1}$; therefore, we define the partition function of an object of $\Ch(\GrVect_<^G)_b$
by $Z_V(t) = \chi_V(-t^{-1})$.

The partition function is thus an object that encodes the content in terms of $G$-modules at each weight degree; the weight degree is encoded as the power of a formal parameter $t$. It should be noted that the partition function no longer keeps track of the cohomological grading; it only carries information about the weight grading and the Koszul sign of the $G$-modules.

\numpar[sec:Spartition][The Hilbert and Poincar\'e series of a commutative ring $S$.]
According to the above, the partition function takes values in the tensor product of the representation ring of $G$ 
    and formal power series in $t$, and we write
\begin{align}
Z_S(t)=\bigoplus_{n=0}^\infty S_nt^n\;.
\end{align}
Replacing $S_n$ with $\dim(S_n)$ gives the Hilbert series.

Note that our definition of a partition function for $S$ is its Hilbert series; this is to be contrasted with the \emph{Poincar\'e series}, which is often encountered in the theory of commutative rings (see \eg~\cite{Avramov:1982,Froberg1999}), and which is based on a minimal (equivariant) resolution $L^\bu$ of $k$ in free $S$-modules:\footnote{This resolution is in a certain sense dual to the Tate resolution of $S$, see {\S}\ref{sec:dualTate}.}
\begin{equation}
L^\bu = \begin{tikzcd}[row sep = 16 pt, column sep = 16 pt]
\cdots\ar[r,"\phi_4"]
&S\otimes \overline{V}_{-3}\ar[r,"\phi_3"]
&S\otimes \overline{V}_{-2}\ar[r,"\phi_2"]
&S\otimes \overline{V}_{-1}\ar[r,"\phi_1"]&S
\end{tikzcd} \simeq k,
\end{equation}
where $\overline{V}_i$ are objects of $\GrVect^G$, $\dim V_i=b_i$, $V_0=k$. 
The partition function (the equivariant Poincar\'e series of $S$) is often defined in the literature as
\begin{align}
P_S(t) =\bigoplus_{i=0}^\infty \overline{V}_{-i} t^i\;.
\end{align}

The corresponding numerical Poincar\'e series is then 
$\sum_{i=0}^\infty b_it^i$. 

The complex $L^\bu$ computes the Koszul dual $\Ext_S^\bu(k,k)$ of $S$ in associative algebras. (In general, $\Ext_S^\bu(k,k)$ has an $A_\infty$ structure; when $S$ is commutative, this can be identified with the enveloping algebra, in the sense of Baranovsky~\cite{BaranovskyEnveloping}, of the $L_\infty$ Koszul dual we consider in this paper.) Since the complex resolves $k$ in free $S$-modules, we can compute
\deq{
\Ext^j_S(k,k) = \Hom_S(L^j,k) = V_{j}
.
}
This implies a relationship between $P_S(t)$ and $Z_S(t)$.
In order to see this relationship, we need to rearrange the terms in this partition function to make it adhere to our conventions above. 
In particular, we need to take the Euler characteristic; however, this does not correspond to setting $t=-1$ in the above definition, since it does not properly account for gradings.
$S$ is graded, so that $L^\bu$ is bigraded, and the resolution differential $\phi$ has (consistent) bidegree $\bid{1,0}$.
To be concrete, we can split each $\overline{V}_i$ as $\overline{ V}_i=\bigoplus_{w=i}^{N_i}\overline{V}_{i,w}$, where $\overline{V}_{i,w}$ carries weight degree $w$ and $N_i$ is a finite number.
$L^\bu$ is now an object of $\Ch(\GrVect_-^G)_b$, so that, following the above convention, we should define
\deq{
\tilde P_S(t):= Z_{\Ext_S^\bu(k,k)}(t) 
=
\bigoplus_{i=0}^\infty\bigoplus_{w=i}^{N_i} \overline{ V}_{i,w}(-1)^it^w\;.
\label{eq:Ptilde}}

Because $L^\bu$ is a resolution of $k$, we can then compute the total partition function of the complex as
$Z_S(t)\otimes\tilde P_S(t) = Z_{L^\bu}(t)={1}$.
Thus $Z_S(t)=(\tilde P_S(t))^{-1}$.
$S$ being a Koszul algebra is equivalent to all $\hbox{deg}(\phi_i)=1$ 
\cite{Froberg1999}. Then the sum over $w$ in \eqref{eq:Ptilde} contains a single term $w=i$, and
$Z_S(t)=(P_S(-t))^{-1}$.

\numpar[sec:Utpartition][Partition functions of enveloping algebras]
Let $\fa$ be a Lie algebra in positive internal degrees, containing the module $R_n$ (of ${\rm Der}(\fa)$, see {\S}\ref{supertranslations}) at internal degree $n$, where fermionic generators are counted with a minus sign. The partition function 
(always twisted by fermion number) of the universal enveloping algebra $U(\fa)$ is 
\begin{align}
Z_{U(\fa)}(t)=\bigotimes_{n=1}^\infty
(1-t^n)^{-R_n}\;. \label{eq:ZUa}
\end{align}
Here we use the shorthand $(1-t^n)^{-R_n}$ for the partition function of bosons at degree $n$ in the module $R_n$, 
\begin{align}
    (1-x)^{-V}=\bigoplus_{n=0}^\infty\vee^n V\,x^n\;,
\end{align}
which is practical since it obeys the usual multiplication rules for powers. In particular, its inverse is the partition function for a fermion in $V$, 

\begin{align}
    (1-x)^{V}=\bigoplus_{n=0}^{\dim V}(-1)^n\wedge^n V\,x^n\;.
\end{align}
The corresponding Hilbert series is obtained by replacing modules with their dimensions. 
If $\fa$ is $L_\infty$, but not strictly Lie, $U(\fa)$ as such is not well defined.
The right hand side of~\eqref{eq:ZUa}
is still meaningful, 
and has the interpretation as the partition function for the $A_\infty$ structure on $\Sym^\bullet(\fa)$ as defined by Baranovsky
\cite{BaranovskyEnveloping}, which is the proper generalisation of $U(\fa)$ to $L_\infty$ algebras. This will be taken as the meaning of $U(\fa)$ in the following.
It of course still agrees with 
the conjugate inverse of the partition function of the complex 
$C^\bullet(\fa)$, see {\S}\ref{sec:partitionKoszul}. 

\subsection{(Generalised) supertranslations} 
\hfill
\nopagebreak

In this section, we set up the general scenario in which  we would like to work. While our constructions are inspired by applications in supersymmetric field theory, and our examples will all be taken from this setting, we emphasise that the superalgebras we consider need not be  limited to just supertranslation algebras of physical interest. 

\numpar \label{supertranslations} Let $\ft$ be a Lie algebra, supported in internal degrees one and two. It follows immediately that $\ft$ is a central extension of the odd abelian Lie algebra~$\ft_1$ by~$\ft_2$:
\deq[SES1]{
  0 \to \ft_2 \to \ft \to \ft_1 \to 0\;.
}
We will think of $\ft$ as a physical supertranslation algebra: it will act by translation symmetries on flat superspace, and these symmetries preserve ghost number since $\ft$ is concentrated in cohomological degree zero.

Let $\Der(\ft)$ be the Lie algebra of bidegree-preserving endomorphisms of~$\ft$ that are derivations of the bracket. (This is the Lie algebra of the group $\Aut(\ft)$ of bidegree-preserving automorphisms of~$\ft$.) Since $\ft^{0,0}=0$, there are no bidegree-preserving inner derivations, and we can extend the Lie algebra by placing $\Der(\ft)$ in degree zero if we like. 
 
(In physical terms, the result would be the  super-Poincar\'e algebra, including Lorentz and $R$-symmetry as well as a generator for scale transformations, the grading generator.) In general, though, we will focus just on the algebra $\ft$, while insisting that all constructions are equivariant for the Lie algebra $\Der(\ft)$, or perhaps more generally for some chosen subalgebra thereof (though we will not make such choices in what follows). 

\numpar[sec:ps][Superspace.]
Given a Lie algebra $\ft$ of this form, we can construct a corresponding \emph{superspace} in the following fashion: We consider the supergroup $T = \exp(\ft)$. Since $\ft$ is two-step nilpotent, $T \cong \ft$ as (graded) vector spaces. As usual, $T$ acts on itself on the left and on the right. Both the left- and right-invariant vector fields are isomorphic to $\ft$ and define two commuting copies of~$\ft$ inside of~$\Vect(T)$. These are the operators typically called $Q_\alpha$ and $D_\alpha$ in the physics literature.

\numpar[sec:gradings][Pure spinor space and its coordinate ring] We consider now the Lie algebra cochains of~$\ft$.
Since $C^\bu(-)$ is a contravariant functor, the above short exact sequence~\eqref{SES1} determines a diagram 
\deq{
  k \to C^\bu(\ft_1) \to C^\bu(\ft) \to C^\bu(\ft_2) \to k
}
of cdga's,
in which the first and last terms have zero differential. The first  map in this sequence witnesses $C^\bu(\ft)$ as an algebra over the polynomial ring 
\deq{
  R = C^\bu(\ft_1) = \Sym^\bu(\overline{\ft_1}[-1])\;,
} 
graded so that its generators sit in bidegree $(1,-1)$. We will often denote these generators with the symbol $\lambda$, so that $R = k[\lambda]$. The second map shows that, as a graded commutative algebra over $R$, $C^\bu(\ft)$ is freely generated by $\overline{\ft_2}[-1]$, which we will represent by odd generators $v$ in bidegree $(1,-2)$. 

One sees immediately that $C^\bu(\ft)$, with respect to the consistent (totalised) grading, is the (cohomologically, thus negatively, graded) Koszul complex associated to the sequence of $\dim(\ft_2)$ quadratic elements of~$R$ that are determined by the structure constants. The zeroth cohomology, with respect to the consistent grading, is thus the ring 
\deq{
  S = R/\langle\lambda^2\rangle
}
of functions on the space of Maurer--Cartan elements in~$\ft$, where $\lambda^2$ denotes some expression quadratic in $\lambda$ corresponding to the generators of $\ft_2$. (Since the differential on $\ft$ is trivial, Maurer--Cartan elements are just elements $Q \in \ft_1$ satisfying $[Q,Q] = 0$.)
We  define this space, the ``Maurer--Cartan set'' or ``nilpotence variety'' or ``generalised pure spinor space'', to be\footnote{For non-experts: the spectrum $\Spec S$ of a ring $S$ is its set of prime ideals. One can think of a maximal ideal as the collection of those functions on $Y$ that vanish at a specific point $\lambda_0$ in pure spinor space. This ideal is generated by the monomials $\lambda-\lambda_0$. More generally, prime ideals can be thought of as those functions vanishing on particular algebraic subsets, defined by the equations that generate the ideal. The spectrum can be equipped with the Zariski topology, in which closed sets are algebraic subsets, as well as a sheaf of local functions for which $S$ is the global sections. We will freely make use of this correspondence between spaces and rings in what follows.}
\deq{
  Y = \Spec S .
}
Since $S$ is graded (defined by homogeneous quadratic equations in the polynomial ring $R$), $Y$ admits a $k^\times$ action; it is the affine cone over the projective scheme 
\deq{
  P(Y) = \Proj S\;.
}
The weight grading corresponding to the standard $k^\times$ action, with respect to which generators of~$R$ are placed in degree $1$, is the negative of the internal grading.
\subsection{The (generalised) pure spinor superfield formalism}
\hfill
\nopagebreak

\numpar[][Multiplets]
The pure spinor superfield formalism associates, to any $\Der(\ft)$-equivariant $C^\bu(\ft)$-module, a ``multiplet'' for $\ft$, which is a supersymmetry-equivariant sheaf on the body of the supermanifold $T$ (the normal bosonic spacetime). In fact, the formalism defines an equivalence of categories, so that every supermultiplet admits a unique description in terms of an equivariant $C^\bu(\ft)$-module (or, equivalently, in terms of an equivariant sheaf on the classifying space $B\ft = \Spec C^\bu(\ft)$). We briefly review the relevant aspects here; for details and further background, we refer to the literature~\cite{Cederwallspinorial,perspectivesonpurespin,Cederwall2013PureOverview,Cederwall:2022fwu}.
We note that, in the physics literature, the term ``supermultiplet'' is usually understood, depending on context, to mean an indecomposible representation module of the supersymmetry algebra, either on \emph{fields} (in the case of a local supermultiplet) or on \emph{particles} (which are unitary irreducible representations of the Poincar\'e group, finitely many of which combine into a unitary irreducible representation of the super-Poincar\'e group). In the presence of a free action, the Hilbert space of the free field theory of a field multiplet is the Fock space on a corresponding particle multiplet. Our definition of the term captures the notion of a \emph{field} multiplet, but only includes those gauge transformations or equations of motion that are necessary for supersymmetry to close in an appropriate off-shell formalism (either BRST or BV).

\numpar[sec:Koszul] 
The formalism works using a kernel which is a slight variant of the usual Koszul duality kernel.
By a ``kernel'' in this context, we mean a free resolution of the ground field $k$ in bimodules over the algebra and its Koszul dual; for Koszul duality in associative algebras, this is the object $L^\bu$ discussed in~\S\ref{sec:Spartition}.

For our purposes in this paper, only Koszul duality between the commutative and Lie operads is relevant. In this case, the kernel can be made more explicit using the Chevalley--Eilenberg cochains.
The usual kernel for commutative/Lie Koszul duality is $K_0(\ft) = U(\ft)  \otimes_k  C^\bu(\ft)$, 
equipped with the scalar, square-zero, acyclic differential that arises from the identity element in $\ft \otimes \overline{\ft}$ acting using both module structures simultaneously, and considered as a differential graded $(C^\bu(\ft),U(\ft))$-bimodule. 
The equivalence of categories then associates to a $U(\ft)$-module $N$ the $C^\bu(\ft)$-module $N\otimes_{U(\ft)}K_0(\ft) $, 
and similarly to a $C^\bu(\ft)$-module $M$ the $U(\ft)$-module $K_0(\ft)\otimes_{C^\bu(\ft)} M$.
The algebras $U(\ft)$ and $C^\bu(\ft)$ are then said to be \emph{Koszul dual}; in this special case, where no higher brackets are present, (the enveloping algebra of) a dg Lie algebra is dual to a dg commutative algebra. In general, the Koszul dual of a freely generated cdga is an $A_\infty$ algebra that can be understood as the enveloping algebra (in the sense of Baranovsky \cite{BaranovskyEnveloping}) of an $L_\infty$ algebra, of which the original cdga is the Chevalley--Eilenberg cochains. We are thus free to speak about a duality between (homotopy) commutative and Lie (or $L_\infty$) algebras, and we will do this in what follows.
We comment in more detail on this in~\S\ref{sec:commlie} and~\S\ref{sec:QK} below.

\numpar
To obtain multiplets via the (generalised) pure spinor formalism, we replace $U(\ft)$ in the kernel by the sheaf $C^\infty(T)$ of smooth functions on superspace, which (like $U(\ft)$ itself) is a $U(\ft)$-bimodule via the left and right actions discussed in \S\ref{sec:ps}.
Concretely,
the kernel takes the form
\deq{
    K(\ft) = C^\infty(T) \otimes_k C^\bu(\ft) = C^\infty(T)[\lambda,v]\;,
}
with a differential given by the formula 
\deq{
    \cD = v \pdv{ }{x} + \lambda \left( \pdv{}{\theta} - \theta \pdv{ }{x} \right) + \lambda^2 \pdv{ }{v}\;.\label{cDEq}
}
In fact, this kernel is just the de Rham complex of the superspace $T$, but expressed in an unusual (right-invariant) frame. Note that 
\deq{
    \Omega^\bu(T) = C^\infty(T)[dx,d\theta]\;, \quad d_\text{dR} = dx \pdv{ }{x} + d\theta \pdv{ }{\theta}\;.
}
The two are related via the map of commutative algebras that sends $d\theta \mapsto \lambda$ and $ dx \mapsto v - \lambda\theta$. (This is in fact a cochain-level isomorphism of cdga's.) There is a $(U(\ft),C^\bu(\ft))$-bimodule structure: $U(\ft)$ acts by the left-invariant vector fields
\deq{
    Q = \pdv{ }{\theta} + \theta \pdv{ }{x}\;,
}
and the $C^\bu(\ft)$-module structure arises via the identification 
\deq{
    C^\bu(\ft)  \cong \Omega^\bu(T)^T \subset \Omega^\bu(T)\;.
}
Given an $\Der(\ft)$-equivariant $C^\bu(\ft)$-module $\Gamma$, the multiplet is constructed by tensoring over $C^\bu(\ft)$ with $\Gamma$;  we will denote the functor that associates a multiplet to $C^\bu(\ft)$-modules by $A^\bu(-)$, so that  
\deq{
    A^\bu(\Gamma) = K_0(\ft) \otimes_{C^\bu(\ft)} \Gamma\;.
    \label{eq: purespinorfunctor}
}
The inverse functor that witnesses this as an equivalence of dg-categories is given by taking the derived $\ft$-invariants of a multiplet---in other words, by considering $C^\bu(\ft,\cE)$ for any multiplet $\cE$. 

\numpar[sec:O][Canonical multiplets]
Recall that the ring $S = R/\langle\lambda^2\rangle$ of functions on the space of square-zero elements in~$\ft$ arises as $H^0(\ft)$. Since $C^\bu(\ft)$ is non-positively graded, there is a quotient map 
\deq{
    C^\bu(\ft) \rightarrow H^0(\ft) = S
}
of commutative differential graded algebras, witnessing $S$ as a $C^\bu(\ft)$-module. (Via this map, any $S$-module is a $C^\bu(\ft)$-module; the resulting multiplets are the ones that can be described by the standard, rather than ``derived'', pure spinor superfield formalism.) We can thus consider the multiplet associated to $S$ for any supertranslation algebra $\ft$. This multiplet is uniquely determined just by the datum of the supertranslation algebra, and we will correspondingly call it the \emph{canonical multiplet}.
In examples of physical relevance, the canonical multiplet tends to play a central role in supersymmetric physics: the list includes the eleven-dimensional and type IIB supergravity multiplets; Yang--Mills multiplets for minimal supersymmetry in dimensions three, four, six, and ten; and the $\N=(2,0)$ tensor multiplet in six dimensions.

Explicitly, the multiplet is given by
\deq{
    A^\bu(S) = \left( C^\infty(T) \otimes_k S, \cD = \lambda^\alpha D_\alpha = \lambda \left( \pdv{ }{\theta} + \theta \pdv{ }{x} \right) \right)\;.
    \label{pureD}
   }
   One sees the usual differential of the pure spinor superfield formalism, for a scalar pure spinor superfield, appearing.
(Where possible without ambiguity, we will leave indices and contractions implicit.)
The cohomology of the canonical multiplet is (a sheaf of) $A_\infty$ algebras on the spacetime; in what follows, we will denote (the global sections of) this cohomology by $\fm$, and the minimal model of the local observables of the canonical multiplet by $\overline{M}$.

Of course $A^\bu(S)$ is a multiplet for~$\ft$, with a strict local $\ft$-module structure at the cochain level. 
However, there are additional important pieces of structure: $A^\bu(S)$ has a cdga structure (in particular, a graded commutative multiplication), inherited from the product structure on~$S$.
We furthermore know that $A^\bu(S)$ is an $S$-algebra: there is an obvious map of cdga's 
\begin{equation}
    S \to A^\bu(S)\;,
\end{equation}
which embeds $S$ along the constant functions on~$T$. Lastly, any multiplet that is associated to an $S$-module (so a ``non-derived'' pure spinor superfield) naturally acquires an $A^\bu(S)$-module structure. 
 The commutative structure is an important and often overlooked piece of data, which (just for example) gives rise to the interactions of ten-dimensional super Yang--Mills theory, simply via tensoring with the gauge algebra. In this paper, we will dedicate ourselves to studying this structure (and its Koszul dual) in detail.

\subsection{Borcherds--Kac--Moody superalgebras}
\label{sec:BorcherdsSuperalgebras}
\hfill
\nopagebreak

As we will see, the Lie superalgebra $\ft=\ft_1\oplus\ft_2$  can in some interesting cases be considered as a subspace (subquotient) of a Borcherds--Kac--Moody (BKM) superalgebra.
Here we recall the general definition. We will assume $k=\mathbb{C}$ in this subsection.

\numpar Let $B$ be a real-valued symmetrisable $(r\times r)$-matrix with non-positive off-diagonal entries
and let $J$ be some subset of the index set $I$ labelling rows and columns.
We say that $B$ is a Cartan matrix of a BKM superalgebra if
each $i\in I$ satisfies one of the following three conditions:
\begin{itemize}
\item[$(i)$] $B_{ii}=2$, $i\notin J$ and $B_{ij} \in \mathbb{Z}$ for all $j\in I$,
\item[$(ii)$] $B_{ii}=1$, $i\in J$ and $B_{ij} \in \mathbb{Z}$ for all $j\in I$,
\item[$(iii)$] $B_{ii}\leq0$.
\end{itemize}
Let $\mathcal{S}$ be a set of $3r$ generators $e_i,f_i,h_i$ for any $i\in I$,
where $e_i,f_i$ are odd for $i\in J$, and all other generators are even.
The BKM superalgebra $\mathscr{B}(B)$ associated to $B$ (together with the set $J$) is the Lie superalgebra generated by $\mathcal S$
modulo the Chevalley relations
\begin{align}
[h_i,e_j]&=B_{ij}e_j\;, & [e_i,f_j]&=\delta_{ij}h_j\;,\nn\\
[h_i,f_j]&=-B_{ij}f_j\;, & [h_i,h_j]&=0\,
\end{align}
and the Serre relations
\begin{align}
(\ad{e_i})^{1-2{B_{ij}}/{B_{ii}}}(e_j)=(\ad{f_i})^{1-2{B_{ij}}/{B_{ii}}}
(f_j)&=0  \qquad \text{if $B_{ii}>0$ and $i\neq j$}\,,\nn\\
[e_i,e_j]=[f_i,f_j]&=0 \qquad \text{if $B_{ij}=0$}\;.
\end{align}
Let $B'$ be the matrix obtained by multiplying row $i$ by some $\varepsilon_i>0$, so that $B'{}_{ij}=\varepsilon_i B_{ij}$,
and let $\mathscr{B}(B')$ be the Lie superalgebra constructed in the same way, but with $B$ replaced with $B'$ in the construction.
It then follows from the relations that the algebras $\mathscr{B}(B)$ and $\mathscr{B}(B')$ are isomorphic,
with an isomorphism given by
$h_i \leftrightarrow \varepsilon_i h_i$
and $e_i  \leftrightarrow \varepsilon_i e_i$,
the other generators unchanged.
Therefore, 
we may equivalently work only with the symmetrised Cartan matrix (obtained as $B'$ by choosing $\varepsilon_i$
such that $B'{}_{ij}=B'{}_{ji}$) as no information about the algebra get lost in the symmetrisation.

Here we will only consider the special cases of BKM algebras where each $i\in I$ in fact satisfies
one of the following two conditions:
\begin{itemize}
\item[$(i)$] $B_{ii}=2$, $i\notin J$ and $B_{ij} \in \mathbb{Z}$ for all $j\in I$,
\item[$(iii)'$] $B_{ii}=0$, $i\in J$ and $B_{ij} \in \mathbb{Z}$ for all $j\in I$.
\end{itemize}
The whole matrix is thus integer-valued, with non-positive
integers as off-diagonal entries. 
When $B$ in addition is symmetric, we may visualise the data contained in the Cartan matrix by a Dynkin diagram 
consisting
of $r$ nodes and, for each pair of nodes $(i,j)$, a number of lines between them, which is equal to $-B_{ij}$.
In order to distinguish between the two cases for the diagonal entries,
the node $i$
is painted ``grey'' in case $(iii)'$, which means that it is drawn as $\otimes$, but left ``white'' in case $(i)$. 

If $J =\varnothing$, then the resulting BKM superalgebra
is a Kac--Moody algebra. 
If $J$ consists of only one element, then we choose a numbering of rows and columns, or of nodes in the Dynkin diagram,
such that $J=\{0\}$. We thus have $B_{00}=0$, and the entries $B_{ij}$ with $i,j>0$ form the Cartan matrix of a Kac--Moody algebra
$\fg$ (possibly finite-dimensional), whose Dynkin diagram is obtained by removing the grey node. 
The off-diagonal entries $B_{i0}$ 
are non-positive integers, which constitute the Dynkin labels of the negative of a dominant integral weight $\mu$ of $\fg$.
Assuming that the BKM superalgebra is of this type, it can thus be characterised by the pair $(\fg,\mu)$ equally well as by the Cartan matrix.
Given such a pair of a Kac--Moody algebra $\fg$ and a dominant integral weight $\mu$ of $\fg$, we denote the corresponding BKM superalgebra
by $\mathscr{B}(\fg,\mu)$, or sometimes only by $\mathscr{B}(\fg)$, when the choice of $\mu$ is understood or unimportant.
(As discussed above, we can set $B_{i0}=B_{0i}$ without loss of generality.)

\numpar Any BKM superalgebra $\mathscr{B}(\fg,\mu)$ has a consistent $\mathbb{Z}$-grading where the subalgebra $\fg$ has degree $0$ and the generators $e_0$ and 
$f_0$ have degree $1$ and $-1$, respectively. It has a subalgebra of the form $\ft_0 \oplus \ft_1 \oplus \ft_2 \oplus \cdots$, where $\ft_0=\fg$
and $\ft_1$ is a lowest weight module $R(-\mu)$
of $\fg$ generated by $e_0$. From $\ft_1$ all subspaces $\ft_k$ with $k\geq1$ can be constructed
from the free Lie algebra $F(\ft_1)$ generated by $\ft_1$ by factoring out the ideal generated by the Serre relation $[e_0,e_0]=0$.
Since the weight of $[e_0,e_0]$ is $-2\mu$, the subspace $\ft_2$ is a module $\vee^2 R(-\mu) \ominus R(-2\mu)$ of $\fg$. 
Considering the full BKM superalgebra $\mathscr{B}(\fg,\mu)$, there are also corresponding modules at negative degrees
with $R_{-p}=\overline{R_p}$.

Suppose that both $B$ and the Cartan matrix of $\fg$ are non-degenerate. Then there is a  
unique element $\tilde \Lambda_0$ in the weight space of $\mathscr{B}(\fg,\mu)$ which is orthogonal to the simple roots of $\fg$
and has coefficient $1$ corresponding to the grey node in the basis of simple roots.
For any $p\geq 1$, the subspace $\ft_p$ 
of $\mathscr{B}(\fg,\mu)$ is spanned by 
root vectors for which the corresponding roots belong to  
a subset $\Phi_p$ of
the root system $\Phi$ of $\mathscr{B}(\fg,\mu)$. Let $R_p$ be the corresponding $\fg$-module, 
let $|R_p|$ be the number of weights in it (counting a weight with multiplicity $m$ as $m$ distinct weights)
and set $t=e^{-\tilde\Lambda_0}$. Then,
\begin{align}
\prod_{\beta\in \Phi_p}(1-e^{-\beta})^{{\rm mult}\,\beta}&=\sum_{k=0}^{|R_p|}(-1)^k\chi({\wedge}^k \overline R_p)t^{kp}\;,\nn\\
\prod_{\beta\in \Phi_p}(1+e^{-\beta})^{-{\rm mult}\,\beta}&=\sum_{k=0}^{|R_p|}(-1)^k\chi({\vee}^k \overline R_p)t^{kp}\;,
\end{align}
where $\chi(R)$ denotes the character of a $\fg$-module $R$.
Following the notation introduced in {\S}\ref{sec:Utpartition}, we can write this as
\begin{align}
\prod_{\beta\in \Phi_p}(1\pm e^{-\beta})^{\mp{\rm mult}\,\beta}&=\chi\big((1\pm t^p)^{\mp \overline{R}_p}\big)\;.
\end{align}
As shown in \cite{Cederwall2015SuperalgebrasFunctions}, it then follows from the Weyl--Kac denominator formula for BKM algebras \cite{Wakimoto,Ray,Ray95,Miyamoto} that
\begin{align}
\bigotimes_{p=1}^\infty (1-t^p)^{(-1)^p \overline{R}_p}=\bigoplus_{q=0}^\infty R(q\mu)t^q\;.
\label{eq:borcherdspartition}
\end{align}
This provides a method of recursively computing the modules $R_p$, and also suggests that
they can be obtained by the Tate resolution described in {\S}\ref{sec:tateres}. 
This will be explained in {\S}\ref{sec:BorcherdsMinimalDual}.
For details, we refer to \cite{Cederwall2015SuperalgebrasFunctions}. 

\numpar[sec:bkmsym] Given a BKM superalgebra $\mathscr B(\fg,\mu)$ it is often interesting to study the BKM superalgebra 
obtained by replacing the gray node in the Dynkin diagram with a chain of $d$ white nodes, followed by a grey node.
We will denote this BKM superalgebra by $\mathscr B(\fg,\mu)^{(d)}$, where $(d)$ may also be replaced by $d$ plus signs
(and $\mu$ still denotes a weight of $\fg$, corresponding to the first white node added in the extension. 
Let $\fg^{(d)}$ be the Kac--Moody algebra obtained by adding only the $d$ white nodes ($d\geq1$).
Whenever there is an $m$ such that $\fg^{(m)}$ is an affine Kac--Moody algebra, a symmetry $\overline R_p = R_{(m-d)-p}$
appears in $\mathscr B(\fg,\mu)^{(d)}$ for $p=1,\ldots,m-d-1$ \cite{Kleinschmidt:2013em,Palmkvist:2012nc}. 
To some extent, $\mathscr B(\fg,\mu)^{(d)}$
may also be defined for negative $d$ in a natural way
\cite{Kleinschmidt:2013em}. 
The above statement may
then be phrased as $\overline R_p = R_{1-d-p}$ for $p=1,\ldots,-d$ in $\mathscr B(\fg,\theta)^{(d)}$ for $d<0$, finite-dimensional $\fg$ and
$\theta$ being the highest root of $\fg$ (so that $R_1$ in $\mathscr B(\fg,\theta)$ is the adjoint of $\fg$ and $\fg^{(1)}=\fg^+$
is the untwisted affine extension of $\fg$).

\vskip6pt
\section{Koszul duals to pure spinor algebras\label{sec:KoszulDual}}
\vskip8pt
\subsection{\texorpdfstring{The $L_\infty$ Koszul dual to $S$}{The Linfinity Koszul dual to S}}
\hfill
\nopagebreak

In this section, we always work with respect to the consistent (totalised) grading on~$C^\bu(\ft)$, thus regarding $R$ as concentrated in total degree zero. Unless otherwise specified, cohomological degree will mean cohomological degree in the Tate resolution, \ie, consistent degree. We will, however, also make use of the weight (internal) grading; recall from~\S\ref{sec:bideg} above that these together define the ``Tate'' bigrading.

\numpar[sec:commlie][Commutative/Lie Koszul duality and multiplicative resolutions]
In~\S\ref{sec:Koszul} above, we have recalled that a graded Lie algebra is Koszul dual to the commutative dg algebra consisting of its Chevalley--Eilenberg cochains. This duality extends to a duality between (the free resolutions of) the operads controlling algebra structures of this type, so that homotopy commutative algebras are dual to homotopy Lie ($L_\infty$) algebras. We have also seen how this works in practice: given an $L_\infty$ algebra, one simply examines its cochains; to go in reverse, one considers freely generated commutative dg algebras, and observes that the differential is dual to an $L_\infty$ structure on the dual of the space of generators, shifted by one. All higher Jacobi identities are contained in the requirement that the differential square to zero.

A generic commutative algebra is, of course, not freely generated, and one may ask how to understand its Koszul dual. The key observation is that any such algebra can be resolved by a freely generated commutative dg algebra, for which the Koszul dual is defined by the above procedure. For algebras $S$ that are quotients of a commutative ring $R$, one can construct a free resolution of~$S$ by a freely generated differential $R$-algebra by iteratively adjoining new generators to kill cohomology in each degree~\cite{Tate}; this is known as the \emph{Tate resolution}.

If $R$ is a polynomial ring, the Tate resolution will also be a free minimal cdga over~$k$, and will therefore be the Lie algebra cochains of an $L_\infty$ algebra with zero differential, defined on the dual of the set of generators. Since the differential is zero, this $L_\infty$ algebra is the minimal model of the corresponding $L_\infty$ structure. (Note, however, that the Tate resolution need \emph{not} be a minimal resolution in $R$-modules; the resolution differential can contain terms of order zero in $\lambda$.) 
We are interested in understanding this $L_\infty$ algebra. In general, the Tate resolution will not be finitely generated as an~$R$-algebra, so that its Koszul dual will not be finite-dimensional; it will, however, be of finite dimension in each graded piece. Because we use cohomological grading conventions (so that the differential has consistent degree $+1$), the Tate resolution, like any projective resolution, is concentrated in non-positive degrees.

\numpar We have seen that $H^0(\ft) = S = \O(Y)$, and that $C^\bu(\ft)$ is a freely generated commutative differential graded $R$-algebra in non-positive degrees. If (and only if) $Y$ is a complete intersection, $C^\bu(\ft)$ is a multiplicative free resolution of $S$ over~$R$. 
In general, though, $C^\bu(\ft)$ will not define a resolution, due to the presence of higher cohomology. 
In a certain sense, the intuition that theories with more supersymmetry are \say{easier to deal with} is related to the fact that $Y$ is necessarily a complete intersection if a sufficient number of supercharges are present. Theories such as ten-dimensional Yang--Mills theory and eleven-dimensional supergravity occupy a fertile middle ground, where $Y$ fails to be a complete intersection---and is therefore algebrogeometrically more interesting---but still has certain good algebraic properties. For example, in both examples mentioned, $S$ is a Gorenstein ring.
We comment more on algebraic properties of the relevant rings in~\S\ref{sec:Properties} below.

\numpar[sec:tateres][The Tate resolution in detail]
The Tate resolution is constructed iteratively, by killing cohomology degree by degree. To resolve the quotient $S = R/\langle \lambda^2  \rangle$ of a polynomial ring $R$ over $R$, we begin with $R$, and then construct a freely generated cdga by adding (anticommuting) generators of degree $-1$ that are mapped by the differential to the generators of the ideal. (Since the differential is a derivation, it is uniquely specified by giving its action on generators.) Note that, in the examples at hand here, the resulting cdga is just $C^\bu(\ft)$, as recalled above. The degree in the Tate resolution is to be identified with the \emph{consistent} degree from~\S\ref{sec:gradings}; we will discuss the bigrading further in what follows. In physical terms, what we are doing is to create the BRST differential for the constraint on
\nolinebreak $\lambda$.

It may then be that $H^{-1}(\ft)$ is non-zero.
When this occurs, we go on to choose a set of generators for the module $H^{-1}(\ft)$, and adjoin new generators to the cdga in degree $-2$ that are mapped by the differential to the generators of $H^{-1}(\ft)$. We thus obtain a new cdga which has no cohomology in degree $-1$. The Tate resolution is constructed by successively adding new generators in this way to kill all cohomology in negative degrees, producing a multiplicative free resolution of $S$ over~$R$ in the limit.

We will number the stages of the Tate resolution so that, at stage $k$, generators in (consistent) degree 
$-k$ are added, resulting in a freely generated cdga whose cohomology is correct in degrees $0 \geq d > -k$. Thus, the first stage is $C^\bu(\ft)$, the second stage adds generators to kill $H^{-1}(\ft)$, and so on. ($R$ itself is stage zero of the resolution.)
 
Both $R$ and $S$ are graded by weight, hence the Tate resolution is bigraded by the consistent degree and the weight degree, with parity determined by the consistent degree. The differential has Tate bidegree $\bid{1,0}$. Thus the resolution will be a weight-wise resolution of each weighted piece of $S$. The generators adjoined at stage one will have weight two (as they kill quadratic constraints). At higher stages the nontrivial cohomology classes will in general be distributed among different weight degrees. $H^{-1}$ will decompose with respect to the weight as $H^{-1}=\bigoplus_{q=3}^\infty H^{\bid{-1,q}}$, where each non-zero $H^{\bid{-1,q}}=H^{q-1,-q}$ is spanned by monomials with one generator in degree $\bid{1,2}$ and $q-2$ generators of $R$.  At each stage $k$ we adjoin generators of degree $-k$ with weight equal to the weight of the corresponding nontrivial cohomology class $H^{\bid{-k+1,q}}$. There is a lower bound to the weight degree at each stage; strictly greater than the minimum of the weights of the generators introduced at stage $k-1$. This is because at each stages the non-zero cohomology will always contain at least one generator from the last stage~\cite{Henneaux1994QuantizationSystems}. The minimal thing one could do would be to have one such generator and one generator of $R$. We denote the minimum of the weight of generators at stage $k$ by $\omega_k$ (set $\omega_{-1}=0$). 

\numpar[sec:xtn][Iterated central extensions.]
Each stage of the Tate resolution defines a new freely generated cdga, which is a relative Sullivan algebra over the cdga of the previous stage. We can apply Koszul duality to each stage independently to get an $L_\infty$ structure at each stage; each stage is then a central extension of the previous one.

Let us denote the space of generators introduced at stage $k$ by $\Wa_{k}$, which is a direct sum over the weights:
\begin{equation}
    \Wa_k=\underset{q>\omega_{k-1}}{\bigoplus}\Wa^{\bid{-k,q}}\;.
\end{equation}
At stage $k$, the complete set of generators is denoted $\Wa^{(k)}=\bigoplus_{n=0}^k\Wa_n$, with $\Wa^{(0)}=\overline{\ft_1}[-1]$. The complex is then the symmetric algebra $\Sym^\bu(\Wa^{(k)})$, which is the cochains of an $L_\infty$ algebra $\ft[k]=\overline{\Wa^{(k)}}[-1]$. Thus, $\ft^{(0)}$ is the abelian odd Lie algebra $\ft_1$, $\ft^{(1)} = \ft$, and $\ft^{(k)}$ is an $L_\infty$ algebra supported in consistent (totalised) degrees $1\leq d \leq k+1$. We also define $\ft^{k+1}\coloneqq\overline{\Wa_{k}}[-1]$.

At stage $k$, we thus have the sequence
\deq{
    k \rightarrow C^\bu(\ft^{(k-1)}) \rightarrow C^\bu(\ft^{(k)}) \rightarrow C^\bu(\ft^{k+1}) \rightarrow k
    }
  of cdgas,
which arises from the dual short exact sequence of $L_\infty$ algebras
        \deq{
        0 \rightarrow \ft^{k+1} \rightarrow \ft^{(k)} \rightarrow \ft^{(k-1)} \rightarrow 0\;.
    }
    This sequence witnesses $\ft^{(k)}$ as an $L_\infty$ central extension of $\ft^{(k-1)}$ by $\ft^{k+1}$. At the end of the day, we have an inverse system
    \begin{equation}
        \begin{tikzcd}
            & & \widetilde{\ft} \ar[dl] \ar[d] \ar[dr] & & \\
            \cdots \ar[r] & \ft^{(k+1)} \ar[r] & \ft^{(k)} \ar[r] &  \ft^{(k-1)} \ar[r] &  \cdots
        \end{tikzcd}
  \end{equation}
  of which $\widetilde{\ft}$ is the inverse limit $\varprojlim \ft^{(k)}$.

We recall that $\ft$ has an internal and a cohomological degree. One might  ask what the internal and cohomological degrees of these new generators are.
We recall the notation from above that $\ft_i$ for $i=1,2$, is in cohomological and internal degree $(0,i)$, respectively.
The internal degree is minus the weight degree, and the cohomological degree is the sum of the consistent degree and the weight degree. The $k$-brackets, which have Tate bidegree $\bid{2-k,0}$ will hence have bidegree $(2-k,0)$.
From this construction, it is clear that $\widetilde{\ft}$ maps canonically to any stage $\ft^{(k)}$, and in particular to~$\ft$. This quotient map determines a short exact sequence
\deq[eq:def-n]{
0 \to \fn \to \widetilde{\ft} \to \ft \to 0,
}
where $\fn$ is the subalgebra of $\widetilde{\ft}$ in internal degrees greater than or equal to three. An analogous sequence
\deq[eq:def-n(k)]{
0 \to \fn^{(k)} \to \widetilde{\ft} \to \ft^{(k)} \to 0
}
can be defined for any stage of the Tate resolution by taking $\fn^{(k)}$ to be the kernel of the relevant map.
In the dual picture, the cochains of any $\ft^{(k)}$ (and in particular, $C^\bu(\ft)$ itself) map to the Tate resolution. 
This observation will be important in what follows.

\begin{center}
\begin{table}[ht]
\begin{align*}
\xymatrix@!0@C=2.247cm@R=1.2cm{
 \ar@{-}[]+<0.8cm,-1.5em>;[ddddddd]+<0.8cm,-1em> 
 &&&&&&&&
\\
\cdots
&&&&&\cdots
&*+[F-:blue][blue]{ \text{stage $0$}}\ar@{-}@[blue][dl] 
\\5 &&&&S_4\ar@{<-}@[red][d] &\cdots&*+[F-:blue][blue]{ \text{stage $1$}}\ar@{-}@[blue][dl]
\\4 &&&
S_3\ar@{<-}@[red][d] &\vee^4R_1\ar@{-}@[blue][ur]\ar@{<-}@[red][d] &\cdots\ar@{-}@[blue][dl]&*+[F-:blue][blue]{\text{stage $2$}} \\
3 &          &           
S_2\ar@{<-}@[red][d]  
&\vee^3R_1\ar@{-}@[blue][ur]\ar@{<-}@[red][d] &\vee^2 R_1\otimes R_2\ar@{<-}@[red][d] &\cdots\ar@{-}@[blue][ur]&*+[F-:blue][blue]{\text{stage $3$}}\\
2&                 
 S_1 \ar@{<-}@[red][d]  
 & \vee^2R_1\ar@{-}@[blue][ur]\ar@{<-}@[red][d]  &R_1 \otimes R_2
\ar@{-}@[blue][ur]\ar@{<-}@[red][d] &{\begin{matrix}R_1\otimes R_3\,\oplus\\\wedge^2R_2\oplus R_{4(2)}\end{matrix}}\ar@{<-}@[red][d]  \ar@{-}@[blue][ur]&\cdots\ar@{-}@[blue][ur] &*+[F-:blue][blue]{\text{stage $4$}}\\
1& 
R_1 \ar@{-}@[blue][ur] & R_2\ar@{-}@[blue][ur] &  \ar@{-}@[blue][ur]R_3& R_{4(1)}\ar@{-}@[blue][ur]&\cdots\ar@{-}@[blue][ur]&\cdots  \\
\ar@{-}[]+<-0.25cm,1.5em>;[rrrrr]+<0.8cm,1.5em>
&1&2&3&4&\cdots&
&\\
}
\end{align*}
\caption{\it The Tate resolution depicted in the bigrading. Weight degree is on the horizontal axis and cohomological degree on the vertical. The arrows represent the differential. The modules of $R$ are decomposed into modules of $\Der(\ft)$ where $R_{i(j)}$ is the module appearing at
weight degree $i$ and cohomological degree $j$. However, since there is never a module $R_{i(j)}$ with $i=1,2,3$ and $j\neq1$, we omit the index
$(1)$ in these cases.\label{table:tatebigrading}}
\end{table}
\end{center}

\newpage
\subsection{Koszulity, minimal orbits, and special classes of rings}

\numpar[sec:QK][Quadratic algebras and quadratic Koszul duality.] Since $S$ is a quadratic algebra, it fits into the framework of Koszul duality theory as originally developed for quadratic algebras~\cite{Priddy}. Let $\Lambda = \overline{\ft_1}[-1]$ be the space of generators $\lambda$ of $R$, and $T(\Lambda)$ denote the tensor algebra on this vector space. We define $I=\dCE \;\overline{\ft_2}[-1]\subset\vee^2\Lambda\subset\Lambda^{\ot 2}$ 
 We observe that 
\deq{
  S = T(\Lambda)/\langle \wedge^2\Lambda\oplus I\rangle
}
is a quadratic presentation for this algebra.
Its \emph{quadratic Koszul dual} is defined in standard fashion to be 
\deq{
  S^\perp = T(\overline{\Lambda}[-1]) / \langle I^\perp[-2]\rangle\;,
}
where $I^\perp \subset (\overline{\Lambda})^{\otimes 2}$, is the space of linear functions on $\Lambda\ot \Lambda$ which vanish on $I\oplus \wedge^2 \Lambda$. As it must vanish on $\wedge^2\Lambda$, it is a subspace of $\vee^2 \overline{\Lambda}$ It consists of all irreducible representations of $\Der(\ft)$ in the symmetric square of $\Lambda$, \emph{other} than the copy of $\ft_2$ determined by the structure constants.

It should be noted that the general definition of quadratic algebras does not force the constraints to contain the full skew symmetric product $\wedge^2\Lambda$. However, as we are interested in the Koszul duality between commutative and Lie algebras, this is the the class we will discuss.

Whenever the constraints cover the full skew symmetric product the Koszul dual algebra is the universal envelope of the free Lie algebra on $\overline{\Lambda}[-1]=\ft_1$ modulo $I^\perp$. Indeed, in our case we have that 
\begin{equation}
    S^\perp= T[\ft_1] / \langle I^\perp[-2]\rangle\simeq U\Big(\frac{F[\ft_1]}{\langle I^\perp[-2]\rangle}\Big)\;.
\end{equation}
In the context of quadratic algebras and Koszul duality, there is a certain class called Koszul algebras. These exhibit different remarkable homological properties. One property, of interest to us, concerns resolutions of the quadratic algebra $S$. If and only if a symmetric quadratic algebra $S$ is Koszul, the Chevalley-Eilenberg cochains of the Lie algebra associated to the quadratic Koszul dual $S^\perp$ provides a multiplicative free resolution of $S$. Thus our notion of Koszul duality coincides with the original notion whenever the algebras in question are Koszul. Indeed, if the quadratic algebra $S$ is Koszul, then $S^\perp$ will be isomorphic to~$U(\widetilde{\ft})$.

\numpar[sec:BorcherdsMinimalDual][The Koszul dual of the coordinate ring of a minimal orbit.]
A special class with quadratic Koszul duality is provided by functions on minimal orbits. Let $\lambda\in S_1$ where $S_1=R(\mu)$ is some highest weight module of $\Der(\ft)$, and $S=\Sym^\bullet(S_1)/\langle R_2\rangle$, where  
$R_2=\vee^2S_1\ominus R(2\mu)$.
Then the $n$-th power of $\lambda\in S_1\subset S$ will be in the representation $S_n=R(n\mu)$, and $S$ is identified as functions on the minimal orbit of $S_1$ under $\Der(\ft)$.
$S$ is Koszul \cite{bezrukavnikov1995koszul}, and the dual Lie superalgebra is the positive part ${\mathscr B}_+$ (with regard to the grading with respect to the grey node) of a BKM superalgebra
constructed from the above data as in
\S\ref{sec:BorcherdsSuperalgebras}
\cite{Cederwall2015SuperalgebrasFunctions}.
This duality is suggested by the partition function equality
\eqref{eq:borcherdspartition}. According to \cite{Froberg1999}, 
the Koszul property of $S$ follows already from the partition function identity, given that $S^\perp=T[-R(-\mu)]/\langle R(-2\mu)\rangle$ 
is the universal enveloping algebra of 
${\mathscr B}_+$.

\numpar[sec:partitionKoszul][Koszul duality at the level of partition functions.]
Identifying $-\overline{R_n}$ as the module of the dual $1$-forms, and 
the coalgebra differential as the BRST operator for the quadratic constraint,
we get (see {\S}\ref{sec:Utpartition}) $\bigoplus_{n=1}^\infty(1-t^n)^{\overline{ R_n}}=Z_S(t)$, \ie,
\begin{align}
Z_S(t)\otimes \overline{Z_{U(\widetilde{\ft})}}(t)={\bf1}\;.
\label{eq:PartitionKoszul}
\end{align} 
(We could have chosen to express partition functions on both sides as series where the power of the formal parameter is the same degree, \eg\ internal degree, and then let the conjugation also include reversion of the degree. This would have lead to less attractive expressions for \eg\ $Z_S$, containing (non-standard) negative powers.)
Note that parity is included in the definition of the sign for $R_n$, unlike in \eqref{eq:borcherdspartition}.
The relation can be interpreted as a denominator formula for $\widetilde{\ft}$. 

\numpar[sec:dualTate][The dual of the Tate resolution]
The Tate resolution, as presently described, is dual to the resolution of $k$ in free modules of $S$, briefly discussed in {\S}\ref{sec:Spartition}.
While the Tate resolution works in the complex $C^\bullet(\tilde \ft )$ and expresses the equality $Z_{C^\bullet(\widetilde{\ft})}(t)=Z_S(t)$, the resolution in {\S}\ref{sec:Spartition} works on a complex $C'$
with $Z_{C'}(t)=(Z_S(t))^{-1}$. Thus, 
$Z_{C^\bullet(\widetilde{\ft})}(t)=(Z_{C'}(t))^{-1}$.
It is well known that one can equip $\Ext_S(k,k)$ with an $A_\infty$ algebra structure, which is the universal enveloping algebra  
of an $L_\infty$ algebra $\mathfrak l$ \cite{Andre1971}.
We observe (see {\S}\ref{sec:Utpartition}) that $\mathfrak l$ can be identified with the $L_\infty$ algebra
$\widetilde{\ft}$.  
The degree adjustments made in {\S}\ref{sec:Spartition} make $\mathfrak l$ and $\widetilde{\ft}$ agree in the bigrading. 

\numpar[sec:MinGorenstein][Minimal orbits and Gorenstein rings]
We recall, for future reference, that 
a Noetherian local ring $R$ with residue field $k$ and Krull dimension $n$ is \emph{Gorenstein} if
\[
\Ext^\bu_R(k,R) = k[-n].
\]
A Noetherian commutative ring is \emph{Gorenstein} when its localization at every maximal ideal is a Gorenstein local ring. 

A Noetherian local ring $R$ with residue field $k$ and Krull dimension $n$ is \emph{Cohen--Macaulay} if and only if 
\[
\Ext^i_R(k,R) = 0 \,,\quad \forall i < n.
\]
(In other words, the depth and the dimension are equal.) A Noetherian commutative ring is \emph{Cohen--Macaulay} when its localization at every maximal ideal is a Cohen--Macaulay local ring. It is clear that every Gorenstein ring is Cohen--Macaulay. Both of these properties are relevant in the context of the pure spinor superfield formalism: via results of~\cite{AG}, the Koszul homology of a local ring is equipped with a perfect pairing if and only if the ring is Gorenstein. The Cohen--Macaulay property is related to the existence of an antifield multiplet that is describable at the level of the non-derived pure spinor formalism. For details, we refer the reader to~\cite{perspectivesonpurespin}.

One family of interesting Gorenstein rings, including some important examples in the context of the pure spinor formalism, are (the coordinate rings of closures of) certain minimal orbits. 
%
By work of Panyushev, the coordinate ring of the closure of the minimal orbit in the highest-weight module~$R(\mu)$ is Gorenstein when $\mu$ is a fundamental weight~\cite[\S3.4, Corollary 2]{Panyushev}. Panyushev further analyzes conditions characterizing ``Gorenstein weights'', and we refer to that paper for more details. One can give an alternative proof of the result for minimal orbits in fundamental highest-weight modules using Serre duality and the Borel--Weil--Bott theorem for $\Proj S$.\footnote{We gratefully thank the anonymous referee for pointing this out.} See also \cite{GKR,HS} for important related work.


We thus have a subclass of rings $S$ which are Koszul, the Koszul duals being BKM algebras, with the additional property that $S$ is Gorenstein.
Gorenstein rings are particularly interesting in the context of pure spinor spinor superfield theory, since they provide a pairing which can be thought of as a Calabi--Yau-form on $Y$ \cite{perspectivesonpurespin,Cederwall:2011yp}. If in addition the codimension is odd, this pairing gives rise to a field-antifield pairing in the Batalin--Vilkovisky (BV) framework~\cite{perspectivesonpurespin}. The examples in {\S}\ref{sec:SYM}, {\S}\ref{sec:SL5} and {\S}\ref{sec:D11} have this property, the latter without being a minimal orbit.

\subsection{The Koszul dual of the canonical multiplet}
\hfill
\nopagebreak

\numpar[sec:Ainf][A resolution of $A^\bu(S)$ by a freely generated cdga.]
Like any pure spinor multiplet, $A^\bu(S)$ is freely resolved as a sheaf over superspace. As we pointed out above in~\S\ref{sec:O}, it also carries a natural commutative algebra structure. However, $A^\bu(S)$ is \emph{not} freely generated as a (graded) commutative algebra, just because $S$ itself is not. 

It is interesting and profitable to further resolve $A^\bu(S)$ by a freely generated cdga. In fact, we have already seen how to do this, at least implicitly. Since we know, by construction, that $C^\bu(\widetilde{\ft})$ resolves $S$, we can apply the pure spinor functor to $C^\bu(\widetilde{\ft})$, thereby considering the cdga
   \deq{ 
        \widetilde{A}^\bu = A^\bu(C^\bu(\widetilde{\ft})) =  \left( C^\infty(T) \otimes_k C^\bu(\widetilde{\ft}), \widetilde{d} = \cD + \dCE \right) .\label{dinftyeq}
    }
    By~\S\ref{sec:xtn}, there is a $C^\bu(\ft)$-module structure on~$C^\bu(\widetilde{\ft})$. We recall that the differential takes the form
    \deq{
        \widetilde{d} = v \pdv{ }{x} + \lambda \left( \pdv{ }{\theta} + \theta \pdv{ }{x}\right) + \lambda^2 \pdv{ }{v} + \cdots,
    }
    where the additional terms come from the brackets at higher levels in $\widetilde{\ft}$.
This construction realises a proposal in \cite{Chesterman:2002ey}. From the first-quantised perspective, our construction produces the complete BRST system for covariant quantisation of the Berkovits superparticle. We expect analogous considerations to be relevant to other pure spinor sigma models, and plan to make this concrete in future work.

\numpar[sec:multipletisCn][The multiplet is $H^\bullet(\fn)$.]

Recall that we have the cofiber diagram 
    \begin{equation}
        \begin{tikzcd}
       C^\bu(\ft) \ar[r] \ar[d] & C^\bu(\widetilde{\ft}) \cong S  \ar[d] \\
       k \ar[r] & C^\bu(\fn)
        \end{tikzcd}
        \label{eq:cofib1}
    \end{equation}
 of cdgas, 
    where $\fn$ is the subalgebra of $\widetilde{\ft}$ defined by~\eqref{eq:def-n}. 
    Recalling from above that the multiplet associated to an algebra $S$ carries an $S$-algebra structure, obtained by embedding $S$ along the constant functions on the superspace $T$, 
    we can obtain a similar diagram by applying the functor $A$ to the top row:
        \begin{equation}
        \begin{tikzcd}
       C^\bu(\ft) \ar[r] \ar[d] & C^\bu(\widetilde{\ft}) \cong S  \ar[d] \\ 
      \Omega^\bu(T) \cong k \ar[r] &  \widetilde{A}^\bu \cong A^\bu(S).
        \end{tikzcd}
        \label{eq:cofib2}
    \end{equation}
    Comparing the two diagrams suggests that one might find an equivalence between $A^\bu(S)$ and $C^\bu(\fn)$, witnessing both as pushouts of the same system.
    In the next sections, we will demonstrate explicitly  that this is in fact the case: 
    the canonical multiplet can be identified with the cochains of~$\fn$, and as such with the homotopy fiber of the map of formal moduli problems $\widetilde{\ft}\to \ft$. 
\begin{theorem}
There is a homotopy equivalence of cdgas between $A^\bu(S)$ and $C^\bu(\fn)$. In particular, the cohomology $\fm$ of the canonical multiplet is given by the Lie algebra cohomology of $\fn$.
\label{thm:main}
\end{theorem}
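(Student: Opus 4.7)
The plan is to exhibit both $A^\bu(S)$ and $C^\bu(\fn)$ as equivalent to the intermediate cdga $A^\bu_\infty$ of \eqref{dinftyeq}, by relating each to an appropriate derived tensor product over $C^\bu(\ft)$, as suggested by the comparison between the diagrams \eqref{eq:cofib1} and \eqref{eq:cofib2}. Concretely, I would proceed in three steps.

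\emph{Step 1: $A^\bu(S) \simeq A^\bu_\infty$.} The Tate resolution provides a quasi-isomorphism of cdgas (and of $C^\bu(\ft)$-modules) $C^\bu(\ft^{(\infty)}) \xrightarrow{\sim} S$. The pure spinor functor $A^\bu(-) = K(\ft)\otimes_{C^\bu(\ft)}(-)$ is given by tensoring with the $(U(\ft),C^\bu(\ft))$-bimodule $K(\ft) = \Omega^\bu(T)$, which is freely (hence flatly) generated over $C^\bu(\ft)$; it is moreover lax symmetric monoidal and so transports cdga quasi-isomorphisms to cdga quasi-isomorphisms. Hence $A^\bu_\infty = A^\bu(C^\bu(\ft^{(\infty)})) \xrightarrow{\sim} A^\bu(S)$ as cdgas.

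\emph{Step 2: Homotopy-invariance of the base $\Omega^\bu(T)$.} The iterated central-extension presentation of $\ft^{(\infty)}$ built in \S\ref{sec:xtn} shows that $C^\bu(\ft^{(\infty)})$ is semi-freely generated over $C^\bu(\ft)$ by $\overline{\fn}[-1]$; as a graded commutative algebra one has $C^\bu(\ft^{(\infty)}) \cong C^\bu(\ft)\otimes C^\bu(\fn)$, with the full Chevalley--Eilenberg differential encoding both the induced brackets on $\fn$ and the extension cocycles. In particular, $C^\bu(\ft^{(\infty)})$ is cofibrant over $C^\bu(\ft)$. Now the augmentation $\Omega^\bu(T)\to k$ (evaluation at the identity composed with projection to degree zero) is a quasi-isomorphism of cdgas by the super Poincar\'e lemma, since $T$ is affine. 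It is moreover a map of $C^\bu(\ft)$-modules, because $C^\bu(\ft)$ embeds into $\Omega^\bu(T)$ as positive-degree left-invariant forms, which are killed on both sides of the augmentation. Cofibrancy yields the quasi-isomorphism of cdgas
\[
A^\bu_\infty \;=\; \Omega^\bu(T)\otimes_{C^\bu(\ft)} C^\bu(\ft^{(\infty)}) \;\xrightarrow{\sim}\; k\otimes_{C^\bu(\ft)} C^\bu(\ft^{(\infty)}).
\]

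\emph{Step 3: Identification of the quotient with $C^\bu(\fn)$.} Quotienting $C^\bu(\ft^{(\infty)})$ by the ideal generated by $\overline{\ft}[-1]$ eliminates precisely the generators coming from $\overline{\ft}[-1]$ and leaves $\Sym^\bu(\overline{\fn}[-1])$ with the differential induced from the Chevalley--Eilenberg differential of $\ft^{(\infty)}$, which on these surviving generators is exactly $\dCE$ for $\fn$ (the extension cocycles all land in $\overline{\ft}[-1]$ and are therefore set to zero). Hence $k\otimes_{C^\bu(\ft)} C^\bu(\ft^{(\infty)}) = C^\bu(\fn)$ as cdgas, and chaining the equivalences gives $A^\bu(S)\simeq C^\bu(\fn)$. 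Taking cohomology yields $\fm \cong H^\bu(\fn)$.

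The principal technical obstacle is ensuring that every equivalence respects the \emph{multiplicative} (cdga), not merely linear, structure: that $A^\bu(-)$ is lax symmetric monoidal on suitably cofibrant inputs, that the augmentation $\Omega^\bu(T)\to k$ is genuinely a $C^\bu(\ft)$-module map, and that the differential $d_\infty = \cD + \dCE$ on $A^\bu_\infty$ coincides with the differential on the tensor product of cdgas over $C^\bu(\ft)$. Each of these is essentially dictated by construction, but verifying them carefully (together with the cofibrancy of the Tate tower in the appropriate model category of $C^\bu(\ft)$-algebras) is where the bookkeeping lies.
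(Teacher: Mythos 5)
Your proof is correct in outline and uses the very same roof through $A^\bu_\infty$ that the paper constructs, namely $A^\bu(S) \leftarrow A^\bu_\infty \rightarrow C^\bu(\fn)$; what differs is how the two legs are shown to be quasi-isomorphisms. The paper proves both legs by explicit homological perturbation (Propositions \ref{prop:ASisAInftyS} and \ref{prop:AInftyisn}): it builds deformation retracts onto the $\dCE$- respectively $d_0$-cohomology and then checks by hand that the perturbation series for the transferred differential truncates after the first term and that no higher $A_\infty$ products survive. You instead invoke abstract homotopical algebra: the kernel is homotopically flat over $C^\bu(\ft)$, the Tate tower is a relative Sullivan (hence cofibrant) $C^\bu(\ft)$-algebra, and base change along the quasi-isomorphisms $C^\bu(\ft^{(\infty)})\to S$ and $\Omega^\bu(T)\to k$ does the rest; your Step 3 identification of the strict pushout $k\otimes_{C^\bu(\ft)}C^\bu(\ft^{(\infty)})$ with $C^\bu(\fn)$ is exactly the cofiber diagram \eqref{eq:cofib1} made precise. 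Your route is shorter and makes the ``both sides are pushouts of the same span'' intuition into the actual proof; the paper's route buys concrete extra information that is used later, namely that the induced differential on $A^\bu(S)$ is strictly the pure spinor differential $\cD=\lambda D$ with only a binary product and no higher $A_\infty$ corrections, which your argument establishes only up to homotopy. One point you should tighten: in Step 1, ``freely (hence flatly) generated'' is not by itself enough, since graded-freeness of a dg module does not imply that tensoring preserves quasi-isomorphisms for unbounded complexes; you need $K(\ft)\cong\Omega^\bu(T)$ to be \emph{semifree} (K-flat) over $C^\bu(\ft)$, which does hold here (filter by $\theta$-degree, which is finite, and work weight-by-weight in $\Ch(\GrVect_>)_b$), but it is precisely the verification you defer to ``bookkeeping'' and is the only place where your argument could silently fail in a more general setting.
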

    \begin{proof}
    The proof proceeds by constructing a roof diagram
    \begin{equation}
        \begin{tikzcd}[row sep = 1 ex]
            & \widetilde{A}^\bu \ar[rd] \ar[ld] &  \\
                A^\bu(S) & & C^\bu(\fn)
        \end{tikzcd}
        \label{eq:roof}
    \end{equation}
of quasi-isomorphisms of cdgas.
In Propositions \ref{prop:ASisAInftyS} and \ref{prop:AInftyisn}
 below, we show that each map in \eqref{eq:roof} is an equivalence, so that $\widetilde{A}^\bullet$ is homotopy equivalent both to the pure spinor multiplet $A^\bullet(S)$, and to the cochains of $\fn\subset\widetilde{\ft}$. The theorem follows immediately.
 \end{proof}

\numpar[sec:equiv][The equivalence of $\widetilde{A}^\bu$ and the multiplet.]
We will here construct the left-hand map in \eqref{eq:roof}, proving the equivalence between $\widetilde{A}^\bu$ and the standard pure spinor multiplet $A^\bu(S)$. 

\begin{prop}\label{prop:ASisAInftyS}
The pure spinor multiplet $A^\bu(S)$ is homotopy equivalent to $\widetilde{A}^\bu$. That is, there exists a deformation retract
\begin{equation*}
    \begin{tikzcd}
(A^\bu(S),\cD) \arrow[r, shift left=1ex, "{i'}"] 
 &  \arrow[l, shift left=1ex, "{p'}"]  (\widetilde{A}^\bu,\widetilde{d}) \arrow[loop right]{r}{h'} 
\end{tikzcd}\,.
\end{equation*}
\end{prop}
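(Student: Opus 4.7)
The plan is to apply the homological perturbation lemma to the Tate-resolution retract, tensored with $C^\infty(T)$, perturbing by the pure spinor piece of the full differential.

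First, since $C^\bu(\ft[\infty])$ is, by construction, a $k$-projective resolution of $S$ in cdgas, it admits a strong deformation retract of cochain complexes onto $(S,0)$, with canonical projection $\pi$, a chosen section $\iota : S \hookrightarrow C^\bu(\ft[\infty])$ landing in Tate cohomological degree zero (i.e.\ inside $R$), and a contracting homotopy $\eta$ of $\dCE$ arranged to satisfy the standard side conditions $\pi\iota = \mathrm{id}$, $[\dCE, \eta] = \iota\pi - \mathrm{id}$, and $\eta^2 = \pi\eta = \eta\iota = 0$. All data can be made $\Der(\ft)$-equivariant using reductivity. Tensoring over $k$ with $(C^\infty(T), 0)$ is flat and yields a strong deformation retract $(A^\bu(S), 0) \rightleftarrows (A^\bu_\infty, \dCE)$ with $i_0 = \mathrm{id} \otimes \iota$, $p_0 = \mathrm{id} \otimes \pi$, $h_0 = \mathrm{id} \otimes \eta$, and all side conditions inherited.

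Second, set $\delta := d_\infty - \dCE$, i.e.\ the sum of the pure-spinor-kernel pieces in~\eqref{dinftyeq}. Since $(\dCE + \delta)^2 = 0$ by construction of $d_\infty$, $\delta$ is an admissible perturbation of $\dCE$. The homological perturbation lemma then produces the required strong deformation retract, with
\[
i' = (1 - h_0 \delta)^{-1} i_0, \quad p' = p_0 (1 - \delta h_0)^{-1}, \quad h' = (1 - h_0 \delta)^{-1} h_0,
\]
and transferred differential $\delta_{\mathrm{tr}} = p_0 \, \delta \, (1 - h_0 \delta)^{-1} \, i_0$ on $A^\bu(S)$. Convergence of the geometric series follows from filtering by Tate cohomological degree: $h_0$ strictly lowers this filtration, and at any fixed weight only finitely many Tate degrees are coupled by $\delta$, so $(h_0 \delta)^n$ vanishes on any homogeneous element once $n$ is large enough.

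Finally, one identifies $\delta_{\mathrm{tr}}$ with $\cD = \lambda^\alpha D_\alpha$. The leading term $p_0 \delta i_0$ extracts the pieces of $\delta$ that preserve Tate degree zero on $i_0(A^\bu(S))$, which from~\eqref{dinftyeq} are precisely $\lambda(\partial_\theta + \theta \partial_x)$; the $v\partial_x$ and $\lambda^2 \partial_v$ terms land in strictly negative Tate degree and are killed by $p_0$. The higher terms $p_0 \delta (h_0 \delta)^n i_0$ for $n \geq 1$ vanish upon projection by the side condition $\pi\eta = 0$, which ensures $\mathrm{im}(h_0) \subset \ker(p_0)$, together with an induction showing that alternating $h_0$'s and $\delta$'s cannot re-enter Tate degree zero.

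The main obstacle, I expect, is precisely this last verification: the careful bookkeeping in the Tate bigrading needed to confirm that no nontrivial higher-order perturbation correction survives the projection $p_0$. Once this is in hand, $\delta_{\mathrm{tr}} = \cD$ on the nose, and the perturbation lemma produces the desired data $(i', p', h')$, proving the proposition.
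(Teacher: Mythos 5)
Your proposal is correct and follows essentially the same route as the paper: the paper likewise splits $d_\infty=\dCE+\cD$, retracts $(A^\bu_\infty,\dCE)$ onto its cohomology $C^\infty(T)\otimes S$, and applies homotopy transfer/perturbation, identifying the transferred differential with the pure spinor differential by exactly your mechanism --- the homotopy produces elements containing higher-stage Tate generators, which the perturbation (having no $\partial_v$ or higher-generator derivatives) cannot remove and which the projection therefore kills, so only $p_0\delta i_0=\lambda(\partial_\theta+\theta\partial_x)$ survives. The only thing the paper does beyond your sketch is to run the same argument once more to check that no higher transferred $A_\infty$ products survive (so the retract is one of cdgas, as needed for Theorem~\ref{thm:main}), which follows from the identical observation that anything in the image of $h_0$ is annihilated by $p_0$.
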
 \begin{proof}
The proof will make use of the homological perturbation lemma.
The differential on $\widetilde{A}^\bu$ splits into
\begin{equation}
    \widetilde{d}=\cD+\dCE\;,
\end{equation}
where $\dCE$ is the differential on $C^\bu(\widetilde{\ft})$, and $\cD=v\pdv{}{x}+\lambda\pdv{}{\theta}+\lambda\theta\pdv{}{x}$. $\cD$ can be seen as a deformation of $\dCE$. As a commutative algebra 
\begin{equation}
    \widetilde{A}^\bu=C^\infty(T)\ot C^\bu(\widetilde{\ft})\;.
\end{equation}
Taking cohomology with respect to $\dCE$ we get 
\begin{equation}
H^\bu\big(C^\infty(T)\ot C^\bu(\widetilde{\ft}),\dCE\big)=C^\infty(T)\otimes S\;,
\end{equation}
which is the underlying algebra of $A^\bu(S)$. It is always possible to construct a deformation retract from any complex to its cohomology \cite{lodval}. 
\begin{equation*}
    \begin{tikzcd}
(A^\bu(S),0) \arrow[r, shift left=1ex, "{i}"] 
 &  \arrow[l, shift left=1ex, "{p}"]  (\widetilde{A}^\bu,\dCE) \arrow[loop right]{r}{h} 
\end{tikzcd}\,.
\end{equation*}
For any consistent degree $k$ we can choose a decomposition $\widetilde{A}^k=B^k\oplus H^k \oplus \dCE^{-1} B^{k+1}$, where $B^k=\dCE \widetilde{A}^{k-1}\subset A^k$, denotes the space of boundaries of $\dCE$. The homotopy $h$ has consistent degree $-1$ and is zero on $ H^k \oplus\dCE^{-1} B^{k+1}$ and identifies $B^{k}$ with $\dCE^{-1}B^{k-1}$ \cite{lodval}. This means that the image of $h$ will always contain elements from stages higher than or equal to 1. Hence, all higher $A_\infty$ products, except for the binary one, on $A^\bu(S)$ are trivial as the product of anything with something which is nontrivial under $h$ will be zero under \nolinebreak $p$.

The deformation $\cD$ of $\dCE$ induces, by homotopy transfer of $D_\infty$ structures \cite{Lapin}, a new homotopy retract.
\begin{equation*}
    \begin{tikzcd}
(A^\bu(S),\cD') \arrow[r, shift left=1ex, "{i'}"] 
 &  \arrow[l, shift left=1ex, "{p'}"]  (\widetilde{A}^\bu,\widetilde{d}) \arrow[loop right]{r}{h'} 
\end{tikzcd}\,.
\end{equation*}
where the new maps are constructed as perturbation series. In particular
 \begin{equation}\begin{split}
\label{eq:perturbed retract2}
    p'=& \;p\circ \sum_{n=0}^\infty(\cD h)^n,\\
    \cD'=&\sum_{n=0}^\infty p (\cD h)^n\cD i\;.
\end{split}
\end{equation}
The induced differential is the ordinary pure spinor differential. The first term is 
\begin{equation}
    p\cD i=p\left(v\pdv{}{x}+\lambda\pdv{}{\theta}+\lambda\theta\pdv{}{x}\right) i=p \left(\lambda\pdv{}{\theta}+\lambda\theta\pdv{}{x}\right) i\;,
\end{equation}
as $v$ is by construction zero under $p$. Thus we retrieve the original pure spinor differential as the first approximation. 

The higher approximations are zero; because of the construction of $h$, we have that the image $\cD i(q(x,\theta,\lambda))$, of a polynomial $q(x,\theta,\lambda)\in A^\bu(S)$, will only be nonzero under $h$ for elements that are exact with respect to $\dCE$. (We assume that $q$ is homogeneous with respect to the splitting chosen above.) If it is exact it will be replaced by some element $w_k$ from a stage $k>0$. The image under $\cD$ of this will either be zero or still contain elements from higher stages than 0. All these are zero under $p$. Thus, all higher corrections to $\cD'$ are zero, and we get that the induced differential is just the standard pure spinor differential. By the same argument, we can also conclude that there will be no higher approximations to $p$, and hence, no higher products induced on $A^\bu(S)$, except for the binary one.

The convergence of the other series (those defining $h'$ and $i'$) presents no problems. We are working in $C^\infty(T) \otimes C^\bu(\widetilde{\ft})$, which can be thought of as the formal series algebra on $\widetilde{\ft}^\vee[-1]$ valued in smooth functions on superspace. $C^\bu(\widetilde{\ft})$ is graded by weight, and $\dCE$ is of weight zero; we can choose the unperturbed homotopy data, and in particular $h$, to also be homogeneous of weight zero. We can extend the weight grading to define a filtration on $C^\infty(T) \otimes C^\bu(\widetilde{\ft})$ by placing functions on superspace in weight zero. The perturbation is compatible with this filtration, and indeed $\cD$ has filtered weight $+1$. Thus the term $h(\cD h)^k$ in the series defining $h'$ has filtered weight $+k$, and only finitely many terms contribute to the action of $h'$ on a homogeneous element at any fixed weight.
\end{proof}

\numpar[sec:n-equiv][$\widetilde{A}^\bu$ is equivalent to $C^\bu(\fn)$.]
We complete the proof of Theorem~\ref{thm:main} by constructing the right-hand map in~\eqref{eq:roof}. To check that this is an equivalence, we will perform a homotopy transfer from $\widetilde{A}^\bu$ to 
$C^\bullet(\fn)$. 
\begin{prop}\label{prop:AInftyisn}
$\widetilde{A}^\bullet$ and $C^\bullet(\fn)$ are homotopy equivalent as cdgas.
\end{prop}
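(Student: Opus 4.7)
The plan is to argue by homotopy transfer, mirroring Proposition~\ref{prop:ASisAInftyS} but with the roles of the two pieces of $d_\infty$ swapped. The guiding observation is the cdga-level cofiber square~\eqref{eq:cofib1}--\eqref{eq:cofib2}: both $A_\infty^\bullet$ and $C^\bullet(\fn)$ should compute the same homotopy pushout of $k \leftarrow C^\bullet(\ft) \to C^\bullet(\ft^{(\infty)})$, with the left map the augmentation and the right map dual to the quotient in~\eqref{eq:def-n}.

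Concretely, $A_\infty^\bullet = C^\infty(T)\otimes_k C^\bullet(\ft^{(\infty)}) = K(\ft) \otimes_{C^\bullet(\ft)} C^\bullet(\ft^{(\infty)})$ is a literal pushout of cdgas. As recalled in~\S\ref{sec:Koszul}, $K(\ft)$ is the de Rham complex of the contractible superspace $T$ in a right-invariant frame, so it admits an explicit Poincar\'e-type cdga deformation retract
\[
\begin{tikzcd}
(k, 0) \arrow[r, shift left=1ex, "{i}"] & (K(\ft), \cD) \arrow[l, shift left=1ex, "{p}"] \arrow[loop right]{r}{h}
\end{tikzcd}
\]
in which $i$ is the inclusion of constants and $p$ is evaluation at $x=\theta=\lambda=v=0$; both are $C^\bullet(\ft)$-algebra maps, viewing $k$ via the augmentation. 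Base-changing this retract along the (Sullivan) inclusion $C^\bullet(\ft) \hookrightarrow C^\bullet(\ft^{(\infty)})$ produces a cdga deformation retract of $A_\infty^\bullet$ onto $k\otimes_{C^\bullet(\ft)} C^\bullet(\ft^{(\infty)}) = C^\bullet(\ft^{(\infty)})/\langle\overline{\ft}[-1]\rangle$. Since $\fn$ is supported in internal degrees $\geq 3$ it is a genuine subalgebra of $\ft^{(\infty)}$ (cf.~\S\ref{sec:xtn}), so this quotient is exactly $C^\bullet(\fn)$ with its Chevalley--Eilenberg differential.

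The main obstacle is to verify that, once the perturbation $\Delta := d_\infty - \cD$ collecting the brackets of $\ft^{(\infty)}$ involving $\overline{\fn}[-1]$ is turned on, the transferred differential remains strictly $d_{CE}^{\fn}$ and no higher $C_\infty$ operations appear. By the same mechanism as in the proof of Proposition~\ref{prop:ASisAInftyS}, one chooses $h$ to satisfy the standard side conditions $ph = 0$, $hi = 0$, $h^2 = 0$, so that the higher-order terms $p(\Delta h)^n \Delta i$ with $n \geq 1$ in the perturbation expansion land in the ideal $\langle\overline{\ft}[-1]\rangle$ and are annihilated by the projection; weight-degree boundedness from below guarantees termwise convergence of the series.
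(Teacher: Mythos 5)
Your argument is essentially the paper's: both proofs contract the superspace/pure-spinor factor of $A^\bu_\infty$ onto $\Sym^\bu(\overline{\fn}[-1])$ and then invoke homological perturbation, with the decisive point in each case being that the homotopy $h$ converts $\lambda$'s and $v$'s into $\theta$'s and $x$'s which the perturbation never removes and $p$ annihilates, so the series truncates, the transferred differential is exactly $\dCE$ of $\fn$, and no higher operations survive. The only cosmetic differences are that the paper takes the unperturbed differential to be $d_0=\lambda\,\partial/\partial\theta+v\,\partial/\partial x$ rather than the full kernel differential $\cD$, and that your ``base change'' of the retract should strictly be read as extending $(i,p,h)$ by the identity on the free factor $\Sym^\bu(\overline{\fn}[-1])$ (as the paper does), since $h$ is not $C^\bu(\ft)$-linear.
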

\begin{proof}
Let $\widetilde{d}=d_0+d_1$, where $d_0=\lambda\frac{\*}{\*\theta}+v\frac{\*}{\*x}$, with $d_0^2=0$, is the part of the differential encoding the $1$-bracket described by the arrows in Table \ref{fig:dinftyalgebra}, and $d_1$ the rest. $d_1$ is not nilpotent, but for the moment seen as a deformation of $d_0$.
The transfer will then be from $(C^\infty(T)\otimes C^\bullet(\ft)\otimes C^\bullet(\fn),d_0)$ to $(C^\bullet(\fn),0)$ (the cohomology of $d_0$). 

We need an operator $h$ which is the ``inverse'' or adjoint of $d_0$, $h\sim \theta\frac{\*}{\*\lambda}+x\frac{\*}{\*v}$, maybe with some factor depending on the degree of homogeneity, which is irrelevant for the argument. 
The projection and inclusion of this first little transfer are simply the naive ones,
\begin{align}
i:&\quad\phi\mapsto\phi\;,\nn\\
p:&\quad\phi\mapsto\phi\;,\quad(\lambda,v)\mapsto0
\;,\quad(\theta,x)\mapsto0\;,
\end{align}
where $\phi\in\fn$ is denoted by the same letter on both sides. This defines a strong deformation retract
\begin{equation}
    \begin{tikzcd}
(C^\bu(\fn),0) \arrow[r, shift left=1ex, "{i}"] 
 &  \arrow[l, shift left=1ex, "{p}"]  (\widetilde{A}^\bu,d_0) \arrow[loop right]{l}{h}
\end{tikzcd}\,.
\end{equation}
The transferred $A_\infty$ structure on $C^\bu(\fn)$ will only be the binary one induced by the product on $\widetilde{A}^\bu$. All higher products will vanish as the product of anything with something nontrivial under $h$ will be zero under $p$. Furthermore, following the result of  \cite{Lapin},
the retract allows us to, for any deformation of $d_0$, transfer the total differential to $C^\bu(\fn)$, so that we get another retract

\begin{equation}
    \begin{tikzcd}
(C^\bu(\fn),d') \arrow[r, shift left=1ex, "{i'}"] 
 &  \arrow[l, shift left=1ex, "{p'}"]  (\widetilde{A}^\bu,d_0+d_1) \arrow[loop right]{l}{h'}
\end{tikzcd}.
\end{equation}

The actual projection, inclusion, $h$ and differential 
are then constructed 
as a perturbation series.  The $A_\infty$ structure on $C^\bu(\fn)$ will again only be the binary one, by the same argument as in proposition \ref{prop:ASisAInftyS}.

The new differential $d'$ is given as a sum of terms 
\begin{align}
d'=\sum_{n=0}^\infty p (d_1 h)^n d_1 i\;.
\end{align}
The term at $n=0$ gives back the differential on $C^\bullet(\fn)$. For $n>0$, $h$ picks up terms with $\lambda$ or $v$ and converts them to $\theta$ or $x$. These variables are untouched by remaining $d_1$’s (except for the term that state that they transform under supertranslations), and any such terms finally vanish under $p$.
In other words, there is no zig-zag landing in $C^\bullet(\fn)$, since $d_1$ goes out of $C^\bullet(\fn)$ but never into it from outside (this applies to the whole complex); $\fn$ is a subalgebra.
\end{proof}

\numpar[sec:A-inf-shriek][The multiplet as a homotopy fiber]
While we have formulated the results of the above section in terms of cdgas, we can of course also work with the Koszul dual $L_\infty$ algebras. The diagram dual to~\eqref{eq:cofib1} is
\begin{equation}
    \begin{tikzcd}
    \ft & \widetilde{\ft} \ar[l] \\
    0 \ar[u] & \fn \ar[l] \ar[u]
    \end{tikzcd}
    \label{eq:fib1}
\end{equation}
The diagram Koszul dual to~\eqref{eq:cofib2} is
\begin{equation}
    \begin{tikzcd}
        \ft & \widetilde{\ft} \ar[l] \\
    0 \ar[u] & (\widetilde{A}^\bu)^! \ar[l] \ar[u]
    \end{tikzcd}
    \label{eq:fib2}
\end{equation}
As already demonstrated above, $\fn$ is (a model of) the Koszul dual to $\widetilde{A}^\bu$. But it is interesting to observe that $\widetilde{A}^\bu$ is a freely generated cdga, so that there is an apparent model for its Koszul dual, defined on the linear dual of its generating set. This model $(\widetilde{A}^\bu)^!$ is, in fact, isomorphic (as a cochain complex) to the (suspended) mapping cone of the canonical morphism $\widetilde{\ft} \to \ft$; the standard superspace coordinates can be interpreted as the generators of the Chevalley--Eilenberg complex coming from the shifted copy of~$\ft$. We sketch the structure of the resulting algebra in Table~\ref{fig:dinftyalgebra}. The suspended mapping cone is a well-known model for the homotopy fiber; the relevant $L_\infty$ structure was studied in~\cite{FM}. 

\begin{table}
    \centering
\begin{picture}(250,150)(0,-30)
\put(-12,0){hom. degree $=$}
\put(80,0){\llap{$-$}$1$}
\put(80,30){$0$}
\put(80,60){$1$}
\put(50,-30){int. degree $=$}
\put(120,-30){$1$}
\put(160,-30){$2$}
\put(200,-30){$3$}
\put(240,-30){$\cdots$}
\put(80,90){$\vdots$}
\put(100,-15){\line(0,1){120}}
\put(100,-15){\line(1,0){160}}
\put(120,30){$\ft_1$}
\put(160,30){$\ft_2$}
\put(120,0){$\ft'_1$}
\put(160,0){$\ft'_2$}
\put(180,15){\line(0,1){90}}
\put(180,15){\line(1,0){80}}
\put(215,60){$\fn$}
\multiput(123,25)(40,0){2}{\vector(0,-1){15}}
\end{picture}
    \caption{\it Generators in double grading of the $L_\infty$ algebra 
    $(\widetilde{A}^\bullet)^!$ with coalgebra differential $\widetilde{d}$. The arrows indicate the only $1$-brackets.}
    \label{fig:dinftyalgebra}
\end{table}
This gives us a pleasing interpretation of the geometric origin of the canonical multiplet. If we think of it as the affine dg scheme $\Spec C^\bu(\fn)$, it is the homotopy fiber of the map from $\Spec S$ to $\Spec C^\bu(\ft)$---that is, the map from the ordinary pure spinor space to its derived replacement.

We note that there is a map
\deq{
A^\bu(S) \to C^\infty\left(T\right)
}
from the canonical multiplet to functions on superspace. The spectrum of the canonical multiplet is thus, in a precise sense, also an extension of the normal notion of superspace: just as the body of any supermanifold maps to the supermanifold, the normal superspace $T$ maps to $\Spec A^\bu(S)$. Any non-derived pure spinor superfield can be thought of as a sheaf over~$\Spec A^\bu(S)$, and it is the geometry of this space that underlies many constructions in pure spinor superfield theories.

(The reader may be concerned about the use of smooth functions on superspace in our context, as constructions in homotopy transfer often involve formal series. However, as remarked in the proofs of Propositions~\ref{prop:ASisAInftyS} and~\ref{prop:AInftyisn}, the relevant series actually terminate in our examples, and no series involving the coordinate $x$ appear. Indeed, the essential ingredient of our construction is nothing other than the Poincar\'e lemma in disguise: the sheaf of smooth de Rham forms is equivalent to the sheaf of locally constant functions.)

\numpar[sec:LHS][Filtrations and the Hochschild--Serre spectral sequence] 
Recall the short exact sequence~\eqref{eq:def-n(k)} of $L_\infty$ algebras from above:
\deq[eq:SES]{
    0 \to \fn^{(k)} \to \widetilde{\ft} \to \ft^{(k)} \to 0,
}
where $\fn^{(k)}$ is the ideal subalgebra of $\widetilde{\ft}$ consisting of all generators added after stage $k$. (In particular, $\fn = \fn^{(1)}$.) We can consider the two-step filtration defined by assigning weight zero to $\fn^{(k)}$ and weight one to the remaining  generators. (We remark that this is \emph{not} a grading on $\widetilde{\ft}$---but it is a subfiltration of the filtration defined by the consistent grading.)
The fact that the quotient algebra for $k=1$ is precisely the supertranslation algebra $\ft$ with which we started will be significant in the sequel, in particular due to the fact that we will be interested in considering the $\ft$-multiplet associated to $C^\bu(\widetilde{\ft})$. However, we will see examples where higher values of~$k$ seem to appear naturally (see~\S\ref{sec:G2A1}).

To any such short exact sequence of Lie algebras, one can associate a Hochschild--Serre spectral sequence which abuts to the Lie algebra cohomology of $\widetilde{\ft}$. This spectral sequence is constructed by filtering $C^\bu(\widetilde{\ft})$ as induced from the two-step filtration $\fn^{(k)} \subset \widetilde{\ft}$. In the case $k=1$, the $E_2$ page of this spectral sequence is
\deq{
    E_2 \, : \, H^\bu(\ft, H^\bu(\fn)) \qquad \Rightarrow \qquad E_\infty = H^\bu(\widetilde{\ft} ) \simeq S\;. 
    }
 But we know that derived $\ft$-invariants are the inverse functor to the pure spinor functor $A^\bu(-)$! Therefore, the $E_2$ page of the spectral sequence is identified with the Koszul dual of the $U(\ft)$-module $H^\bu(\fn)$. 
    But we know---via the identification of $H^\bu(\fn)$ with $A^\bu(S)$ above---that this Koszul dual is $S$. Our result above can  thus be understood as the assertion that the Hochschild--Serre spectral sequence collapses at $E_2$ in this instance.

\subsection{\texorpdfstring{From $\ft\inplus F[\overline{\fm}]$ to $\widetilde{\ft}$} {tinfty from t F bar m}\label{sec:freeconstruction}}
\hfill
\nopagebreak

The purpose of this section is to obtain as much concrete information about the resolution $\widetilde{\ft}$ as possible, including explicit forms of its brackets.

In general one cannot expect to arrive at a point where one has complete understanding of what $\widetilde{\ft}$ looks like, and what its brackets are. We will see however, by establishing some idea of how the multiplet ``fits'' into $\fn$ (to be more precisely defined), that it is in some cases possible to get a complete understanding of what the algebra is. One should again mention the class of Koszul algebras, if the ring $S$ is found to be $S$ Koszul, then the ``traditional''  Koszul dual algebra will be the universal envelope of the Lie algebra $\widetilde{\ft}$.

Another class of models, partly overlapping with the one above, is characterised by the property that the subalgebra $\fn$ will be freely generated by the shifted dual of the multiplet. 
All models describing (canonical multiplets of) standard supersymmetry fall in this class.
This is the topic of the present section.
We will soon give conditions for when $\fn=F[\overline{\fm}[-1]]$, which we will write $F[\overline{\fm}]$ for compactness, the shift in cohomological degree being understood.

\numpar[sec:nfree][$\fn$ is freely generated in a large class of models.]

We now wish to establish a result which tells us that $\fn$ in many cases is in fact a strict Lie algebra (\ie, with only a 2-bracket), and moreover, that it is freely generated by the dual of the multiplet. 

We note that $\fn$ is a minimal (\ie, without 1-bracket) $L_\infty$ algebra in (internal) degrees greater than or equal to three. 
We can filter $C^\bu(\fn)$ by assigning filtration degree one to the generators $\overline{\fn}$. This assignment determines a grading on~$C^\bu(\fn)$ which is not preserved by the differential. However, the associated decreasing filtration is. Indeed, since $\fn$ is minimal, the differential \emph{strictly} increases the filtration degree. 

We will make use of the following lemma.
\begin{lemma}\label{lemma:generatingset}
Let $\fa$ be a minimal $L_\infty$ algebra in positive internal degrees. Then there is a well-defined subspace $H^{(1)}(\fa)$ of $H^\bu(\fa)$, defined by having representatives linear in $\overline{\fa}[-1]$. $H^{(1)}(\fa)$ is dual to a minimal generating set of $\fa$.
\end{lemma}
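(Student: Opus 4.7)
The plan is to identify $H^{(1)}(\fa)$ concretely with the space of linear cocycles in $C^\bu(\fa)$, and then to match those linear cocycles with the dual of the indecomposables of $\fa$. Set $V := \overline{\fa}[-1]$, so that $C^\bu(\fa) = \Symc^\bu(V)$ as a commutative algebra, carrying the canonical grading by polynomial (word) degree in $V$ and the associated decreasing filtration $F^p = \Sym^{\geq p}(V)$. The crucial structural input is minimality: the component of $\dCE$ mapping $V$ into $\Sym^k(V)$ is dual to the $k$-ary bracket $\mu_k$, and since $\mu_1 = 0$, the differential $\dCE$ strictly raises polynomial degree on all of $C^\bu(\fa)$. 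Positivity of the internal grading ensures that only finitely many polynomial degrees contribute in each fixed bidegree, so the filtration is locally finite and cohomology is well-behaved.

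First, I would define $V_{\mathrm{cocycle}} := \ker(\dCE|_V) \subset V$ and observe that no nonzero element of $V$ is a coboundary: a coboundary is a value of $\dCE$ and hence lives in $\Sym^{\geq 2}(V)$. Consequently the natural map $V_{\mathrm{cocycle}} \to H^\bu(\fa)$ sending $v \mapsto [v]$ is injective. I would define $H^{(1)}(\fa)$ to be its image, i.e.\ the subspace of cohomology classes admitting a representative that is linear in $V$; injectivity says that such a representative, when it exists, is unique, so $H^{(1)}(\fa)$ is canonically a subspace of $H^\bu(\fa)$ and is canonically identified with $V_{\mathrm{cocycle}}$.

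Second, I would unpack the cocycle condition to identify $V_{\mathrm{cocycle}}$ with the annihilator of the combined images of the higher brackets. For $\alpha \in V \subset \overline{\fa}[-1]$,
\begin{equation*}
\dCE(\alpha) \;=\; \sum_{k\geq 2} \mu_k^{*}(\alpha),
\end{equation*}
so $\alpha \in V_{\mathrm{cocycle}}$ if and only if $\alpha$ annihilates $\sum_{k\geq 2} \operatorname{im}(\mu_k) \subset \fa$. Dually, this identifies $V_{\mathrm{cocycle}}$ with the shifted dual of the indecomposables
\begin{equation*}
\fa \Big/ \sum_{k\geq 2} \operatorname{im}(\mu_k),
\end{equation*}
which is exactly a minimal generating set of $\fa$ as an $L_\infty$ algebra (any lift of a basis of this quotient to $\fa$ generates, and none of its elements can be omitted). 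Combining the two steps gives $H^{(1)}(\fa) \cong V_{\mathrm{cocycle}} \cong \bigl(\fa / \sum_k \operatorname{im}(\mu_k)\bigr)^{\vee}[-1]$, which is the claim.

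The main subtlety I expect is conceptual rather than computational: one must be careful to use the right notion of ``minimal generating set'' for an $L_\infty$ algebra, treating all higher brackets on equal footing with the $2$-bracket when defining indecomposables. Once that is in place, the argument is essentially bookkeeping with the polynomial-degree filtration, made rigorous by minimality plus the positive internal grading.
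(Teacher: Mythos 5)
Your proposal is correct and follows essentially the same route as the paper's proof: linear cochains are never exact because the $1$-bracket vanishes, so linear cocycles inject into cohomology, and the cocycle condition identifies $H^{(1)}(\fa)$ with the annihilator of the images of all brackets, hence with the dual of the indecomposables $\fa/\sum_{k\geq 2}\operatorname{im}(\mu_k)$. Your version is somewhat more explicit than the paper's (which abbreviates the decomposables as $[\fa,\fa]$), but the content is the same.
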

\begin{proof}
Call the space of linear cochains $C^{(1)}(\fa) \simeq \overline{\fa}[-1]$.
No element in $C^{(1)}(\fa)$ is exact, since the 1-bracket vanishes.
Thus there is a subspace $H^{(1)}(\fa) \subset H^\bu(\fa)$ consisting of linear cohomology (represented by classes in~$C^{(1)}$).
Furthermore, $H^{(1)}(\fa) \subset \overline{\fa}[-1]$ is a subspace.
Closedness means that the dual generator does not appear as the bracket of any generators; 
in fact, the dual map $\fa \to \overline{H^{(1)}(\fa)}[-1]$ identifies the shifted dual of $H^{(1)}$ with $\fa/[\fa,\fa]$, and thus (noncanonically) with a minimal generating set of~$\fa$.
\end{proof} 

In fact, 
positive internal grading together with the requirement of equivariance under $\Der(\ft)$ ensures that the minimal generating set can be chosen uniquely, at least in our examples.
The cochains dual to elements of lowest internal degree necessarily belong to $H^{(1)}(\fa)$. Reasoning by increasing internal degree sequentially identifies $H^{(1)}(\fa)[1]$ as dual to the set of generators that do not arise as brackets of generators at lower degrees; Schur's lemma identifies these uniquely with particular $\Der(\ft)$-modules.\footnote{One might worry that $\Der(\ft)$ could be too small to ensure splittings of this kind. In particular, it could be abelian. In every example we study in the sequel, $\Der(\ft)$ is a nonabelian reductive Lie algebra, and no issues arise.}
Thus, $H^{(1)}(\fa)$ is dual to (a shift of) this minimal generating set of $\fa$. In particular, we have a wrong-way map that identifies $\overline{H^{(1)}(\fa)}[-1]$ with a subspace of~$\fa$.
   
Given a minimal generating set $V \subset \fa$, we can introduce a new filtration on the $L_\infty$ algebra $\fa$ itself. This filtration is by ``generation number'' $\gamma$, 
defined by assigning $\gamma=1$ to the generating set $V=\overline{H^{(1)}(\fa)}[-1]$. 
($\overline{V}[-1]$ thus has $\gamma=-1$.)
We extend the definition by requiring that any 2-bracket carries generation number 0;
this would be a grading in the strict case, but it can \apriori\ be violated by higher brackets.

\begin{prop}\label{prop:nfree}
        Let $\fa$ be a minimal $L_\infty$ algebra in positive internal degrees. Then $\fa$ is a freely generated (strict) Lie algebra if and only if $H^\bullet(\fa)=k\oplus H^{(1)}(\fa)$. 
\end{prop}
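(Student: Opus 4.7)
For the forward direction $(\Rightarrow)$, the identification $H^\bullet(F[V]) = k \oplus \overline V[-1]$, with all non-constant classes represented by linear cochains, is classical. I would invoke $U(F[V]) \cong T(V)$ (tensor algebra) together with $H^\bullet(F[V]) \cong \Ext^\bullet_{T(V)}(k,k)$, and compute the latter via the length-one resolution $0 \to T(V)\otimes V \to T(V) \to k \to 0$. This immediately places the cohomology in cohomological degrees $0$ and $1$, with $H^1 \cong V^\ast$. In particular it yields $H^{\geq 2}(F[V]) = 0$, the input that makes the converse run.

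For $(\Leftarrow)$, use Lemma~\ref{lemma:generatingset}, together with $\Der(\ft)$-equivariance, to fix a minimal generating subspace $V \subset \fa$ whose degree-shifted dual is $H^{(1)}(\fa)$. The plan is to construct an $L_\infty$ morphism $\varphi \colon F[V] \to \fa$ extending the inclusion of $V$, and then to upgrade it to an isomorphism. It is convenient to work dually and build a cdga map $\varphi^\ast \colon C^\bu(\fa) \to C^\bu(F[V])$. Since $C^\bu(\fa) = \Symc^\bu(W)$ with $W := \overline{\fa}[-1]$ is freely generated, $\varphi^\ast$ is determined by its restriction to $W$ compatibly with the differential, and I will define this restriction by induction on internal degree. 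The closed generators of $W$ represent $H^{(1)}(\fa)$ and are sent to the corresponding generators of $C^\bu(F[V])$. For any remaining $w \in W$, the element $\varphi^\ast(dw) \in \Sym^{\geq 2}(W)$ involves only generators of strictly lower internal degree and is thus already defined; it is automatically a cocycle in cohomological degree $2$ of $C^\bu(F[V])$, so by the forward direction it is a coboundary, and any chosen primitive in $\overline{F[V]}[-1]$ may be taken as $\varphi^\ast(w)$. Positivity of the internal grading makes the induction well-posed.

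By construction $\varphi$ induces the identity on $H^{(1)}$. Since $H^\bullet(F[V]) = k \oplus \overline V[-1]$ by the forward direction, and $H^\bullet(\fa) = k \oplus H^{(1)}(\fa)$ by hypothesis, $\varphi$ is a quasi-isomorphism. Both $F[V]$ and $\fa$ are minimal $L_\infty$ algebras in positive internal degree (the former trivially as a strict Lie algebra with vanishing $1$-bracket, the latter by assumption), and a quasi-isomorphism between minimal $L_\infty$ algebras is an isomorphism. Hence $\fa \cong F[V]$ as $L_\infty$ algebras, so $\fa$ is strict and freely generated on $V$.

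The main obstacle is the inductive construction of $\varphi^\ast$: extending it past each internal weight requires the obstruction cocycle in $C^2(F[V])$ to be a coboundary, and this is exactly what the vanishing $H^{\geq 2}(F[V]) = 0$ from the forward direction supplies, so the two implications interlock. A more abstract alternative would be to invoke cofibrancy of $F[V]$ in a model structure on $L_\infty$ algebras and lift abstractly, but I prefer the explicit induction since it also allows one to verify that the chosen primitives can be taken $\Der(\ft)$-equivariantly at each stage.
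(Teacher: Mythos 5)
Your proof is correct, and the forward direction coincides with the paper's: both reduce to $U(F[V])\cong T(V)$ and the length-one free resolution of $k$ over the tensor algebra, giving $H^\bullet(F[V])=k\oplus\overline{V}[-1]$ with linear representatives and, crucially, vanishing cohomology in $\Sym$-degree $\geq 2$. For the converse you take a genuinely different route. The paper observes that the hypothesis makes $C^\bu(\fa)$ a minimal multiplicative free resolution of the square-zero ring $k\oplus\overline{V}[-1]$, notes that this ring is Koszul with Tate resolution $C^\bu(F[V])$ (this is the ``free'' item among the trivial examples in \S\ref{sec:trivex}), and concludes by uniqueness of minimal free multiplicative resolutions. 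You instead build the comparison map $\varphi^\ast\colon C^\bu(\fa)\to C^\bu(F[V])$ explicitly by obstruction theory, with the obstructions dying because the cohomology of $C^\bu(F[V])$ vanishes on decomposables, and then invoke uniqueness of minimal models (equivalently, that a quasi-isomorphism of minimal Sullivan algebras is an isomorphism). The two arguments rest on the same two pillars --- the computation of $H^\bullet(F[V])$ and uniqueness of minimal resolutions/models --- but yours makes explicit the comparison morphism that the paper's appeal to uniqueness leaves implicit, which is precisely what lets you track $\Der(\ft)$-equivariance stage by stage; the paper's version is shorter and ties the statement to the Koszulity discussion elsewhere in the text. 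One small imprecision: when $\fa$ has higher brackets, $\varphi^\ast(dw)$ has components in $\Sym$-degree $3$ and above as well as $2$, so its primitive generally lies in $\Symc^{\geq 1}(\overline{F[V]}[-1])$ rather than in the linear cochains alone. This is harmless --- a cdga map out of a free algebra may send generators to arbitrary elements of the correct bidegree --- but the phrase ``primitive in $\overline{F[V]}[-1]$'' should be relaxed accordingly.
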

\begin{proof}
        One direction is immediate. Indeed, the  
        cohomology of any free Lie algebra $F[V]$ is 
    \begin{equation}
        H^{(n)}(F[V])=\begin{cases} 
        k,\; &n=0\\
        \overline{V}[-1],\;&n=1\\
        0,\; &n\geq 2.
        \end{cases}
    \end{equation}
    (This follows by observing that Lie algebra cohomology is the cohomology of the algebra $U(\fa)$, which---as an associative algebra---does not depend on the grading; the enveloping algebra of a graded Lie algebra is just the tensor algebra $T[V]$, equipped with a grading that can be forgotten.)
    The other direction follows from the observation that $C^{\bu}(\fa)$ provides a free multiplicative resolution of the ring $k\oplus \overline{V}[-1]= \Sym^\bu(\overline{V}[-1])/\Sym^2(\overline{V}[-1])$, where 
    $\overline{V}[-1]=H^{(1)}(\fa)$.
   The ring $k\oplus \overline{V}[-1]$ is Koszul, and its Koszul dual Lie algebra, the cochains of which provide a free multiplicative resolution (the Tate resolution) is just $F[V]$ (see {\S}\ref{sec:trivex}, item 
   (\ref{item:free})). 
   Minimal free multiplicative resolutions are unique,
   hence $\fa\cong F[V]$. 
\end{proof}

An immediate corollary is:
\begin{cor}
If $\overline{ M}[-1]$ is a generating set of $\fn$ (\ie, $\overline{M}[-1]=H^{(1)}(\fn)$), then
$\fn$ is the freely generated (strict) Lie algebra $F[\overline{M}[-1]]$, and $\widetilde{\ft}$ is freely generated from internal degree 3.
\end{cor}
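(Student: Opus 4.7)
The plan is to obtain the corollary as a direct application of Proposition \ref{prop:nfree} to $\fa = \fn$. First I would note that $\fn$, by construction as the kernel of the projection $\ft^{(\infty)} \to \ft$ in \eqref{eq:def-n}, is a minimal $L_\infty$ algebra supported in internal degrees $\geq 3$, so the hypotheses of the proposition are in place. The task is therefore reduced to verifying the cohomological condition $H^\bullet(\fn) = k \oplus H^{(1)}(\fn)$.

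To do this I would invoke Theorem \ref{thm:main}, which identifies $H^\bullet(\fn)$ with the cohomology $\fm$ of the canonical multiplet. The hypothesis of the corollary, $\overline{M}[-1] = H^{(1)}(\fn)$, together with the notational convention that $\overline{M}$ encodes the non-constant part of the multiplet cohomology, translates into $\fm = k \oplus \overline{M}[-1]$. In other words, $\fn$ has no cohomology classes representable only by monomials of degree $\geq 2$ in the generators. This is exactly what Proposition \ref{prop:nfree} requires, so we may conclude $\fn \cong F[\overline{M}[-1]]$ as a strict Lie algebra (with only a $2$-bracket and no higher $L_\infty$ operations).

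For the second statement, that $\ft^{(\infty)}$ is freely generated from internal degree $3$, I would exploit the short exact sequence \eqref{eq:def-n}: $0 \to \fn \to \ft^{(\infty)} \to \ft \to 0$. Since $\ft$ occupies internal degrees $1$ and $2$, while $\fn$ accounts for all generators of $\ft^{(\infty)}$ in internal degree $\geq 3$, the freeness of $\fn$ as a Lie algebra on its minimal generating set translates directly into the claim: no Serre-type relations are imposed among the higher-stage Tate generators beyond those already contained in the structure of $\ft$ itself.

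The main subtlety, and the only real content beyond Proposition \ref{prop:nfree}, lies in making the notational identification $\fm = k \oplus \overline{M}[-1]$ precise under the stated hypothesis, so that the cohomological vanishing condition of the proposition is genuinely met. Once this alignment of conventions is pinned down (as it is in the class of models discussed, including all canonical multiplets of standard supersymmetry), the corollary is essentially a one-line application of the proposition.
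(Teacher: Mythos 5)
Your argument is exactly the intended one: the paper states this corollary as an immediate consequence of Proposition \ref{prop:nfree}, and the content you supply --- that $\fn$ is minimal in positive internal degrees, that Theorem \ref{thm:main} identifies $H^\bullet(\fn)$ with $k$ plus the multiplet cohomology, and that the hypothesis then yields $H^\bullet(\fn)=k\oplus H^{(1)}(\fn)$ --- is precisely what the paper leaves implicit. The second claim is, as you note, just the statement that the degree-$\geq 3$ subalgebra $\fn$ of $\ft^{(\infty)}$ is free, so nothing further is needed.
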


Note that the homological degrees of the elements in $\fn$ do not need to be zero, even if $\fn$ is a strict Lie algebra, which is seen in many of the examples in {\S}\ref{sec:examples}.
If $\overline{\fm}$ resides in a single homological degree, generation is proportional to homological degree. The concept of generation is strictly needed only when this is not true, which is typically the case in supergravity models due to the presence of super-Killing vector cohomology in the ghost sector. See \eg\ the example of $D=11$ supergravity in 
\S\ref{sec:D11}.

For the sake of testing the limits, we will also give an example, \S\ref{sec:E6}, where the Tate resolution does not give a $\widetilde{\ft}$ which is freely generated from degree $3$; the cohomology will then also have ``unphysical'' support in negative ghost number.
This example violates the assumptions in Proposition \ref{prop:nfree} in the following way. $S$ is Koszul and hence $\widetilde{\ft}$ is a strict Lie algebra. The multiplet has support in $H^1(\fn)$ and $H^2(\fn)$. Since $\fn$ is a strict Lie superalgebra, only the dual of $H^1(\fn)$ can be a generating set. Relations for brackets on these generators are responsible for the appearance of $H^2(\fn)$.

\numpar[sec:partitionmultiplet][Extracting the multiplet from $Z_S(t)$.]
Useful information can be extracted from the partition function by discarding the first or the first two factors in the product form
\eqref{eq:ZUa}. This is because they cancel against the partition functions for $\theta$ and $x$, see {\S}\ref{sec:n-equiv}. 

Extracting the first level in $Z_S$ as
$Z_S(t)=(1-t)^{\overline{ R_1}}P(t)$ (note that $R_1$ is negative) provides a minimal free resolution, an additive resolution in terms of functions of an unconstrained spinor, \ie, in $R$.
Such resolutions are always finite by Hilbert's syzygy theorem 
\cite{Hilbert}. The numerator is thus
polynomial, and is read as the zero-mode cohomology, \ie, the cohomology of $\lambda\frac{\*}{\*\theta}$. The interpretation is a list of component fields (including all ghosts, antifields etc.), which is the origin of the tables with arrows in the examples.
The arrows indicate the possible action of 
bosonic derivatives
after homotopy transfer to component fields 
\cite{perspectivesonpurespin}.

One can go one step further and consider the partition function for the canonical multiplet $A^\bu(S)$;
\begin{equation}
    Z_{A^\bu(S)}(t)=Z_{C^\infty(T)}(t)\ot Z_{S}(t).
\end{equation}
(Of course, $C^\infty(T)$ does not have a well-defined partition function; to reason in this way, we will have to work formally, replacing smooth functions on the body of~$T$ by formal power series $k[\![x]\!]$.)
We know that the partition function of $S$ is the same as the partition function for the Tate resolution of $S$ (this is a general fact about partition functions/characteristics; they are the same for complexes and their cohomology), so we can replace $Z_S(t)$ with $Z_{C^\bu(\widetilde{\ft})}(t)$
\begin{equation}
\begin{split}
    Z_{A^\bu(S)}(t)=&Z_{C^\infty(T)}(t)\ot Z_{C^\bu(\widetilde{\ft})}(t)=Z_{\widetilde{A}^\bu}(t)\\
    =&(1-t)^{-\overline{R}_1}(1-t^2)^{-\overline{R}_2}(1-t)^{\overline{R}_1}(1-t^2)^{\overline{R}_2}\ot Z_{C^\bu(\fn)}(t),    
\end{split}
\end{equation}
where $R_1$ and $R_2$ are the $\Der(\ft)$-representations of $\ft_1$ and $\ft_2$, respectively. On the right hand side we are left with the partition function for the cochains of $\fn$. We can make a further substitution, and replace $Z_{A^\bu(S)}(t)$ with the partition function for its cohomology. But the cohomology of $A^\bu(S)$ is the physical multiplet $\fm$, together with a copy of the constants in degree 0. We end up with the equation\footnote{The minus sign in front of $Z_M(t)$ is arbitrarily chosen so that it will be cancelled by the parity of a fermionic pure spinor superfield $\Psi$, which is the case at hand when the constant mode of $\Psi$ represents the highest ghost of a connection which is an odd form, as in $D=10$ super-Yang--Mills theory and $D=11$ supergravity. If the connection on which the superfield is based is an even form, the multiplet will come out with reversed parity. This just means that the pure spinor superfield in those cases should be chosen to be bosonic.}
\begin{equation}
    1-Z_{\fm}(t)=Z_{C^\bu(\fn)}(t).
\end{equation}
Hence, extracting both level $1$ and level $2$ from $Z_S(t)$, will reveal the partition function for the multiplet $Z_S(t)=(1-t)^{\overline{ R_1}}(1-t^2)^{\overline{ R_2}}(1-Z_\fm(t))$.
(The initial $1$ represents constant cohomology, which by our conventions is not part of the multiplet.)
Note that this implies
\begin{align}
Z_{U(\fn)}(t)=\frac1{1-Z_{\overline{\fm}}(t)}\;,
\end{align}
which agrees with the partition function $Z_{T[\overline{\fm}]}(t)$ of the tensor algebra of $\overline{\fm}$, which is the universal enveloping algebra of the freely generated algebra on $\overline{\fm}$, $U(F[V])=T[V]$.
Note, however, that this does not show that $\fn=F[\overline{\fm}]$, since the partition function forgets about cohomological degree.

\numpar[sec:deformationprocedure][The procedure.] We now restrict to the case when we know $\fn=F[\overline{\fm}]$. We will demonstrate how $\widetilde{\ft}$ can be obtained by starting with the semidirect sum $\ft \inplus F[\overline{\fm}]$, and then introducing brackets in order to kill the unwanted cohomology. We use the fact that we know $\widetilde{\ft}$ as a vector space,
and rely heavily on the existence of the resolution.
We also use important pieces of the brackets: the supertranslation algebra, the transformation of $\overline{\fm}$ as a module of supertranslations, which lifts by tensor product to $F[\overline{\fm}]$, and the Lie superalgebra brackets of $F[\overline{\fm}]$.
In this way we keep track of entire $\ft$-modules. 
Concretely, if a $\ft$-cocycle $\eta$ appears in the Tate resolution, 
the module structure tells us that $\eta$ will be the part of a $\ft$-module $m_\eta$, of which $\eta$ is the part of lowest weight degree,
and the $\eta$ can be extended to a cocycle taking values in this module.
In the examples of \S\ref{sec:examples}, concrete expressions are given for various (low) degrees. This does not present a problem of existence, since we know that the Tate resolution exists and contains modules of the supertranslation algebra\footnote{This is true regardless of $\fn$ being free on $\overline{\fm}$.}.
Instead of performing a stepwise Tate resolution (by internal degree
or the stages of {\S}\ref{sec:tateres}), introducing new variables killing cohomology at each step, we want to deform the differential $s$ in a way that ``ties together'' the structure on the given vector space in order to eliminate cohomology, at the same time using the module structure inside $F[\overline{\fm}]$. Denote $\fa=\ft\inplus\fn=\ft\inplus F[\overline{\fm}]$.
Linear (infinitesimal) deformations will belong to
$H^\bullet(\fa,\fa)$, and in fact only to $H^\bullet(\fa,\fn)$.

\numpar[sec:doublecomplexdiff][The differential] The corresponding differential is written as
$s=s_0+s_1$, where $s_0$ encodes the supertranslation algebra and the supertranslation transformations of the modules in $F[\overline{\fm}]$, and $s_1$ encodes the brackets of $\fn=F[\overline{\fm}]$.
Then, $s$ is the coalgebra differential of the Lie superalgebra $\fa$.
Note that 
\begin{align}
s_0^2=0\;,\quad s_1^2=0\;,\quad s_0s_1+s_1s_0=0\label{eq:doublecomplex}
\end{align}
(the structure constants of $F[\overline{\fm}]$ are supertranslation invariant).

\numpar[sec:connectingH][The connecting cocycle.]
We note that stage 2 in the Tate resolution 
(see Table \ref{table:tatebigrading}) introduces the element in $\fm$ of lowest weight degree to kill the corresponding cohomology. 
Denote this supertranslation cohomology $\eta\in H^p(\ft)$, it
provides the ``seed'' of the cocycle. 
It generates stage 2 of the Tate resolution.
It can be lifted to
supertranslation cohomology $\omega_0\in H^p(\ft,\overline{\fm})$ taking values in the module $\overline{\fm}$, of which $\eta$ is the part of lowest weight degree.
In the bigrading of the Tate resolution depicted in Table \ref{table:tatebigrading}, $\eta$ appears somewhere on the stage 2 line and at cohomological degree $p$, consequently at weight degree
$p+2$. The $\ft$-module $m_\eta\subseteq M$, in which $\eta$ sits at the lowest weight degree, stretches to the right in the Table, to higher weight degrees. The cocycle $\omega_0$ will of course still contain a finite number of terms.

This introduces a certain $p$-bracket between supertranslation generators, $[D_{A_1},...,D_{A_p}]\in\overline{\fm}$, dually encoded in the differential as a term $\omega_0=c^{A_1}...c^{A_p}\frac{\*}{\*\psi^{A_1\ldots A_p}}$, $\psi\in\fm$, where $c^A$, $A=(\alpha,a)$, are supertranslation coalgebra $1$-forms, $c^\alpha=\lambda^\alpha$, $c^a=v^a$.
(We can formally think of the $\psi$'s thus obtained as dual to closed superspace $p$-forms $F_{A_1\ldots A_p}$ with $F_{\alpha_1\ldots\alpha_p}=0$; it is no coincidence that 
the ``connecting cocycle'' for models containing a $p$-form field strength is a $p$-cocycle, see \S\ref{sec:superspace}.)

\numpar[sec:doublecomplexomega][The deformation from the double complex.] 
The supersymmetric cocycle
$\omega_0$ does not yet provide an element of $H^p(\fa,\fa)$, since only 
$\{s_0,\omega_0\}=0$. We can now use that fact that  \eqref{eq:doublecomplex} provides a double complex. 
Since $s_0$ acts within the supermultiplets, and $s_1$ has the schematic form $s_1\sim ww\frac{\*}{\*w}$, where $w$
denotes $1$-forms in $\overline{\fn}[-1]$,
$s_0$ does not change the number of $w$'s, while $s_1$ increases it by $1$. We can therefore find the linear deformation 
$\omega=\omega_0+\omega_1+\ldots+\omega_q$ in
$H^p(A,\fn)$, for some $q\leq p$, by solving descent equations
\begin{align}
\{s_0,\omega_0\}&=0\;,\nn\\
\{s_0,\omega_1\}+\{s_1,\omega_0\}&=0\;,\nn\\
\ldots&\\
\{s_0,\omega_q\}+\{s_1,\omega_{q-1}\}&=0\;,\nn\\
\{s_1,\omega_q\}&=0\;.\nn
\end{align}
The part $\omega_i$ is dual to a $p$-bracket where $i$ entries are in $\fn$ and $p-i$ are supertranslation generators.

We used supertranslation cohomology as the starting point $\omega_0$ for the solution to the descent equations. The end point, $\omega_q$, represents cohomology of $s_1$. However, a freely generated algebra has no cohomology higher than $H^1$, in any 
module \cite{weibel_1994}. (It is also straightforward to see that a non-vanishing $H^1$ is supported in generation $1$.) We can easily observe that $q>0$, since $\{s_1,\omega_0\}\neq0$. Therefore, $q=1$. (Strictly speaking, there is a representative with $q=1$.)

Explicit forms for $\omega=\omega_0+\omega_1$ are of course highly model-dependent. 
This completes the linear deformation, yielding $p$-brackets
$[D_{A_1},\ldots,D_{A_p}]$ and $[D_{A_1},\ldots,D_{A_{p-1}},\phi]$.
It is interesting to note that it suffices to know the latter bracket for $\phi\in\overline{\fm}$, \ie, generation $1$. The cocycle condition states that $\omega_1$ encodes an outer derivation on $\fn=F[\overline{\fm}]$, so knowledge of its action on generation $1$ specifies the full action on $\fn$.

\numpar[sec:nonlinearomega][Non-linear deformation.] So far, we have considered linear deformations. The next question is to ask if $\omega^2=0$, or if the $(2p-1)$-identity requires the introduction of a $(2p-2)$-bracket dual to $\omega'$ with $\omega^2+\{s,\omega'\}=0$, and so on. 
From the examples we consider below, it seems that $\omega^2=0$ for non-gravitational models, although we do not have a proof. In $D=11$ supergravity, there is indeed a ``slight'' violation of $\omega^2=0$---the $5$-identity requires a $4$-bracket (which may not be too surprising, since the theory contains a $4$-form field strength). 

\subsection{Superspace formulation\label{sec:superspace}}
\hfill
\nopagebreak

\numpar[sec:Psigauge][Closed forms from pure spinor superfield cohomology.]
It is well established (see for example  \cite{Cederwall2013PureOverview}) that scalar pure spinor superfields 
$\Psi$
describe gauge theories (for connections of some form degree) on superspace, and reproduce closed superspace forms (field strengths) as gauge invariant cohomology. The component of $\Psi$ at appropriate cohomological degree (ghost number) is identified as the lowest-dimensional superspace connection component, \ie, $\Psi=\lambda^\alpha A_\alpha$ for a gauge connection, $\Psi=\frac12\lambda^\alpha\lambda^\beta A_{\alpha\beta}$ for a $2$-form connection, etc., while the leading $\lambda$-independent component represents the leading scalar ghost.

We have seen this picture emerging also in \S\ref{sec:freeconstruction}, where the connecting cocycle $\omega_0$ has
a dual interpretation as a closed superspace $p$-form,
\begin{align}
[D_{A_1},\ldots,D_{A_p}]=F_{A_1\ldots A_p}\;.
\label{eq:DDDF}
\end{align}
This is however not a standard way to introduce field strengths, 
except when $p=2$.
Note that the interpretation is quite formal; the right hand side of \eqref{eq:DDDF} is not the field strength, but a certain set of basis elements (generators) for the {\it dual} of the multiplet $\fm$.

\numpar[sec:tinftyinteractions][$\widetilde{\ft}$ and interactions.]
It is a remarkable observation 
that when $p=2$, \eg\ in $D=10$ super-Yang--Mills theory \cite{Movshev2004OnTheories,Movshev:2009ba},
this deformation leads to a deformation of the supertranslation algebra which is isomorphic to the superalgebra of gauge covariant derivatives $D^{(A)}=D+A$ together with the free algebra on the multiplet of the superspace field strengths. The free generation mimics the decoration of an abelian field strength with $\gl_\infty$ Chan--Paton factors, so any Cayley--Hamilton relations for specific gauge groups are absent; this is a ``master algebra''.
The same interpretation holds for any example with a BKM superalgebra which is freely generated from internal degree $3$, among which are our examples in \S\ref{sec:PhysicalBorcherds}.

The most remarkable property of this phenomenon is the fact that, although the positive BKM superalgebra $\widetilde{\ft}$ is a ``linear'' structure in the BV sense---it is the algebra which is responsible for the appearance of the linear supermultiplet---it encodes precisely the information that deforms the linear multiplet to an interacting one.
It is therefore an interesting question to ask what the corresponding non-linear structure on the supertranslations and the supermultiplet
means when $p>2$.

\numpar[sec:higherforms][Higher forms.]
Let us now consider arbitrary values of $p$.
Even if the geometric meaning of \eqref{eq:DDDF} remains somewhat unclear\footnote{We thank Christian S\"amann for communication on this issue.} for $p>2$, the deformation has one more part, namely $\omega_1$, encoding
the bracket $[D_{A_1},\ldots,D_{A_{p-1}},X]$, $X\in F[\overline{\fm}]$.
When $p>2$, there is no way to incorporate the connection in a gauge covariant derivative. It is possible, however, to do an analogous thing for a multi-derivative, and write
\begin{align}
[D_{A_1},\ldots,D_{A_{p-1}},X]=[A_{{A_1}\ldots A_{p-1}},X]
\label{eq:DDDX}
\end{align}
for some suitable interpretation of the bracket on the right hand side.
If this bracket obeys the Jacobi identity together with the bracket on
$\ft\inplus F[\overline{\fm}]$, the $(p+1)$-identity
\begin{align}
&[[D_{A_1},\ldots,D_{A_p}],X]+p[D_{[A_1},\ldots,D_{A_{p-1}},[D_{A_p]},X]]
\nn\\
&+p[D_{[A_1},[D_{A_2},\ldots,D_{A_p]},X]]
+\frac{p(p-1)}2[[D_{[A_1},D_{A_2}],D_{A_3},\ldots,D_{A_p]},X]=0
\end{align}
(sign factors omitted)
is automatically satisfied, thanks to 
$[D_A,D_B]=-T_{AB}{}^CD_C$
and
\begin{align}
F_{A_1\ldots A_p}=pD_{[A_1}A_{A_2\ldots A_p]}
+\frac{p(p-1)}2T_{[A_1A_2}{}^BA_{|B|A_3\ldots A_p]}\;.
\end{align}
This is not too surprising, since what we are effectively demanding is that the $p$-bracket with one field is an outer derivation of $F[\overline{\fm}]$, see \S\ref{sec:freeconstruction}.

The question is then how to define the $2$-bracket of \eqref{eq:DDDX}.
Concerning the action of supertranslations, it is obvious that the multiplet $\overline{\fm}$, the gauge invariant multiplet, can be continued to the left (lower internal degree) to include the components of the superspace connection. When $p=2$, $A_\alpha$ and $A_a$ occur in the same position as $D_\alpha$ and $D_a$, and the deformation is obtained by considering the sums, the gauge covariant derivatives. When $p\geq3$, the connections will appear at homological degree $p-2$, same as $\overline{\fm}$.
The algebra $\fn=F[\overline{\fm}]$ can be extended to lower internal 
degrees. $2$-brackets with elements in the new positions will indeed then be derivations of $F[\overline{\fm}]$, which are outer since the new elements are outside  $F[\overline{\fm}]$. The concrete realisation will be highly model-dependent, but exists by definition.
A schematic picture is given in Table \ref{table:ConnectionAlgebra} for a model with $p=3$, like the models in the examples of
\S\ref{sec:D6N20} and \S\ref{sec:D4N2}.

The algebra $\fn=F[\overline{\fm}]$ is now extended to $\underline{\fn}$, which as a vector space is $\fb\oplus\fn$, where $\fb$ is spanned by the potentials. Elements in $\fb$ act as outer derivations on $\fn$.
In the examples, in particular {\S}\ref{sec:D6N20} and {\S}\ref{sec:D11}, we will come back to the question whether further outer derivations at higher homological degree are obtained by repeatedly applying elements in $\fb$, and formulate conjectures essentially stating that this happens in supergravity but not otherwise. 

\begin{table}
    \centering
\begin{picture}(200,150)(150,-30)
\put(35,12){hom.}
\put(35,0){degree $=$}
\put(80,0){$0$}
\put(80,30){$1$}
\put(80,60){$\vdots$}
\put(50,-30){int. degree $=$}
\put(120,-30){$1$}
\put(160,-30){$2$}
\put(200,-30){$3$}
\put(240,-30){$4$}
\put(280,-30){$5$}
\put(320,-30){$6$}
\put(360,-30){$\cdots$}
\put(80,60){$\vdots$}
\put(100,-15){\line(0,1){90}}
\put(100,-15){\line(1,0){280}}
\multiput(200,0)(40,0){5}{$\cdot$}
\multiput(120,30)(40,0){1}{$\cdot$}
\multiput(360,30)(40,0){1}{$\cdots$}
\multiput(120,60)(40,0){7}{$\cdot$}
\put(115,0){$D_\alpha$}
\put(155,0){$D_a$}
\put(155,30){\textcolor{red!80!black}{$A_{\alpha\beta}$}}
\put(192,30){\textcolor{red!80!black}{$A_{a\beta}$}}
\put(222,30){\textcolor{red!80!black}{$A_{ab}$}, $F_{a\beta\gamma}$}
\put(275,30){$F_{ab\gamma}$}
\put(315,30){$F_{abc}$}
\put(220,15){\line(0,1){60}}
\put(220,15){\line(1,0){160}}
\end{picture}
    \caption{\it Generators in the double grading of $\widetilde{\ft}$, extended with connections, for a model with a superspace $2$-connection.  Only the generating set of $F[\overline{\fm}]$ is displayed. }
    \label{table:ConnectionAlgebra}
\end{table}
\numpar[sec:interpretation][Interpretation.]
An interpretation of the non-linear structure on the multiplet in terms of self-interaction seems difficult for higher gauge symmetries ($p\geq3$).
Indeed, to be physically sensible it would demand a model where the pure spinor superfield carries dimension $0$, \ie, where internal degree and physical dimension agrees, up to some conventional proportionality constant
(unless, of course, the interactions involve a dimensionful constant).
This happens when $p=2$, where the leading ghost is dimensionless, but not necessarily in models with higher gauge fields. In $D=11$ supergravity it does not hold.
In other words, self-interactions should, at least na\"\i vely, involve some modification of the
$2$-bracket $[D_\alpha,\cdot\,]$; this can then be understood as a deformation of the pure spinor superspace differential
${\mathcal D}=\lambda^\alpha D_\alpha$.
This does not exclude that relations like \eqref{eq:DDDF} can be given a deeper geometric meaning.
We will see that a certain such modification actually occurs in $D=11$ supergravity, and it can be seen as a sourcing of the physical multiplet by the Killing ghosts.

\vskip6pt
\section{\texorpdfstring{Properties of the ring $S$ and the multiplet $\fm$}{Properties of the ring S and the multiplet M}
\label{sec:Properties}}
\vskip8pt
There are a number of interesting properties that the pure spinor ring may or may not have in different cases, and similarly for the multiplet.
We will list some of them. Even if they ultimately all are properties of $S$, statements about $\fm$ distinguish themselves in being statements about $\fn=\widetilde{\ft}_{\geq3}$.

\underbar{Properties of $S$}

\begin{enumerate}[label=\protect\circled{\Alph*}]
    \item\label{property:minbasic} $Y$ is the closure of a minuscule minimal $G$-orbit (see {\S}\ref{sec:MinGorenstein});
    \item\label{property:min} $Y$ is the closure of a (set of) minimal G-orbit(s);
    \item\label{property:CI} $S$ is a complete intersection;
    \item\label{property:Gor} $S$ is Gorenstein;
    \item\label{property:C-M} $S$ is Cohen--Macaulay;
    \item\label{property:Kos} $S$ is Koszul.
 \end{enumerate}
For a discussion of the relevance of some of the properties of the ring $S$ to physical constructions involving the corresponding multiplet, we refer to~\cite{perspectivesonpurespin}. Some of the constructions there, in particular the construction of cotangent BV theories in examples where the canonical multiplet is off-shell, were anticipated in~\cite{MovshevBar}.
 
Implications between these properties are as follows:
 \begin{center}
    \begin{tikzcd}[row sep = 16 pt, column sep = 16 pt]
        \ref{property:minbasic}\ar[d]\ar[drr]
                &&\ref{property:CI}\ar[ddll]\ar[d]\\
        \ref{property:min}\ar[d]&&\ref{property:Gor}\ar[d]\\
        \ref{property:Kos}&&\ref{property:C-M}
    \end{tikzcd}
\end{center}

\underbar{Properties of the multiplet}

 \begin{enumerate}[resume,label=\protect\circled{\Alph*}]
     \item\label{property:coho1} $\fm$ has support only in (cohomological) degree $1$;
     \item\label{property:cohosingle} $\fm$ has support in a single degree;
     \item\label{property:infcohosingle} $\fm$ has $\infty$-dimensional support in at most one degree;
     \item\label{property:nfree} $\fn$ is freely generated;
     \item\label{property:freeinn} $\fn$ has a (non-abelian) freely generated subalgebra.
 \end{enumerate}
The implications between these statements are:
 \begin{center}
    \begin{tikzcd}[row sep = 16 pt, column sep = 16 pt]
        &\ref{property:coho1}\ar[d]\\
        &\ref{property:cohosingle}\ar[dl]\ar[dr]\\
        \ref{property:infcohosingle}\ar[dr,dashed]
        &&\ref{property:nfree}\ar[dl]\\
        &\ref{property:freeinn}
    \end{tikzcd}
\end{center}
The dashed arrow is conjectural. All examples we have encountered obey it. The example with $G_2\times SL(2)$ in \S\ref{sec:G2A1} has a non-free $\fn$ with a free subalgebra when the conventions otherwise used in the paper are adopted, in particular with the definition $\fn=\widetilde{\ft}_{\geq3}$. In this case it turns out to be well motivated to instead choose $\fn=\widetilde{\ft}_{\geq4}$, the conjectural arrow would then be 
\ref{property:infcohosingle} $\dashrightarrow$ \ref{property:nfree}.

It is striking how few implications can be made from a property in one group to one in the other.
A trivial exception is the vanishing of the multiplet being equivalent to $S$ being a complete intersection.
In particular, the physically motivated ``good'' property of free generation is difficult to relate to standard ``good'' properties of rings. 
For example, Gorenstein or Koszul are about the ``best'' properties one can demand from a ring. They are typically properties of some well behaved physical examples, such as $D=10$ super-Yang--Mills, {\S}\ref{sec:SYM}. Indeed, Gorenstein is necessary for a BV action without the introduction of a separate dual multiplet. Supergravity models are never Koszul, but SYM models are.
On the other hand, rings with both these properties, as the 
$E_6$ model of {\S}\ref{sec:E6}, may display pathological behaviour from a physical point of view, such as infinite-dimensional cohomology in more than one ghost number.

Avramov~\cite{Avramov:1982} made a very general conjecture about free subalgebras in Noetherian local rings. In our context it implies arrows
\ref{property:C-M} $\dashrightarrow$ \ref{property:freeinn}
$\dashleftarrow$ \ref{property:Kos}.
To our knowledge, the conjecture has neither been proven or falsified.
All our examples 
satisfy condition \ref{property:freeinn} (except for
$D=4,\, \mathcal{N}=4$, where $\mathfrak{n}$ is trivial, see
\S\ref{sec:D4N4}).
We have among our examples, \S\ref{sec:E6}, a minimal orbit under $E_6$, thus defining a BKM superalgebra. Property \ref{property:minbasic} holds, implying 
properties \ref{property:C-M} and \ref{property:Kos}.
Condition \ref{property:freeinn} also holds,
since $\mathscr{B}(E_5,\Lambda_1)_{\geq3}$, which is freely generated, is contained in $\mathscr{B}(E_6,\Lambda_1)_{\geq3}$
\cite{Kleinschmidt:2013em}.
However, the whole of $\mathfrak{n}$ is not freely generated
in the $E_6$ case unlike the other examples
considered below.

The properties of the different examples of
\S\ref{sec:examples} are listed in Table \ref{tab:examplesproperties}, in order of appearance in \S\ref{sec:examples}.

\renewcommand{\arraystretch}{1.5} 
\begin{table}[h]
    \centering
    \begin{tabular}{|l V{4}c|c|c|c|c|c V{4} c|c|c|c|c|}
    \hline
&\ref{property:minbasic}
&\ref{property:min}
&\ref{property:CI}
&\ref{property:Gor}
&\ref{property:C-M}
&\ref{property:Kos}
&\ref{property:coho1}
&\ref{property:cohosingle}
&\ref{property:infcohosingle}
&\ref{property:nfree}
&\ref{property:freeinn}\\
\hlineB{4}
$D=4$, $\N=4$ &No&No&Yes&Yes&Yes&Yes&---&---&---&---&--- \\
\hline
$D=10$, $\N=1$ &Yes&Yes&No&Yes&Yes&Yes&Yes&Yes&Yes&Yes&Yes \\
\hline
$SL(5)$&Yes&Yes&No&Yes&Yes&Yes&Yes&Yes&Yes&Yes&Yes \\
\hline
$D=6$, $\N=1$&No&Yes&No&No&Yes&Yes&Yes&Yes&Yes&Yes&Yes \\
\hline
$D=4$, $\N=1$&No&Yes&No&No&No&Yes&Yes&Yes&Yes&Yes&Yes \\
\hline
$D=6$, $\N=(2,0)$&No&No&No&No&Yes&No&No&Yes&Yes&Yes&Yes \\
\hline
$D=4$, $\N=2$&No&No&No&No&Yes&No&Yes&Yes&Yes&Yes&Yes \\
\hline
$D=11$&No&No&No&Yes&Yes&No&No&No&Yes&Yes&Yes \\
\hline
$E_6$&Yes&Yes&No&Yes&Yes&Yes&No&No&No&No&Yes \\
\hline
$G_2\times SL(2)$&No&No&No&No&No&No&No&No&Yes&No&Yes \\ 
\hline
    \end{tabular}
    \caption{\it Properties of $S$ and $\fm$ in the examples. Note that many ``Yes'' in group \ref{property:minbasic} - \ref{property:Kos} does not guarantee the same in group \ref{property:coho1} - \ref{property:freeinn}, and vice versa.}
    \label{tab:examplesproperties}
\end{table}

\vskip6pt 
\section{Examples\label{sec:examples}}
\vskip8pt

In all examples with physical multiplets below,
\S\ref{sec:PhysicalBorcherds}-\S\ref{sec:G2A1}, we use the notation and method for deriving the bracket structure of $\widetilde{\ft}$ introduced in \S\ref{sec:freeconstruction}.
We also use a convention where $1$-forms dual to bosonic and fermionic generators commute, a slight violation of our general conventions which can be changed to anticommutator by trivial sign changes.

\subsection{Trivial examples}
\hfill
\nopagebreak

\numpar[sec:trivex][Generic classes of trivial examples]
\nopagebreak
\begin{enumerate}
  \item Let $\ft_1$ and~$\ft_2$ be any vector spaces, and take $\ft$ to be abelian. Then $S = R$ is just a polynomial ring, and 
  $S^\perp=U(\ft_1)$ is the exterior algebra on $\ft_1$ (with shifted, bosonic, parity). 
  \item \label{item:complint} Consider any complete intersection, defined by a regular sequence of quadratic elements $f_i \in R$. This determines an algebra of the form $\ft$ by taking the  Chevalley--Eilenberg differential to be $d(a_i) = f_i$. The partition function is $Z_S(t)=(1-t)^{\overline{ R_1}}(1-t^2)^{\overline{ R_2}}$, and $\widetilde{\ft}=\ft$. The multiplet is empty; pure spinor superfield cohomology is de Rham cohomology. 
  \item \label{item:free} Let $\ft_2=\wedge^2\ft_1$. These are the first two degrees of $F[\ft_1]$. Then $S$ is empty from degree $2$. The partition function is $Z_S(t)=1+S_1t=1-R_1t$, and $\widetilde{\ft}=F[\ft_1]$.
\item Let $\ft_1 =-k$, $ \ft_2  = k$, with the bracket given by tensor product. This is a special case of both (\ref{item:complint}) and (\ref{item:free}).
The partition function is
$Z_S(t)=1+t=(Z_{F[-k]})^{-1}$.
\end{enumerate}

\numpar[sec:D4N4][$D=4$, $\N=4$ super-Yang--Mills.]
This example is trivial in the sense that it falls under item (\ref{item:complint}) above, $S$ is a complete intersection, and $\widetilde{\ft}=\ft$. $D=4$ is the highest dimension where this happens for maximally supersymmetric Yang--Mills theory.
The zero-mode cohomology then trivially becomes the de Rham complex in $D=4$.
\begin{table}[H]
\[
    \begin{tikzcd}[row sep = 8 pt, column sep = 8 pt]
        \Omega^0 \ar[dr,shorten=-1mm]  \\
        \cdot & \Omega^1 \ar[dr,shorten=-1mm] \\
        \cdot&\cdot&\Omega^2\ar[dr,shorten=-1mm] \\
        \cdot & \cdot & \cdot &\Omega^3 \ar[dr,shorten=-1mm]\\
        \cdot & \cdot & \cdot &\cdot&\Omega^4 
\end{tikzcd}
\]
\caption{\it Zero-mode cohomology for $D=4$, $\N=4$ SYM.
\label{table:D4N4}}
\end{table}
In Table \ref{table:D4N4} and all subsequent tables of zero-mode cohomology, the $\lambda$ expansion is vertical and the $\theta$ expansion horizontal.
In the present case the multiplet is empty; there is no local cohomology. Nevertheless, this pure spinor superfield can be used ``minimally coupled'' with a field describing the gauge-covariant multiplet based on the scalars, so that the equations of motion identify the field strength of the gauge connection in Table~\ref{table:D4N4} with the one in the scalar multiplet
\cite{Cederwall:2017cez}.
There is a natural (contragredient) extension of $\widetilde{\ft}$ to non-positive levels, which is the superconformal algebra
$\su(4|4)$.

\subsection{Physical examples with BKM superalgebras\label{sec:PhysicalBorcherds}}
\hfill
\nopagebreak

The dual BKM superalgebras appearing in this subsection are all freely generated from internal degree $3$. They all correspond to multiplets concentrated in a single ghost number. In all cases, the Koszul duality is the standard quadratic one. The algebras $\widetilde{\ft}$ are isomorphic to algebras of gauge covariant derivatives and field strengths. As we will see,
the examples in {\S}\ref{sec:SYM}, {\S}\ref{sec:D6N1} and {\S}\ref{sec:D4N1} are related to the exceptional Lie algebras $E_8,E_7$ and $E_6$, respectively.

\numpar[sec:SYM][Ten-dimensional minimal supersymmetry.]
This is the standard example for pure spinor superfield theory, possessing all the good properties that allow for a supersymmetric BV theory of Chern--Simons type \cite{Cederwallspinorial,Berkovits2000Super-PoincareSuperstring,Berkovits2001CovariantSpinors,Movshev2004OnTheories,Movshev:2009ba,GorbounovSchechtman}. 
$S$ is a Gorenstein ring of odd codimension ($5$).
Of particular interest for us is the construction of the Koszul dual algebra $\widetilde{\ft}$, which we know is the BKM superalgebra
${\mathscr B}(D_5,\Lambda_5)$ (Figure \ref{fig:be5}) \cite{Cederwall2015SuperalgebrasFunctions}, and which Movshev and Schwarz \cite{Movshev2004OnTheories,Movshev:2009ba} showed is isomorphic to the superalgebra of superspace gauge covariant derivatives and field strengths of (interacting) super-Yang--Mills theory. We will rephrase that construction in the framework of the present paper.
The algebra is free from internal degree $3$.
\begin{figure}[H]
    \centering
\begin{picture}(170,85)(45,-15)
\multiput(50,10)(40,0){5}{\circle{10}}
\put(130,50){\circle{10}}
\multiput(55,10)(40,0){4}{\line(1,0){30}}
\put(208,-10){$1$}
\put(168,-10){$2$}
\put(128,-10){$3$}
\put(88,-10){$5$}
\put(130,15){\line(0,1){30}}
\put(145,45){$4$}
\put(50,10){\line(1,1){3.5}}
\put(50,10){\line(1,-1){3.5}}
\put(50,10){\line(-1,1){3.5}}
\put(50,10){\line(-1,-1){3.5}}
\end{picture}
    \caption{\it Dynkin diagram for $\mathscr{B}(D_5,\Lambda_5)$}
    \label{fig:be5}
\end{figure}
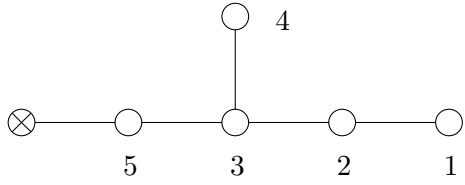
The partition function is
\begin{align}
Z_S(t)&=\bigoplus_{n=0}^\infty(0000n)t^n\nn\\
    &=(1-t)^{-(00001)}\otimes\Bigl(\Bigr.  \label{D5partfn}
        (00000)\ominus(10000)t^2\oplus(00010)t^3\\
    &\hspace{32mm}    \ominus(00001)t^5
        \oplus(10000)t^6\ominus(00000)t^8
    \Bigl.\Bigr)\nn\\
    &=(1-t)^{-(00001)}\otimes(1-t^2)^{(10000)}\!\otimes\!
    \Bigl[
        (00000)\ominus\bigoplus_{i=0}^\infty
        \left(
        \ominus(i0010)t^{3+2i}\oplus(i1000)t^{4+2i}
        \right)
    \Bigr]\;,\nn
\end{align}
where the Dynkin labels are given in the order indicated by the numbering of the nodes in the Dynkin diagram in Figure~\ref{fig:be5}.

The first form of the partition function in (\ref{D5partfn}) expresses that the pure spinor variety is a minimal orbit, the second gives the zero mode cohomology and the third the multiplet, the on-shell $D=10$ super-Yang--Mills fields in a derivative expansion. From the exponents of the first two factors in the
the third form we can read of the $D_5$ representations $R_1=(00010)$ and $R_2=(10000)$, corresponding to $\ft_1$ and $\ft_2$.
Below, these are denoted by their dimensions in bold, $R_1={\bf 16}$ and $R_2={\bf 10}$. Continuing the factorisation, we would get $R_3=\overline{\bf 16}$ and $R_4=\overline{\bf 45}$ (the adjoint). The fact
that $\overline{ R_1}=R_3$ and $\overline{ R_2}=R_2$ agrees
with the general symmetry described in {\S}\ref{sec:bkmsym}, considering ${\mathscr B}(D_5,\Lambda_5)$ as ${\mathscr B}(E_8{},\theta)^{(-3)}$.

\begin{table}[H]
\[
    \begin{tikzcd}[row sep = 8 pt, column sep = 8 pt]
        {\bf1} \ar[dr,shorten=-1mm]  \\
        \cdot & {\bf10} \ar[drrr,controls={+(1,-1) and +(-1,1)}] 
        & {\bf16}\ar[dr,shorten=-1mm] \\
        \cdot & \cdot & \cdot &\overline{\bf16} & {\bf10}\ar[dr,shorten=-1mm]\\
        \cdot & \cdot & \cdot &\cdot&\cdot &{\bf1}
\end{tikzcd}
\]
\caption{\it Zero-mode cohomology for $D=10$ SYM.}
\end{table}

The supertranslation algebra is encoded in the differential $s_0$,
\begin{align}
    s_0v^a= (\lambda\gamma^a\lambda)\;.
\end{align}
The action on the on-shell multiplet is encoded in
\begin{align}
    s_0\psi_\alpha&=0\;,\nn\\
    s_0 f_{ab}&=-(\lambda\gamma_{ab}\psi)\;,\label{eq:SYM10transf}\\
    s_0\psi_{a\alpha}&=v_a\psi_\alpha
        +(\gamma^b\lambda)_\alpha f_{ab}-(\gamma\hbox{-trace})\;,\nn\\
    &\cdots\nn
\end{align}
Note that here, as in all subsequent examples, we are representing the multiplet by basis elements in the coalgebra. These basis elements are in the modules conjugate to the ones of the component fields. Also, there is no $x$-dependence; derivatives of the fields in the multiplet are represented as separate elements. The coalgebra element corresponding to a derivative of the spinor $\*_a\Psi^\alpha$ is $\psi_{a\alpha}$. The third relation in  \eqref{eq:SYM10transf} states among other things that
on the algebra side
$[D_a,\Psi^\alpha]=\Psi_a{}^\alpha$, where $\gamma^a_{\alpha\beta}\Psi_a{} ^\beta=0$.

As detailed in \S\ref{sec:freeconstruction}, the action of $\ft$ on $\overline{\fm}$, specified by $s_0$ above, is naturally extended to 
$F[\overline{\fm}]$. Together with $s_1$, the differential on $C^\bullet(F[\overline{\fm}])$, $s_0+s_1$ is the differential on
$C^\bullet(\ft\inplus F[\overline{\fm}])$.

The connecting cocycle is $\eta_\alpha=v^a(\gamma_a\lambda)_\alpha$. It is closed thanks to the identity 
$(\lambda\gamma_a\lambda)(\gamma^a\lambda)_\alpha=0$, which is trivially true thanks to $\vee^3{\bf16}\not\supset\overline{\bf16}$. It is lifted to a supersymmetric cocycle $\omega_0\in H^2(\ft,\overline{\fm})$ as
\begin{align}
    \omega_0\psi_\alpha&=v^a(\gamma_a\lambda)_\alpha\;,\nn\\
    \omega_0 f_{ab}&=v_av_b\;.\label{eq:omega0-10}
\end{align}
The construction precisely mimics the deformation of $D_\alpha\in\ft_1$, $D_a\in\ft_2$ to gauge covariant derivatives by including connections $A_A$, $A=(\alpha,a)$ and demanding $F_{\alpha\beta}=0$. Then, \eqref{eq:omega0-10} is dual to
$[D_A,D_B]=F_{AB}$. This is the standard superspace procedure for introducing interaction in super-Yang--Mills theory. Here, the same algebra is responsible for defining the states in the linear theory.
There is of course also an $\omega_1$, defined as in \S\ref{sec:freeconstruction}, which makes $\omega=\omega_0+\omega_1$ a cocycle in $H^2(\ft\inplus F[\overline{\fm}],F[\overline{\fm}])$.
The first example of a term in $\omega_1$ acts on $\psi_{\alpha,\beta}$, which is the generation $2$ element at weight degree $6$ with 
$s_1\psi_{\alpha,\beta}=\psi_\alpha\psi_\beta$. 
One then gets\footnote{The seemingly strange number $\frac{16}{9}$ is a consequence of 1) decomposing the derivative of $F$ as
$[D_a,F_{bc}]=(DF)_{(11000)}-\frac29\eta_{a[b}[D^d,F_{c]d}]$, and
2) the Fierz decomposition 
$[\Psi^\alpha,\Psi^\beta]
=\frac{1}{16}\gamma_a^{\alpha\beta}\gamma^a_{\gamma\delta}[\Psi^\gamma,\Psi^\delta]+\ldots$}
\begin{align}
\omega_1\gamma_a^{\alpha\beta}\psi_{\alpha,\beta}
=\frac{16}9v^bf_{ab}-\frac{20}9\lambda^\alpha\psi_{a\alpha}\;.
\end{align}
This encodes the sourcing of the gauge field by the fermions, the dual form of the first term being 
\begin{align}
    [D^b,F_{ab}]=\frac12\gamma_{a\alpha\beta}[\Psi^\alpha,\Psi^\beta]
\end{align}
(with some normalisation constant).
Seen from the perspective of gauge covariant derivatives 
the occurrence of such terms is obvious; repeated action of covariant derivatives gives curvatures, this one arises from
the Jacobi identity of the formally undeformed 
$\gamma^a_{\alpha\beta}[D_a,\Psi^\beta]=0$ with a $D_\alpha$.
We can think of the algebra as a master algebra for super-Yang--Mills theory with any gauge group, so that all fields are decorated with (invisible) $\gl_\infty$ Chan--Paton factors. The absence of Cayley--Hamilton relations for finite-dimensional gauge groups corresponds to the free generation.

\numpar[sec:SL5][\texorpdfstring{$SL(5)$, or twisted eleven-dimensional, supersymmetry.}{SL(5), or twisted eleven-dimensional, supersymmetry.}]

The formulation of this multiplet was given in 
\cite{Galvez2016,Cederwall:2021ejp}. It appears as a twisting of $D=11$ supergravity \cite{Raghavendran:2021qbh}.

$S$ is a Gorenstein ring of odd codimension ($3$). It is the algebra of functions on a c\^one over the Grassmannian $Gr(2,5)$ of $2$-planes in $5$ dimensions, a minimal $SL(5)$ orbit. Therefore, the Koszul dual Lie superalgebra is ${\mathscr B}(A_4,\Lambda_2)$, see Figure \ref{fig:be4}.
\begin{figure}[h]
    \centering
\begin{picture}(130,70)(45,0)
\multiput(50,10)(40,0){4}{\circle{10}}
\put(90,50){\circle{10}}
\multiput(55,10)(40,0){3}{\line(1,0){30}}
\put(90,15){\line(0,1){30}}
\put(50,10){\line(1,1){3.5}}
\put(50,10){\line(1,-1){3.5}}
\put(50,10){\line(-1,1){3.5}}
\put(50,10){\line(-1,-1){3.5}}
\end{picture}
   \caption{\it Dynkin diagram for $\mathscr{B}(A_4,\Lambda_2)$.}
    \label{fig:be4}
\end{figure}
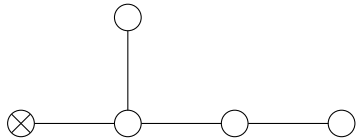
The partition function is
\begin{align}
    Z_S(t)&=\bigoplus_{n=0}^\infty(00n0)t^n\nn\\
    &=(1-t)^{-(0010)}\Bigl( 
        (00000)\ominus(1000)t^2\oplus(0001)t^3\ominus(00000)t^5\Bigr)\\
    &=(1-t)^{-(0010)}\otimes(1-t^2)^{(1000)}\otimes
    \Bigl[
        (0000)\ominus\bigoplus_{i=0}^\infty
        \left(
        \ominus(i001)t^{3+2i}\oplus(i100)t^{4+2i}
        \right)
    \Bigr]\;.\nn
\end{align}
where the Dynkin labels are given in the same order as
in the preceding example.

The first form expresses the pure spinor variety as a minimal orbit, the second gives the zero mode cohomology and the third the multiplet. It contains a fermionic divergence-free vector and a closed $2$-form.
This is the (parity-reversed) content of the superalgebra $E(5,10)$. In fact, ${\mathscr B}(A_4,\Lambda_2)$ is freely generated from internal degree $3$ by the coadjoint module of $E(5,10)$
\cite{Cederwall:2021ejp}.

\begin{table}[H]
\[
    \begin{tikzcd}[row sep = 8 pt, column sep = 8 pt]
        \bf1 \ar[dr,shorten=0mm]  \\
        \cdot & \bf5 & \overline{\bf5}\ar[dr,shorten=0mm] \\
        \cdot & \cdot & \cdot & \bf1 
\end{tikzcd}
\]
\caption{\it Zero-mode cohomology for the $SL(5)$ model.}
\end{table}

Supertranslations are expressed as
\begin{align}
    s_0v^m&=\frac18\epsilon^{mnpqr}\lambda_{np}\lambda_{qr}\;,\nn\\
    s_0\psi_m&=0\;,\nn\\
    s_0f^{mn}&=\frac12\epsilon^{mnpqr}\lambda_{pq}\psi_r\;,\\
    s_0\psi^m{}_n&=v^m\psi_n+\lambda_{np}f^{mp}-\hbox{(trace)}\;,\nn\\
    &\cdots\nn
\end{align}

The connecting cohomology from the supertranslation algebra to the
module is $\eta_m=\lambda_{mn}v^n$. It is extended to the supersymmetric cocycle $\omega_0$ as
\begin{align}
    \omega_0\psi_m&=\lambda_{mn}v^n\;,\nn\\
    \omega_0f^{mn}&=v^mv^n\;.
\end{align}
The lowest term in $\omega_1$ is
\begin{align}
    \omega_1\psi_{m,n}=2\lambda_{p(m}\psi^p{}_{n)}\;,
\end{align}
where $\psi_{m,n}$ is the generation $2$ coalgebra element with
$s_1\psi_{m,n}=\psi_m\psi_n$. Unlike the $D=10$ SYM example, {\S}\ref{sec:SYM}, there is no room for a $vf$ 
term; the field strength $F_{mn}$ only satisfies a Bianchi identity which remains homogeneous.

Similar constructions are certainly possible also for BKM superextensions of ``$E_n$'', $n<4$, and may be related to other exceptional Lie superalgebras, \eg\ $E(3,6)$.

\numpar[sec:D6N1][Six-dimensional minimal supersymmetry.]

The degree $0$ automorphism group is $SL(4)\times SL(2)$, and we use standard Dynkin labels. 
Let $\lambda\in(001)(1)=({\bf4},{\bf2})$. Since
\begin{align}
\vee^2(001)(1)=(002)(2)\oplus(010)(0)=({\bf10},{\bf3})\oplus({\bf6},{\bf1})\,,
\end{align}
the pure spinor constraint defines a minimal orbit, and the Koszul dual algebra $\widetilde{\ft}$ is the positive-degree part of a BKM superalgebra, with the Dynkin diagram of Figure \ref{fig:ba3a1}.

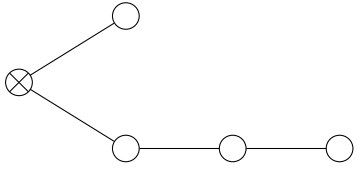
\begin{figure}[h]
    \centering
\begin{picture}(130,80)(45,0)
\multiput(90,10)(40,0){3}{\circle{10}}
\put(90,60){\circle{10}}
\put(50,35){\circle{10}}
\multiput(95,10)(40,0){2}{\line(1,0){30}}
\put(54.1,37.6){\line(40,25){31.5}}
\put(54.1,32.4){\line(40,-25){31.5}}
\put(50,35){\line(1,1){3.5}}
\put(50,35){\line(1,-1){3.5}}
\put(50,35){\line(-1,1){3.5}}
\put(50,35){\line(-1,-1){3.5}}
\end{picture}
    \caption{\it Dynkin diagram for $\mathscr{B}(A_3\oplus A_1,\Lambda_3+\Lambda_4)$}
    \label{fig:ba3a1}
\end{figure}

The diagram is drawn this way in order to distinguish the BKM superalgebra from the finite-dimensional superalgebra $\mathfrak{sl}(2|4)$ (the latter has a Cartan matrix with sign changes across the grey node).
In the same way as the BKM superalgebra in {\S}\ref{sec:SYM}
was described as $\mathscr B(E_8,\theta)^{(-3)}$
this BKM superalgebra can also be described as
$\mathscr B(E_7,\theta)^{(-3)}$.
$S$ is not Gorenstein, but Cohen--Macaulay.

It is well known that the superspace gauge theory with this supersymmetry yields the off-shell multiplet, with a triplet of auxiliary fields. This is of course reproduced by the pure spinor superfield formalism
\cite{Cederwall:2008zv,Cederwall:2017ruu}.
The partition function is
\begin{align}
Z_S(t)&=\bigoplus_{n=0}^\infty(00n)(n)t^n\\
&=(1-t)^{-(001)(1)}\otimes\left(
(000)(0)\ominus(010)(0)t^2\oplus(100)(1)t^3\ominus(000)(2)t^4
\right)\;,\nn
\end{align}
from which the off-shell multiplet of Table \ref{D6SYMzero} is read.
Obviously, extracting also degree $2$ from the partition function gives the multipet in the same way as in {\S}\ref{sec:SYM}, but without equations of motion in the absence of antifields. (The antifields sit in a separate pure spinor superfield in the module $(000)(2)$ \cite{Cederwall:2017ruu}.)

\begin{table}[H]
\[
    \begin{tikzcd}[row sep = 8 pt, column sep = 8 pt]
        \bf1 \ar[dr,shorten=-.5mm]  \\
        \cdot & \bf(6,1) & \bf(4,2) & \bf(1,3)\\
\end{tikzcd}
\]
\caption{\it Zero-mode cohomology for $D=6$, $N=1$ SYM.\label{D6SYMzero}}
\end{table}

The procedure for going from $\ft\inplus F[\overline{\fm}]$ by introduction of the connecting cocycle parallels the one in {\S}\ref{sec:SYM}, so we will not display it in detail. Again, the BKM superalgebra, which is freely generated from internal degree $3$, is isomorphic to the algebra of gauge covariant superspace derivatives and field strengths.

\numpar[sec:D4N1][Four-dimensional minimal supersymmetry.]
The R symmetry is $U(1)$. We denote modules by $\su(2)\oplus\su(2)$ Dynkin labels (or dimensions in boldface) with $\fu(1)$ charge as subscript.
Let $\lambda\in(1)(0)_{-1}\oplus(0)(1)_1=({\bf2},{\bf1})_{-1}\oplus({\bf1},{\bf2})_1$.
The partition function yields the zero-mode cohomology of Table \ref{D4N1zero}. The multiplet is off-shell, including the auxiliary scalar field.

The ring $S$ is not Cohen--Macaulay, due to the pure spinor variety consisting of two components. This makes the construction of the anti-field multiplet more subtle \cite{perspectivesonpurespin}.

\begin{figure}[h]
    \centering

\begin{picture}(50,70)(45,0)
\multiput(50,10)(40,0){2}{\circle{10}}
\multiput(50,50)(40,0){2}{\circle{10}}
\multiput(55,10)(0,40){2}{\line(1,0){30}}
\multiput(48.5,15)(3,0){2}{\line(0,1){30}}
\multiput(50,10)(0,40){2}{\line(1,1){3.5}}
\multiput(50,10)(0,40){2}{\line(1,-1){3.5}}
\multiput(50,10)(0,40){2}{\line(-1,1){3.5}}
\multiput(50,10)(0,40){2}{\line(-1,-1){3.5}}
\end{picture}
    \caption{\it Dynkin diagram for the BKM superalgebra associated to $D=4$, $N=1$ SYM.}
    \label{fig:bsomething}
\end{figure}
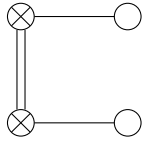

The procedure for constructing the BKM superalgebra as an extension of the supertranslation algebra follows the same lines as in {\S}\ref{sec:SYM} and {\S}\ref{sec:D6N1}, and has the same superspace interpretation. Accordingly,
the BKM superalgebra with Dynkin diagram in Figure
\ref{fig:bsomething}
could be denoted
$\mathscr B(E_6,\theta)^{(-3)}$, indicating that
it is ``three steps away'' from 
$\mathscr B(E_6,\theta)$, where in turn changing the grey node to a white one would give the affine extension of $E_6$. However, this case is different compared
to the previous ones in that we get {\it two} grey nodes in the last step of the ``oxidisation'' procedure
going back from $\mathscr B(E_6,\theta)$ to $\mathscr B(E_6,\theta)^{(-3)}$ \cite{Kleinschmidt:2013em}.
Also, from the point of view of this procedure, it is natural to connect the grey nodes by a double line without arrows,
meaning that both the corresponding off-diagonal entries in the Cartan matrix are equal to $-2$. However, any negative integer would give the same
algebra at positive degrees, as discussed in
\cite{Gomis:2018xmo}. Only when both positive and negative degrees are considered, different choices for the Cartan matrix give different algebras.
In \cite{Gomis:2018xmo} this algebra was studied in further detail.

\begin{table}[H]
\[
    \begin{tikzcd}[row sep = 8 pt, column sep = 8 pt]
        {\bf(1,1)}_0 \ar[dr,shorten=-.5mm]  \\
        \cdot & {\bf(2,2)}_0 & {\bf(2,1)}_1\oplus{\bf(1,2)}_{-1} & {\bf(1,1)}_0\\
\end{tikzcd}
\]
\caption{\it Zero-mode cohomology for $D=4$, $N=1$ SYM.
\label{D4N1zero}}
\end{table}

\subsection{Examples involving higher brackets\label{sec:HigherBrackets}}
\hfill
\nopagebreak

\numpar[sec:D6N20][\texorpdfstring{Six-dimensional $\N=(2,0)$ supersymmetry.}{Six-dimensional N=(2,0) supersymmetry.}]

The R-symmetry is $\mathfrak{so}(5)$. We use a notation with $D=6$ Pauli matrices $\gamma^a$ and $\mathfrak{so}(5)$ gamma matrices $\sigma^i$ acting on $16$-component spinors of $\mathfrak{so}(6)\oplus\mathfrak{so}(5)$. $\gamma^a$ and $\sigma^i$ commute.
The symmetric products of a spinor are
$(\lambda\gamma^a\lambda)$, $(\lambda\gamma^a\sigma^i\lambda)$ and 
$(\lambda\gamma^{abc}\sigma^{ij}\lambda)$.
Therefore, the pure spinor variety is not a minimal orbit.

The supertranslation algebra is dual to
\begin{align}
    s_0v^a&=(\lambda\gamma^a\lambda)\;.
\end{align}

\begin{table}[H]
\[
    \begin{tikzcd}[row sep = 4 pt, column sep = 4 pt]
        (\bf1,1) \ar[dr] &  \\
        \cdot & (\bf6,1) \ar[dr,end anchor={[xshift=-2ex]}]  \\
        \cdot & \cdot & (\bf15,1)\ar[dr,start anchor={[xshift=-6ex,yshift=.5ex]}, bend right=10]\oplus(1,5)\ar[drrr, bend left = 40,start anchor={[xshift=1ex]}] &  (\bf4,4) \ar[dr]  \\
        \cdot & \cdot & \cdot & (\overline{\bf10},\bf1) & (\overline{\bf4},\bf4) & (\bf1,5)
\end{tikzcd}
\]
\caption{\it Zero-mode cohomology for the $D=6$, $\N=(2,0)$ model.}
\end{table}

Supertranslations on the on-shell supermultiplet $\overline{\fm}$ of linear observables are encoded in
\begin{align}
    \sm\phi^i&=0\;,\nonumber\\
    \sm\psi^\alpha&=(\sigma_i\lambda)^\alpha\phi^i\;,\nonumber\\
    \sm g_{abc}&=(\lambda\gamma_{abc}\psi)\;,\\
    \sm\phi_a{}^i&=v_a\phi^i-(\lambda\gamma_a\sigma^i\psi)\;,\nonumber\\
    \sm\psi_a{}^\alpha&=v_a\psi^\alpha+(\sigma_i\lambda)^\alpha\phi_a{}^i
    +\frac14(\gamma^{bc}\lambda)^\alpha g_{abc}-(\gamma\hbox{-trace})\;,
    \nonumber\\
    \ldots\nonumber
\end{align}
Higher components represent $x$-derivatives of on-shell component fields, from which their transformations follow.

The supersymmetric cocycle $\omega_0$ in $H^3(\ft,\overline{\fm})$ is
\begin{align}
    \omega_0\phi^i&=-\frac12(\lambda\gamma^a\sigma^i\lambda)v_a\;,\nn\\
    \omega_0\psi^\alpha&=\frac14(\gamma^{ab}\lambda)^\alpha v_av_b\;,\\
    \omega_0g_{abc}&=v_av_bv_c+\hbox{dual}\;.\nn
\end{align}
$\omega_0$ encodes the $3$-bracket between three supertranslation generators, $[D_A,D_B,D_C]$. It is not a cocycle in $H^3(\mathfrak{t}\inplus F[\overline{ \fm}],F[\overline{\fm}])$, but the starting point of one. We know that the full $3$-cocycle is $\omega=\omega_0+\omega_1$.
We have not solved for the complete explicit form of $\omega_1$. Its first term is
\begin{align}
    \omega_1\phi^{i,j}=(\lambda\gamma^a\sigma^{(i}\lambda)\phi_a{}^{j)}
     -\frac16\delta^{ij}(\lambda\gamma^a\sigma_k\lambda)\phi_a{}^k
     -\frac16\delta^{ij}c^a(\lambda\gamma_a\psi)\;,
     \label{eq:outderphi}
\end{align}
where $\phi^{i,j}$ is the coalgebra element dual to $[\Phi_i,\Phi_j]$,
$\sF\phi^{i,j}=\phi^i\phi^j$.
Note that knowledge of $[D_A,D_B,\Phi]$ for $\Phi$ in $\overline{\fm}$ gives information of the $3$-bracket with one entry in the full $F[\overline{\fm}]$. They form a set of outer derivations of $F[\overline{\fm}]$.

So far, we have constructed a linear deformation of $s$ as a $3$-bracket differential $\omega$. Will $\omega$ represent a ``total deformation'', \ie, is also $\omega^2=0$, or will a $4$-bracket (and possibly higher) be generated?
Obviously, $\omega_0^2=0$. An identity $\{\omega_0,\omega_1\}=0$ would correspond to a $5$-identity $[D_{[A},D_B,[D_C,D_D,D_{E]}]]=0$. A quick inspection of the elements in the third generation of $F[\overline{\fm}]$ gives at hand that this identity holds---the content of $\mathfrak{so}(6)\oplus\mathfrak{so}(5)$ modules at the respective internal degrees does not allow any non-zero result of the $5$-identity. Thus, there is no $4$-bracket $[D_A,D_B,D_C,D_D]$. The only possibility would be a $4$-bracket $[D_A,D_B,D_C,\Phi]$ compensating for a possible
non-vanishing of $\omega_1^2$, and $4$-brackets with more than one entry in $F[\overline{\fm}]$ vanishing by the same argument as for the $3$-bracket.

Let us rephrase the above in the superspace formulation of \S\ref{sec:superspace}. 
We would then let $[D_A,D_B,D_C]=G_{ABC}$ and 
$[D_A,D_B,\Phi]=[B_{AB},\Phi]$, $\Phi\in F[\overline{\fm}]$, for superspace forms $G=dB$.
A 4-bracket would correspond to 
$[D_A,D_B,D_C,D_D]=H_{ABCD}$, $[D_A,D_B,D_C,\Phi]=[C_{ABC},\Phi]$.
The $5$-identity with all $D$'s now reads (modulo unimportant numerical coefficients)
$dH+[B,G]=0$, where antisymmetrisation (wedge product) is understood in the second term.
It is consistent with the $5$-identity with one element in $F[\overline{\fm}]$, which translates to
$[H+dC+[B,B],\Phi]=0$.
Note that the ``fields'' in the multiplet carry the opposite statistics compared to a physical multiplet---for example, $G_{abc}=g_{abc}$ is fermionic in the coalgebra---so the bracket in the last formula is non-trivial.
We have not checked that it actually vanishes. This requires at least a few more levels of the action of the outer derivations than given in
\eqref{eq:outderphi}. The supposed absence of any meaningful $4$-form observables in the physical multiplet leads us to believe that
$[B_{[AB},B_{CD]}]=0$ (as an outer derivation),
a statement that should be possible to prove with some more work.
In {\S}\ref{sec:D11}, we will see an example where both $3$- and $4$-forms indeed are present, and where the $4$-bracket arises precisely by the mechanism sketched here.  

We conjecture that the $3$-bracket $\omega$ is a total deformation, and that $d_{CE}=s+\omega$.

The multiplet is conformal. It is straightforward to check that in all steps in our construction the $\ft$-module property is lifted to a superconformal module property, where the superconformal algebra 
is the contragredient extension of $\ft$ to non-positive levels,
$\osp(8|4)$.

\numpar[sec:D4N2][\texorpdfstring{Four-dimensional $\N=2$ supersymmetry}{Four-dimensional N2 supersymmetry}]
The multiplet obtained from a scalar pure spinor superfield with $\mathcal{N}=2$ supersymmetry in $D=4$ is not a super-Yang--Mills multiplet, 
but a version of the hypermultiplet where one scalar is dualised to a $2$-form. This is an off-shell multiplet. Fields, and generators at all levels in the algebra, transform under $\mathfrak{so}(4)\oplus\mathfrak{su}(2)\oplus\mathfrak{u}(1)
\simeq\mathfrak{su}(2)\oplus\mathfrak{su}(2)\oplus\mathfrak{su}(2)\oplus\mathfrak{u}(1)$.
We use $SL(2,\mathbb{C})$ notation for the first two $\mathfrak{su}(2)$'s.
The physical fields are a real $2$-form $B$ with fields strength $G$ in
${\bf(2,2,0)}_0$, a triplet of real scalars $\Phi$ in ${\bf(1,1,3)}_0$ (represented as an antihermitean $(2\times2)$-matrix) and a complex spinor $\Psi_{\alpha i},\overline{\Psi}_{\dot\alpha}{}^i$ in ${\bf(2,1,2)}_1\oplus{\bf(1,2,2)}_{-1}$. In addition, there is an auxiliary complex scalar field $H,\overline{ H}$ in 
${\bf(1,1,1)}_2\oplus{\bf(1,1,1)}_{-2}$. Subscripts denote $\mathfrak{u}(1)$ charge.
Elements in the coalgebra are denoted by the corresponding lowercase letters.

The dual to the supertranslation algebra is
\begin{align}
    \sm v^{\dot\alpha\alpha}&=\overline{\lambda}^{\dot\alpha}{}_i\lambda^{\alpha i}\;.
\end{align}

\begin{table}[H]
\[
    \begin{tikzcd}[row sep = 4 pt, column sep = 10 pt]
        {\bf(1,1,1)}_0 \ar[dr] &  \\
        \cdot & {\bf(2,2,1)}_0 \ar[dr]  \\
        \cdot & \cdot & \begin{matrix}[1]{\bf(3,1,1)}_0\\\oplus{\bf(1,3,1)}_0\\\oplus{\bf(1,1,3)}_0\end{matrix} &  
        \begin{matrix}[1]{\bf(2,1,2)}_1\\\oplus{\bf(1,2,2)}_{-1}\end{matrix}
        &\begin{matrix}[1]{\bf(1,1,1)}_2\\\oplus{\bf(1,1,1)}_{-2}\end{matrix}\\
\end{tikzcd}
\]
\normalsize
\caption{\it Zero-mode cohomology for the $D=4$, $\N=2$ dualised hypermultiplet.}
\end{table}

The supertranslation transformations of the cogenerators in the multiplet are encoded in
\begin{align}
    \sm\phi^i{}_j&=0\;,\nn\\
    \sm\psi^{\alpha i}&=\phi^i{}_j\lambda^{\alpha j}\;,\nn\\
    \sm\phi^{\dot\alpha\alpha,i}{}_j&=v^{\dot\alpha\alpha}\phi^i{}_j
        +\lambda^{\alpha i}\overline{\psi}^{\dot\alpha}{}_j-\psi^{\alpha i}\overline{\lambda}^{\dot\alpha}{}_j\;,\nn\\
    \sm g^{\dot\alpha\alpha}&=\overline{\lambda}^{\dot\alpha}{}_i\psi^{\alpha i}
        +\overline{\psi}^{\dot\alpha}{}_i\lambda^{\alpha i}\;,\\
    \sm h&=\epsilon_{\alpha\beta}\epsilon_{ij}\lambda^{\alpha i}\psi^{\beta j}\;,\nn\\
    \sm \psi^{\dot\alpha\alpha,\beta i}&=v^{\dot\alpha\alpha}\psi^{\beta i}
        +\phi^{\dot\alpha\alpha,i}{}_j\lambda^{\beta j}
        +\frac{1}{2}g^{\dot\alpha\alpha}\lambda^{\beta i}
        -g^{\dot\alpha\beta}\lambda^{\alpha i}
        +\epsilon^{\alpha\beta}\epsilon^{ij}\overline{\lambda}^{\dot\alpha}{}_jh\;.\nn\\
        &\ldots\nn
\end{align}
The action of the supersymmetric $3$-cocycle is
\begin{align}
    \omega_0\phi^i{}_j&=i\lambda^{\alpha i}v_{\alpha\dot\alpha}\overline{\lambda}^{\dot\alpha}{}_j
        \;,\nn\\
    \omega_0\psi^{\alpha i}&=i\lambda^{\beta i}v_{\beta\dot\alpha}v^{\dot\alpha\alpha}
        \;,\\
    \omega_0g^{\dot\alpha\alpha}
        &=-\frac{i}{2}v^{\dot\alpha\beta}v_{\beta\dot\beta}v^{\dot\beta\alpha}\;,\nn
\end{align}
where 
$v_{\alpha\dot\alpha}
=-\epsilon_{\alpha\beta}\epsilon_{\dot\alpha\dot\beta}v^{\dot\beta\beta}$. 

The explicit form of the outer derivation $\omega_1$ on $F[\overline{\fm}]$ can be found for the lowest internal degrees. 
The rest of the construction, in particular the argument leading to a conjecture about the absence of a $4$-bracket, goes along the exact same lines as in {\S}\ref{sec:D6N20}.

\numpar[sec:D11][Eleven-dimensional supersymmetry]
This example is fundamentally different from the non-gravitational examples, due to the presence of cohomology also in the ghost sector, represented by super-Killing vectors on Minkowski superspace.

The algebra $\fn$ is still freely generated by the dual of the multiplet.
In order to show this, we need to establish that the whole dual multiplet is a (minimal) generating set of $\fn$. 
Brackets must respect the double grading (see Table \ref{fig:elevenalgebra}). The only possible type of relation would be of the type
$[\overline{k},\overline{k}]=\overline{\phi}$.
This is not allowed by tensor products of the $\so(11)$ modules appearing.
(In addition, if such relations had been present, they would give rise to cohomology of $\fn$ at cohomological degree 4, which is outside the multiplet.) All cohomology is in generation 1, and $\fn$ is freely generated according to Proposition \ref{prop:nfree}. 

$S$ is a Gorenstein ring of odd codimension ($9$). Since 
$\vee^2(00001)=(10000)\oplus(01000)\oplus(00002)$, $Y$ is the closure of an intermediate (non-minimal) orbit. The tangent c\^one at the singular locus, the minimal orbit, is $Gr(2,5)$, 
so the Gorenstein property essentially follows from the Gorenstein property in {\S}\ref{sec:SL5}.
Some properties of $S$ have been examined in refs.
\cite{Berkovits2005TheSpinors,Cederwall2009TowardsSupergravity,Movshev:2011cy}, and the supersymmetric pure spinor BV field theory was constructed in refs. 
\cite{Cederwall2009TowardsSupergravity,Cederwall2009D=11Supersymmetry}.
In  \cite{Jonsson}, an investigation of the Tate resolution was initiated in increasing internal degree (up to degree $8$), leading to conjectures, some of which will be verified presently.

\begin{table}[h]
\[\small
    \begin{tikzcd}[row sep = 4 pt, column sep = 4 pt]
        \bf1 \ar[dr]  \\
        \cdot & \bf11 \ar[dr]  \\
        \cdot & \cdot & \bf55\oplus\bf11 
        \ar[dr,start anchor={[xshift=-5ex]},
        end anchor={[xshift=-0ex,yshift=-0ex]},bend right=15] 
        \ar[dr,start anchor={[yshift=1ex]},end anchor={[xshift=5ex]},bend left=5]
        & \bf32 \ar[dr]\\
        \cdot&\cdot&\cdot&\bf165\oplus\bf65\oplus\bf1
        \ar[drrr,controls={+(1,-2) and +(-1,2)}]
        &\bf320\oplus\bf32 \ar[dr,shorten=-1mm,start anchor={[xshift=-1ex]},
        end anchor={[xshift=1ex]}]\\
        \cdot&\cdot&\cdot&\cdot&\cdot&\bf32\oplus\bf320 \ar[dr]&\bf1\oplus\bf65\oplus\bf165
        \ar[dr,end anchor={[xshift=5ex]},
        start anchor={[xshift=-0ex,yshift=-0ex]},bend left=15]
        \ar[dr,end anchor={[yshift=-1ex]},start anchor={[xshift=-5ex]},bend right=5]
        \\
        \cdot&\cdot&\cdot&\cdot&\cdot&\cdot&\bf32&\bf11\oplus\bf55 \ar[dr]\\
        \cdot&\cdot&\cdot&\cdot&\cdot&\cdot&\cdot&\cdot&\bf11 \ar[dr]\\
        \cdot&\cdot&\cdot&\cdot&\cdot&\cdot&\cdot&\cdot&\cdot&\bf1
\end{tikzcd}
\]
\caption{\it Zero-mode cohomology for $D=11$ supergravity.}
\label{11zeromode}
\end{table}

\begin{table}[h]
    \centering
\begin{picture}(250,150)(150,-30)
\put(80,0){$0$}
\put(80,30){$1$}
\put(80,60){$2$}
\put(118,-30){$1$}
\put(156,-30){$2$}
\put(194,-30){$3$}
\put(232,-30){$4$}
\put(270,-30){$5$}
\put(308,-30){$6$}
\put(346,-30){$7$}
\put(384,-30){$8$}
\put(422,-30){$9$}
\put(455,-30){$\cdots$}
\put(80,90){$\vdots$}
\put(100,-15){\line(0,1){120}}
\put(100,-15){\line(1,0){381}}
\multiput(194,0)(38,0){7}{$\cdot$}
\multiput(118,30)(38,0){3}{$\cdot$}
\multiput(346,30)(38,0){3}{$\cdot$}
\multiput(118,60)(38,0){7}{$\cdot$}
\multiput(118,90)(38,0){9}{$\cdot$}
\put(115,0){$D_\alpha$}
\put(155,0){$D_a$}
\put(227,30){$\bar k^a$}
\put(265,30){$\bar k^\alpha$}
\put(303,30){$\bar k^{ab}$}
\put(379,60){$\bar\phi_{abcd}$}
\put(417,60){$\bar\phi_{ab}^\alpha$}
\put(455,60){$\cdots$}
\put(211,15){\line(0,1){90}}
\put(211,15){\line(1,0){270}}
\end{picture}
    \caption{\it Generators in the double grading of $\widetilde{\ft}$ for $D=11$ supergravity. Internal degree is on the horizontal axis and homological degree on the vertical. Only the generating set of $F[\overline{\fm}]$ is displayed. Note the symmetry around degree $(3,\frac12)$.}
    \label{fig:elevenalgebra}
\end{table}

Factoring out degree $1$ in the partition function for $S$ gives 
Table \ref{11zeromode}. Factoring out also degree 2 yields the linear multiplet:
\begin{align}
Z_S(t)&=(1-t)^{-(00001)}\otimes(1-t^2)^{(10000)}\nn\\
&\otimes
\Bigl[
(00000)\oplus(10000)t^4\ominus(00001)t^5\oplus(01000)t^6\\
&\qquad\ominus\bigoplus_{i=0}^\infty
\left((i0010)t^{8+2i}\ominus(i1001)t^{9+2i}\oplus(i2000)t^{10+2i}\right)
\Bigr]\;.\nn
\end{align}

On the algebra side, let $\overline{\fk}$ be the (dual) Killing supermultiplet at 
internal degrees $4,5,6$, and $\overline{\fp}$ the (dual) physical multiplet at degrees $\geq8$. We have $\overline{\fm}=\overline{\fk}\oplus\overline{\fp}$ as a vector space.  
The supertranslation algebra is encoded in the coalgebra differential
\begin{align}
s_0\lambda^\alpha&=0\;,\nn\\
s_0v^a&=(\lambda\gamma^a\lambda)\;.
\end{align}
We let $s_{0}$ be the coalgebra differential of $\ft\inplus(\overline{\fk}\oplus\overline{\fp})$. Thus,
\begin{align}
s_0k_a&=0\;,\nn\\
s_0k_\alpha&=(\gamma^a\lambda)_\alpha k_a\;,\\
s_0k_{ab}&=(\lambda\gamma_{ab}k)-2v_{[a}k_{b]}\;.\nn
\end{align}
and\footnote{The concrete expression for the formation of a $\gamma$-traceless $2$-form-spinor $\tilde X$ from the tensor product of a $2$-form and a spinor is
$\tilde X_{ab}=X_{ab}+\frac29\gamma_{[a}\gamma^cX_{b]c}-\frac1{90}\gamma_{ab}\gamma^{cd}X_{cd}$.}
\begin{align}
s_0\phi_{abcd}&=0\;,\nn\\
s_0\phi_{ab\alpha}&=\frac12(\gamma^{cd}\lambda)_\alpha\phi_{abcd}-(\gamma\hbox{-trace})\;,\nn\\
s_0\phi_{ab,cd}&=\frac12\bigl((\lambda\gamma_{ab}\phi_{cd})+(\lambda\gamma_{cd}\phi_{ab})\bigr)
-[abcd]\;,\\
s_0\phi_{a,bcde}&=v_a\phi_{bcde}+3(\lambda\gamma_{a[bc}\phi_{de]})-([abcde]+\hbox{trace})\;,\nn\\
&\cdots\nn
\end{align}
Here, $k_a$, $k_\alpha$ and $k_{ab}$ are Killing ghosts for translations, fermionic translations and rotations, while
$\phi_{abcd}$, $\phi_{ab\alpha}$ and $\phi_{ab,cd}$ are the (linearised) $4$-form, gravitino field strength and Weyl tensor of the physical supergravity multiplet.

When the Tate resolution is examined by internal degree, a minor surprise arises at degree $9$. There it turns out that the $2$-bracket needs to be modified. The multiplet $\overline{\fm}=\overline{\fk}\oplus\overline{\fp}$ is still freely generating $\fn$, but the action of the supertranslation generators on the multiplet is modified. Here, we do the construction by arity of brackets, so we start with this deformation. It turns out that there is a $2$-cocycle deformation of $\ft\inplus F[\overline{\fm}]$ as defined above. As a cocycle of the supertranslation algebra, it belongs to $H^1(\ft,\overline{\fp}\otimes\wedge^2\fk)$, and allows for a deformation of the $2$-bracket $[\ft,\overline{\phi}]$: $\ft\otimes\overline{\fp}\rightarrow \wedge^2 \overline{\fk}$ landing in generation $2$ of $F[\overline{\fk}]$.
Let us use $\nu$ for the corresponding term in the coalgebra differential, and denote the generation-$2$ basis elements as 
$k$ with double indices. 
Up to a scaling, the complete expression for $\nu$ is
\begin{align}
\nu k_{a,b}&=0\;,\nn\\
\nu k_{a,\alpha}&=\frac1{10}(\gamma^{bcd}\lambda)_\alpha\phi_{abcd}-\frac1{60}(\gamma_a{}^{bcde}\lambda)_\alpha\phi_{bcde}\;,\nn\\
\nu k_{\alpha,\beta}&=\frac1{16}\gamma^{abcd}_{\alpha\beta}(\lambda\gamma_{ab}\phi_{cd})
+\frac1{24}\gamma^{abc}_{\alpha\beta}v^d\phi_{abcd}\;,\nn\\
\nu k_{a,bc}&=\frac94(\lambda\gamma_a\phi_{bc})-\frac32(\lambda\gamma_{[b}\phi_{c]a})
        +\frac14v^d\phi_{abcd}\;,
\label{eq:nu2bracket}\\
\nu k_{ab,\alpha}&=-\frac{10}{39}
(\gamma^{cde}\lambda)_\alpha\phi_{[a,b]cde}
    -\frac5{156}(\gamma_{[a}{}^{cdef}\lambda)_\alpha\phi_{b],cdef}
    +\frac{21}{26}(\gamma^{cd}\lambda)_\alpha\phi_{ab,cd}\nn\\
&\qquad+\frac6{13}v^c(\gamma_c\phi_{ab})_\alpha-\frac6{13}v^c(\gamma_{[a}\phi_{b]c})_\alpha\;,\nn\\
\nu k_{ab,cd}&=6((\lambda\gamma_{[a}\phi_{b],cd})+(\lambda\gamma_{[c}\phi_{d],ab}))
+\frac{10}{13}v^e\phi_{e,abcd}
\;.\nn
\end{align}
In the last term, $\phi_{a,bc\alpha}$ is ``the derivative of the gravitino field strength'', the degree $11$ element 
with $s\phi_{a,bc\alpha}=v_a\phi_{bc\alpha}+\ldots$.
The action of $\nu$ on $F[\overline\fm]$ is defined by tensor product.

This means that $F[\overline{\fm}]$ now has become an indecomposible $\ft$-module. The differential we need to modify by higher brackets is
$s=s_0+\nu+s_1$ on $\ft\inplus F[\overline{\fm}]$. Here, $s_0+\nu$ represents the deformed action of $\ft$ on the module $F[\overline{\fm}]$. The normalisation of $\nu$ is irrelevant as long as it is non-zero, we choose the one of 
 \eqref{eq:nu2bracket}.
There is still a bicomplex with differentials $s_0+\nu$ and $s_1$, which can be used to find higher brackets, always with at most one entry in $\overline{\fm}$, by descent equations as in {\S}\ref{sec:doublecomplexomega}.

We continue by investigating the $3$-bracket.
The connecting $3$-cocycle $\omega_0$ is dual to a bracket
$[\ft,\ft,\ft]$: $\wedge^3\ft\rightarrow\overline{\fk}$.
It is straightforwardly constructed:
\begin{align}
\omega_0 k_a&=(\lambda\gamma_{ab}\lambda)v^b\;,\nn\\
\omega_0 k_\alpha&=\frac12(\gamma_{ab}\lambda)_\alpha v^av^b\;,\label{OmegaCocycleEq}\\
\omega_0 k_{ab}&=0\;,\nn
\end{align}
Then, $\omega_1$, dual to a $3$-bracket $[\ft,\ft,\overline{\fm}]$, is found through the descent equations. Due to the presence of $\nu$, it will consist of two parts, $\omega'$: $[\ft,\ft,\overline{\fk}]\in\overline{\fp}$ and 
$\omega''$: $[\ft,\ft,\overline{\fk}]\in\overline{\fk}^{(2)}$, where $\overline{\fk}^{(2)}$ means second generation, \ie, $[\overline{\fk},\overline{\fk}]$ in $F[\overline{\fm}]$.
They are given by the concrete expressions
\begin{align}
\omega'\phi_{abcd}&=2(\lambda\gamma_{[ab}\lambda)k_{cd]}
+\frac1{12}(\lambda\gamma_{abcd}{}^{ef}\lambda)k_{ef}
-\frac43v_{[a}(\lambda\gamma_{bcd]}k)
-\frac16v^e(\lambda\gamma_{abcde}k)\;,\nn\\
\omega'\phi_{ab\alpha}&=\frac12(\gamma^c\lambda)_\alpha v_c k_{ab}
-(\gamma^c\lambda)_\alpha v_{[a}k_{b]c}
+v_av_bk_\alpha -(\gamma\hbox{-trace})\;,\label{OmegaPrimeEq}\\
\omega'\phi_{ab,cd}&=-\frac12(v_av_bk_{cd}+v_cv_dk_{ab})-([abcd]+\hbox{trace})\;,\nn
\end{align}
and 
\begin{align}
\omega'' k_{a,b}&=2(\lambda_{(a}{}^c\lambda)k_{b)c}-\frac16\eta_{ab}(\lambda\gamma^{cd}\lambda)k_{cd}\nn\\
&\qquad-2v_{(a}(\lambda\gamma_{b)}k)+\frac13\eta_{ab}v^c(\lambda\gamma_ck)\;,\nn\\
\omega'' k_{a,\alpha}&=\frac4{15}\lambda_\alpha v^bk_{ab}+\frac1{10}(\gamma^{bc}\lambda)_\alpha v_ak_{bc}
+\frac7{15}(\gamma^{bc}\lambda)_\alpha v_bk_{ac}\\
&\qquad-\frac1{15}(\gamma_a{}^b\lambda)_\alpha v^ck_{bc}
+\frac1{60}(\gamma_a{}^{bcd}\lambda)_\alpha v_bk_{cd}\nn\\
&\qquad+\frac4{15} v_av_b(\gamma^b k)_\alpha-\frac1{30} v_bv_c(\gamma_a{}^{bc}k)_\alpha\;, \nn\\
&\cdots\nn
\end{align}
The relative coefficient between $\omega'$ and $\omega''$ is fixed by the normalisation of $\nu$.

This is the structure of the $3$-bracket, as a linear deformation. Since $D=11$ supergravity contains a physical $4$-form field strengths, one may expect this to be reflected in the presence of some $4$-bracket.
This indeed happens. We need to check the $5$-identities with two $3$-brackets. With the explicit forms above, the result is that there is a single non-vanishing $4$-bracket $[\ft,\ft,\ft,\ft]$, namely
$[D_a,D_b,D_c,D_d]=H_{abcd}$.
There will of course also be some $4$-brackets $[\ft,\ft,\ft,\overline{\fm}]$.
Note that we do not reproduce the standard closed superspace $4$-form of $D=11$ supergravity. Our 4-bracket does not arise as a linear deformation (a cocycle), but from a failure of the $3$-bracket to be a finite deformation.
In the superspace language of \S\ref{sec:superspace} and the end of 
{\S}\ref{sec:D6N20}, the Bianchi identity of our $4$-form will be sourced by the Killing multiplet as $dH=[B,G]$. 
We conjecture that higher brackets are absent through arguments analogous to the ones in {\S}\ref{sec:D6N20}, but have not been able to prove this.

We do not know how to interpret the non-linear structure. It can not be understood in terms of physical fields, since the dimensions are wrong. Remember that the pure spinor superfield $\Psi$ for $D=11$ supergravity carries ghost number $3$ and inverse length dimension $-3$. In terms of the algebra (see Table \ref{fig:elevenalgebra}), the physical multiplet would fit at negative internal degrees, while maintaining the symmetry around internal degree $3$, leading to a structure similar to a tensor hierarchy algebra. We comment on such possible prolongations in the Discussion, \S\ref{sec:Discussion}.

\newpage 
\subsection{\texorpdfstring{$E_6$ supersymmetry}{E6 supersymmetry}\label{sec:E6}}
\hfill
\nopagebreak        
\begin{figure}[h]
    \centering
\begin{picture}(210,70)(45,0)
\multiput(50,10)(40,0){6}{\circle{10}}
\put(170,50){\circle{10}}
\multiput(55,10)(40,0){5}{\line(1,0){30}}
\put(170,15){\line(0,1){30}}
\put(50,10){\line(1,1){3.5}}
\put(50,10){\line(1,-1){3.5}}
\put(50,10){\line(-1,1){3.5}}
\put(50,10){\line(-1,-1){3.5}}
\end{picture}
    \caption{\it Dynkin diagram for $\mathscr{B}(E_6,\Lambda_1)$}
    \label{fig:be6}
\end{figure}
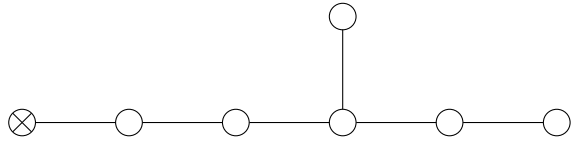

This example distinguishes itself from the lower-dimensional examples in the $E$ series ({\S}\ref{sec:SYM} and {\S}\ref{sec:SL5}) in that the BKM superalgebra $\mathscr{B}(E_6,\Lambda_1)$
is not freely generated from internal degree $3$.
This can be checked explicitly, the deviation starting at internal degree~9.

\begin{table}[H]
\[
    \begin{tikzcd}[row sep = 4 pt, column sep = 4 pt]
        \bf1 \ar[dr] &  \\
        \cdot & \overline{\bf27} \ar[drrr, bend left = 50] & \bf78 \ar[drrr, bend left = 45] \arrow[dr]{A} \\
        \cdot & \cdot & \cdot & \bf351  \ar[drrr, bend right = 50] &  \bf650 \ar[drrr, bend left = 60] \ar[dr] & \overline{\bf351}' \ar[dr]  \\
        \cdot & \cdot & \cdot & \cdot & \cdot & \bf351' \ar[drrr, bend right = 40] & \bf650 \ar[drrr, bend right = 50] & \overline{\bf351} \ar[dr]  \\
    \cdot & \cdot & \cdot & \cdot & \cdot & \cdot & \cdot & \cdot & \bf78 & \bf27 \ar[dr] \\
    \cdot & \cdot & \cdot & \cdot & \cdot & \cdot & \cdot & \cdot & \cdot & \cdot & \bf1
\end{tikzcd}
\]
\caption{\it Zero-mode cohomology for the $E_6$ model.\label{E6zeromodes}}
\end{table}

The corresponding minimal orbit is defined by the equations resulting from $\overline{27} \subset \Sym^2(27)$. $S$ is a Gorenstein ring of even codimension ($10$). The pure spinor variety is a c\^one over the Cayley plane $\mathbb{O} P^2$, which is a minimal $E_6$ orbit.
The zero-mode cohomology is depicted in Table \ref{E6zeromodes}.
Some Dynkin labels are
\[
    \overline{\bf27} =  
    \EWeight{00}0{01}0\;,\quad
    {\bf78} = 
    \EWeight{00}0{00}1\;,\quad
    {\bf351} = 
    \EWeight{01}0{00}0\;,  
\]
\[
    {\bf351}' = 
    \EWeight{20}0{00}0\;,\quad 
    {\bf650} = 
    \EWeight{10}0{01}0\;,\quad 
    {\bf2925}=\EWeight{00}1{00}0\;.
\]
The indicated differential operators should all be present.

The multiplet in this example is unconventional in that it lies in different cohomological degrees (ghost numbers). 
Recall that the multiplet consists of $H^\bullet(\fn)$, where $\fn={\mathscr B}(E_6,\Lambda_1)_{\geq3}$ is the subalgebra of 
${\mathscr B}(E_6,\Lambda_1)$ at internal degrees $\geq3$.
We have not solved it completely. 
It contains infinite-dimensional $\ft$-modules in $H^1$ and $H^2$, and probably nothing else.

\begin{table}[h]
    \centering
    \tiny
\begin{picture}(200,120)(150,-40)
\put(40,10){cohom.}
\put(40,0){degree $=$}
\put(80,0){$1$}
\put(80,45){$2$}
\put(50,-40){weight degree $=$}
\put(120,-40){$3$}
\put(160,-40){$4$}
\put(200,-40){$5$}
\put(240,-40){$6$}
\put(280,-40){$7$}
\put(320,-40){$8$}
\put(360,-40){$9$}
\put(400,-40){$10$}
\put(430,-40){$\cdots$}
\put(100,-25){\line(0,1){100}}
\put(460,-25){\line(0,1){100}}
\put(100,-25){\line(1,0){360}}
\put(100,17.5){\line(1,0){360}}
\put(100,75){\line(1,0){360}}
\put(110,0){$\EWeight{00}0{00}1$}
\put(150,0){$\EWeight{00}0{10}0$}
\put(190,0){$\EWeight{00}0{01}1$}
\put(190,-15){$\EWeight{00}0{01}0$}
\put(230,0){$\EWeight{00}0{11}0$}
\put(230,-15){$\EWeight{00}0{00}1$}
\put(270,0){$\EWeight{00}0{02}1$}
\put(270,-15){$\EWeight{00}0{10}0$}
\put(310,0){$\EWeight{00}0{12}0$}
\put(350,0){$\EWeight{00}0{03}1$}
\put(390,0){$\EWeight{00}0{13}0$}
\put(430,0){$\cdots$}
\put(350,45){$\EWeight{00}1{00}0$}
\put(390,60){$\EWeight{00}0{02}0$}
\put(390,45){$\EWeight{01}0{01}0$}
\put(390,30){$\EWeight{00}0{10}1$}
\put(430,45){$\cdots$}
\end{picture}
    \caption{\it Cohomology of $\fn$ for the $E_6$ model.}
    \label{table:E6ncoho}
\end{table}

We notice that the appearance of $H^2$ from weight degree $9$ is a direct consequence of a relation in the brackets between generators freely generated by elements dual to $H^1$. 
The conditions in Prop. \ref{prop:nfree} are not satisfied.
Interpreted physically, such cohomology implies the pathological situation that there are ``physical'' states with ghost number $-1$, if ghost number $0$ is assigned to $H^1$. 

Note that the multiplet is agnostic to which resolution we choose to form the algebra $\widetilde{\ft}$. Any resolution gives the same cohomology 
$H^\bullet(\fn)$. Presently we have chosen $\widetilde{\ft}={\mathscr B}(E_6,\Lambda_1)$. This gives the Tate resolution, which is the minimal free multiplicative resolution. Other, non-minimal, resolutions yield the same result, but will necessarily have a differential containing $1$-brackets.
Among them should be one based on $F[\overline{\fm}]$.  

The part of the multiplet at cohomological degree $1$ can be formulated quite explicitly in terms of
solutions to differential equations as follows.
Consider a ``gauge potential'' $A_M$ in $\overline{\bf27}$, and a ``spinor'' field $\Psi$ in 
$\bf{78}$. The gauge symmetry is $\delta_\Lambda A_M=\partial_M\Lambda$, so 
$F_{MN}=\partial_MA_N-\partial_NA_M$ in $\overline{\bf351}$ is invariant, and fulfills a Bianchi identity $\partial_{[M}F_{NP]}=0$ in $\bf2925$. In addition it obeys an equation of motion in $\bf650$. This implies that $\partial F\in(0|00011)\oplus(1|00000)$. Repeated use of the Bianchi identity and the equations of motion imply that 
\begin{align}
\partial^pF\in
\begin{cases}
\EWeight{00}0{1p}0\;,&p=0\,,\;p\geq2\;,\\[1em]
\EWeight{00}0{11}0\oplus\EWeight{00}0{00}1\;,&p=1\;..
\end{cases}
\end{align}
The ``spinor'' has equations of motion $\partial\Psi|_{\bf351}=0$ and $\partial^2\Psi|_{\overline{\bf351}'}=0$. Repeated use of these equations yields
\begin{align}
\partial^p\Psi\in
\begin{cases}
\EWeight{00}0{0p}1\;,&p=0\,,\;p\geq3\;,\\[1em]
\EWeight{00}0{01}1\oplus\EWeight{00}0{01}0\;,&p=1\;,\\[1em] 
\EWeight{00}0{02}1\oplus\EWeight{00}0{10}0\;,&p=2\;.
\end{cases}
\end{align}
This agrees with Table \ref{table:E6ncoho},
and we have checked that the pattern continues, with no
other modules in $H^1(\fn)$, at least up to weight degree $25$. An proof using partition functions should be straight-forward.
Note that the appearance of multiple derivatives (momenta) as $\EWeight{00}0{0p}0$ is consistent with a condition $p\times p=0$ where $\times$ denotes the Freudenthal product 
$\overline{\bf27}\times\overline{\bf27}\rightarrow {\bf27}$, forcing momenta to lie on a c\^one over the Cayley plane, thus providing a field theoretic realisation of the particle model of  \cite{Cederwall:1988hj}. The additional modules that we have found in $H^\bullet(\fn)$
are all at cohomological degree $2$. The leading ones are given in Table \ref{table:E6ncoho2}, meaning, for example, that since
$\EWeight{00}1{00}0$ appears at weight degree $9$,
the modules $\EWeight{00}1{0p}0$
are also present at weight degree $9+2p$ (at least for $p=0,1,2,3$) although not written out in the table.
\renewcommand\arraystretch{2}
\begin{table}[h]
\small
\begin{tabular}{|c|l|}
\hline
weight degree
&Leading $E_6$-module in $H^2(\fn)$\\
\hline
$\phantom{1}9$&$\EWeight{00}1{00}0$\\
$10$&$\EWeight{00}0{10}1\oplus\EWeight{00}0{02}0\oplus\EWeight{01}0{01}0$\\
$11$&$\EWeight{00}0{20}0\oplus\EWeight{00}0{01}1\oplus\EWeight{01}0{00}1\oplus\EWeight{10}0{02}0$\\
$12$&$\EWeight{00}0{11}0\oplus\EWeight{00}1{00}0\oplus\EWeight{01}0{10}0\oplus\EWeight{10}0{01}1$\\
$13$&$\EWeight{00}0{10}1\oplus\EWeight{10}0{11}0\oplus\EWeight{01}0{01}0\oplus\EWeight{10}1{00}0$\\
$14$&$\EWeight{00}0{20}0\oplus\EWeight{01}0{00}1\oplus\EWeight{10}0{02}0\oplus\EWeight{11}0{01}0\oplus\EWeight{10}0{10}1$\\
$15$&$\EWeight{01}0{10}0\oplus\EWeight{10}0{20}0\oplus\EWeight{10}0{01}1\oplus\EWeight{11}0{00}1\oplus\EWeight{20}0{02}0$\\
\hline
\end{tabular}
\vspace{.1cm}
\caption{\it Further cohomology of $\fn$ for the $E_6$ model.\label{table:E6ncoho2}}
\end{table}

The supertranslation algebra has the differential 
\begin{align}
    s_0v^M=c^{MNP}\lambda_N\lambda_P\;,
\end{align}
where $c$ is the symmetric invariant tensor of $E_6$. We can choose a normalisation where $c^{MPQ}c_{NPQ}=\delta^M_N$.
Part of the supertranslation transformations are encoded in
\begin{align}
s_0 f^{MN}=c^{PQ[M}\lambda_P\psi_Q{}^{N]}    
\end{align}
$\psi$ and $f$ being the coalgebra elements dual to $\Psi$ and $F$.
There is a cocycle $\eta\in H^2(\ft)$, represented by 
$\eta=P^{({\bf78})}{}_M{}^N{}_{,P}{}^Q\lambda_Qv^P$, $P^{({\bf78})}$ being the projector on $\bf78$ in $\bf27\otimes\overline{\bf27}$. Its closedness is trivial, since the symmetric product $\vee^3 \bf27$ does not contain $\bf78$. It is straightforward to extend $\eta$ to $\omega_0\in H^2(\ft,\overline{\fm}_1)$, where $\fm_1$ is the restriction of the multiplet to $H^1(\fn)$. 
Then, this supersymmetric cocycle takes the form
\begin{align}
\omega_0\psi_M{}^N&=P^{({\bf78})}{}_M{}^N{}_{,P}{}^Q\lambda_Qv^P\;,\nn\\
\omega_0f^{MN}&=av^Mv^N
\end{align}
for some coefficient $a$, which is uniquely determined by $\{s_0,\omega_0\}f^{MN}=0$, since 
$\lambda^2v$ is in $\vee^2\overline{\bf27}\otimes\bf27$, which contains a single $\bf351$.
We have not continued the explicit construction to involve also the higher cohomology.

\subsection{\texorpdfstring{$G_2\times SL(2)$ supersymmetry}{G2 x SL(2) supersymmetry}\label{sec:G2A1}}
\hfill
\nopagebreak

This is an example which turns out to go beyond our original assumption of a $2$-graded supertranslation algebra. However, unlike the example of \S\ref{sec:E6}, it is likely to have a physical interpretation related to supergravity, and the Koszul dual algebra is likely to be freely generated from internal level $5$.

Consider a construction as above with structure group $G_2\times SL(2)$, and $\ft_1$ and $\ft_2$ spanning the modules 
$({\bf7},{\bf2})=(10)(1)$ and $({\bf7},{\bf1})=(10)(0)$ respectively. 
The ``pure spinor'' $\lambda_a^i$ is constrained by
$\sigma^{abc}\epsilon_{ij}\lambda_b^i\lambda_c^j=0$. This does not define a minimal orbit, since 
$\vee^2(\bf{7},\bf{2})=(\bf7,\bf1)\oplus(\bf{14},\bf1)\oplus(\bf1,\bf3)\oplus(\bf{27},\bf3)$, and only the first one, representing superspace torsion, is demanded to vanish.

The partition function of $S$ is
\begin{align}
Z_S(t)&=(1-t)^{-(10)(1)}
    \bigl[(00)(0)-(10)(0)t^2+(00)(1)t^3\nn\\
&+\left((01)(0)+(10)(0)\right)t^4-\left((10)(1)+(00)(1)\right)t^5
+(00)(2)t^6\\
&-\left((20)(0)+(00)(0)\right)t^6+\left((01)(1)+(10)(1)\right)t^7\nn\\
&-\left((10)(2)+(00)(0)\right)t^8+(00)(3)t^9
\bigr]\;.\nn
\end{align}
When we construct $\widetilde{\ft}$ through a Tate resolution, an interesting pattern arises, which has no counterpart in the previous examples. At degree $3$, killing the cohomology 
$\lambda^a_i v_a$ (which is closed since 
$\sigma^{abc}\lambda_a^i\lambda_b^j\lambda_c^k=0$), there is a fermionic generator $D_i$. It is not part of a multiplet. This completes the generators at homological degree $0$. 
So far, we have the differential
\begin{align}
d_3=\sigma^{abc}\epsilon_{ij}\lambda_a^i\lambda_b^j\frac{\*}{\*v^c}
+\lambda_a^iv^a\frac{\*}{\*\xi^i}\;.
\end{align}
The next cohomology (of $d_3$) to be killed is at degree $5$, and comes through a $3$-bracket, so the generators carry homological degree $1$.
Its concrete form is
\begin{align}
\omega^i=\epsilon_{jk}\lambda_a^i\lambda^{aj}\xi^k
+\frac14\sigma^{abc}\lambda_a^iv^bv^c
\end{align}
(where we have used a normalisation where $\sigma^{124}=1$ and cyclic).
It forms the beginning of a finite-dimensional supermultiplet similar to the Killing supermultiplet of $D=11$ supergravity, and spans degrees $5$, $6$, $7$ and $8$. An infinite-dimensional supermultiplet at homological degree $2$ begins at internal degree $10$.

Whether one asks for it or not, one is lead to a supertranslation algebra spanning three levels, of the form
\begin{align}
[D^a_i,D^b_j]&=\sigma^{abc}\epsilon_{ij}D_c\;,\\
[D^a_i,D_b]&=\delta^a_b D_i\;,
\end{align}
and all other brackets $0$,
of which the multiplets will form modules
(but we may expect a deformation of the action of supertranslations as in {\S}\ref{sec:D11}).
The finite-dimensional module is indeed the coadjoint of this supertranslation algebra extended by $\mathfrak{g}_2\oplus\mathfrak{a}_1$ at degree $0$, which points towards supergravity. The physical fields in the infinite-dimensional module are probably off shell.

We will leave the details of this case for future examination. In particular, R.~Eager suggested to us that one might look for a connection to a twisting of type IIB supergravity; it would be interesting to explore this possibility or other potential  physical roles of this algebra further.

\vskip6pt
\section{Discussion\label{sec:Discussion}}
\vskip8pt
We have investigated Koszul duality in the context of (generalised) supersymmetry, defined as any superalgebra consistently residing in degrees $1$ and $2$. 
In particular, the relation between the coordinate rings of constrained (generalised) spinor spaces (``pure spinor spaces'') and the supermultiplet is established.
We rely on Koszul duality defined by the Tate resolution of the quadratic constraints. This definition agrees with the traditional notion of Koszul duality (``quadratic Koszul duality'') when the Koszul dual is (the universal enveloping algebra of) a Lie superalgebra, but leads to $L_\infty$ algebras without $1$-brackets in other cases.

The results are elucidating, in the sense that the r\^ole of the multiplet as part of the Koszul dual algebra becomes clear.
 On the other hand, we are left with some seemingly difficult problems, see \S\ref{sec:Properties}, of relating good mathematical properties of the ring to good physical properties of the multiplet, and thus of giving clear mathematical criteria for when a coordinate ring defines a physically acceptable model.

This conundrum is illustrated by the question:
How does one know, given a Dynkin dia\-gram and an integral dominant weight, if the BKM superalgebra is freely generated from some level?
Or even, in non-BKM settings, whether the Koszul dual of the coordinate ring of the closure of some non-minimal orbit has this property? 
Apart from the $E_6$ examples, there are numerous other BKM superalgebras that are not freely generated from internal degree $3$: 
${\mathscr B}(D_n,\Lambda_n)$ ($n\geq6$), 
${\mathscr B}(E_7,\Lambda_1)$,
${\mathscr B}(F_4,\Lambda_4)$, \ldots\
It should be noted however, that the subalgebras at degree $\geq3$ in these cases still contain freely generated algebras since, for example, $\mathscr{B}(E_5,\Lambda_1)_{\geq3}$ is a subalgebra of $\mathscr{B}(E_6,\Lambda_1)_{\geq3}$
\cite{Kleinschmidt:2013em}, and thus they do not 
provide 
counterexamples to the conjecture $C_{10}$  of  \cite{Avramov:1982} concerning the existence of non-abelian free subalgebras.
We have not been able to find an $E_6$-equivariant freely generated subalgebra in ${\mathscr B}(E_6,\Lambda_1)$, and strongly suspect that there is none.

One major motivation for the present investigation was the appearance of algebras encoding the interactions of $D=10$ SYM theory already in the algebraic structure of the non-interaction theory. As we have seen, this is generic for models describing gauge theory (with $1$-form connections). We have not come much further in suggesting similar interpretations when the models describe higher form gauge fields, although the deformation of the $2$-bracket in the algebra of $D=11$ supergravity looks interesting.

Many multiplets are not found as cohomology in a scalar pure spinor superfield, but rather in a field transforming in some $\fg$-module. The field is then subject to additional ``shift symmetry'' \cite{Cederwall:2011vy,Cederwall:2008vd,Cederwall:2008xu,Cederwall:2017cez,Cederwall:2020dui}. Equivalently, they belong to sections of sheaves over the pure spinor space \cite{Eager:2018dsx,perspectivesonpurespin} other than the structure sheaf of the scalar functions. We have not dealt with such multiplets here, but expect that Koszul duality carries over in the sense that the dual object becomes a module of the Koszul dual algebra $\widetilde{\ft}$ presently considered, and that the appropriately defined such duality can be interpreted as a character formula for such 
$\widetilde{\ft}$-modules, in the same way as the duality treated here can be seen as providing a denominator formula for the dual superalgebra.

The use of the complex $\widetilde{A}^\bullet$, with the differential $\widetilde{d}$, is attractive in that it, unlike the standard pure spinor superfield complex $A^\bullet(S)$, contains the full de Rham operator on superspace. One may then hope for a version of pure spinor superfield theory for supergravity which is more geometrical than the standard one.
The formalism could also open for new ways of constructing negative ghost number operators (such as the famous ``$b$-ghost'' \cite{Berkovits2005PureString}), with or without non-minimal variables \cite{Cederwall:2022qfn}.

Which r\^oles do extensions (prolongations) to negative internal 
degree play? For BKM superalgebras, there are typically different such extensions, contragredient or of tensor hierarchy algebra type \cite{Palmkvist:2013vya,Carbone:2018xqq,Cederwall:2019qnw}.
When a theory is conformal, it makes sense to extend to negative levels so that the superconformal generators appear at internal degrees $-2,-1,0,1,2$; in other cases it is difficult to tell which extension is more natural.
One interesting observation is that the multiplet $\fm$ may appear at negative levels, and with the correct dimension (proportional to internal degree), even if $\overline{\fm}$ appears at shifted internal degree. 
This applies \eg\ to $D=11$ supergravity, where there seems to be a symmetry in $\widetilde{\ft}$ under conjugation and reflection in internal degree $3$, and many other algebras display similar symmetries. 
This shift and symmetry is reflected in the fact that the pure spinor superfield has dimension $-3$.
It is fascinating that this type of behaviour persists beyond the BKM setting.
This would suggest the existence of a tensor hierarchy-like extension of the $L_\infty$ algebra to negative levels which is empty at degree $-1$, contains the physical $4$-form at degree $-2$, etc. Filtered deformations of such an algebra  along the lines of refs. \cite{Figueroa-OFarrill:2015rfh,Figueroa-OFarrill:2015tan} may provide a way towards pure spinor superfield formulation of supergravity in non-flat backgrounds.

There is a striking similarity of the algebraic structures underlying supersymmetry, studied in the present paper, and those appearing in extended geometry \cite{Cederwall:2017fjm,Cederwall:2018aab,Cederwall:2019bai,Cederwall:2021xqi,Cederwall:2023xbj}, tensor hierarchy algebras
\cite{Palmkvist:2013vya,Carbone:2018xqq,Cederwall:2019qnw,Cederwall:2021ymp,Cederwall:2022oyb,Cederwall:2023}.
Already in their original applications to gauged supergravity \cite{Greitz:2013pua,Howe:2015hpa},
the tensor hierarchy algebras seem to know about supersymmetry in the sense that the representation constraint on the embedding tensor predicted by the algebra
agrees with the one coming from supersymmetry \cite{deWit:2008ta}. The algebraic constraint is in turn a consequence of the Serre relation 
complementary to a minimal orbit in the related BKM superalgebra, and the symmetric section constraint in the associated extended geometry transform in the same representation as this Serre relation.
Is there a deeper connection?

\vskip18pt
\noindent
\underline{\it Acknowledgments:} Part of this work was done during the workshop on Higher Structures, Gravity and Fields at the Mainz Institute for Theoretical Physics
of the DFG Cluster of Excellence PRISMA${}^+$ (Project ID 39083149).
MC, JP and IAS would like to thank the institute for its hospitality. 
We are grateful to C. S\"amann and R. Eager for valuable input. IAS also owes special thanks to K.~Costello, C.~Elliott, F.~Hahner, J.~Huerta, S.~Noja, S.~Raghavendran, J.~Walcher, and B.~Williams for conversations and collaborations on related issues, and especially to F.H. and B.W. for their comments on the draft. SJ and MC would also like to thank Charles Young for valuable discussions. SJ is supported by the Leverhulme Trust, Research Project Grant number RPG-2021-092. 
The work of IAS was funded by the Deutsche Forschungsgemeinschaft (DFG, German Research Foundation) --- Projektnummer 517493862 (Homologische Algebra der Supersymmetrie: Lokalit\"at, Unitarit\"at, Dualit\"at), and by the Free State of Bavaria.


\bibliographystyle{utphysmod2}


\providecommand{\href}[2]{#2}\begingroup\raggedright\endgroup

\end{document}